\documentclass[letterpaper,11pt]{article}
\newif\ifdraft
\drafttrue

\newif\iflong
\longtrue
\usepackage[utf8]{inputenc}
\usepackage{microtype}
\usepackage[letterpaper,margin=1in]{geometry}
\usepackage[numbers,sort&compress]{natbib}
\usepackage{xcolor,xspace}
\usepackage{amsmath}
\usepackage{amssymb}

\usepackage{amsthm}
\usepackage{paralist}
\usepackage{enumitem}
\setitemize{noitemsep}

\usepackage{textgreek}
\usepackage{graphicx}
\usepackage{framed}
\usepackage{wrapfig}

\usepackage[small]{caption}
\usepackage{booktabs}
\usepackage[unicode]{hyperref}
\usepackage[capitalize, nameinlink]{cleveref}
\usepackage{doi}
\usepackage{thmtools,thm-restate}
\usepackage{xspace}
\usepackage{nicefrac}
\usepackage[noend]{algorithmic}
\usepackage{algorithm}

\usepackage{datetime2} 

\widowpenalty10000
\clubpenalty10000

\theoremstyle{plain}
\newtheorem{theorem}{Theorem}[section]
\newtheorem{lemma}[theorem]{Lemma}

\newtheorem{corollary}[theorem]{Corollary}
\newtheorem{claim}[theorem]{Claim}

\newtheorem*{conjecture*}{Conjecture}
\newtheorem{observation}[theorem]{Observation}

\theoremstyle{definition}
\newtheorem{definition}[theorem]{Definition}

\theoremstyle{remark}
\newtheorem{remark}[theorem]{Remark}
\newtheorem*{remark*}{Remark}

\usepackage{todonotes}

\newcommand{\trycolor}{\textsc{TryColor}}
\newcommand{\slackgeneration}{\textsc{SlackGeneration}}

\newcommand{\CONGEST}{\ensuremath{\mathsf{CONGEST}}\xspace}
\newcommand{\LOCAL}{\ensuremath{\mathsf{LOCAL}}\xspace}

\newcommand{\eps}{\varepsilon}
\newcommand{\poly}{\operatorname{\text{{\rm poly}}}}

\newcommand{\logstar}[1]{\log^{*} #1}
\DeclareMathOperator{\E}{\mathbb{E}}

\newcommand{\myparagraph}[1]{\vspace{-0.5cm}\paragraph{#1}}

\newcommand{\lovasz}{Lov\'{a}sz\xspace}

\newcommand{\Exp}{\mathbb{E}}

\renewcommand{\phi}{\varphi}

\newcommand{\cC}{\mathcal{C}}

\newcommand{\cE}{\mathcal{E}}

\newcommand{\cG}{\mathcal{G}}

\newcommand{\cN}{\mathcal{N}}

\newcommand{\cO}{\mathcal{O}}
\newcommand{\cP}{\mathcal{P}}
\newcommand{\cR}{\mathcal{R}}

\newcommand{\cV}{\mathcal{V}}


\newenvironment{mycover}
{\list{}{\listparindent 0pt
        \itemindent    \listparindent
        \leftmargin    1cm
        \rightmargin   1cm
        \parsep        0pt}%
    \raggedright
    \item\relax}
{\endlist}

\newcommand{\myaff}[1]{\,$\cdot$\, {\small #1}\par\smallskip}

\hypersetup{
    colorlinks=true,
    linkcolor=black,
    citecolor=black,
    filecolor=black,
    urlcolor=[HTML]{0088CC},
    pdftitle={Distributed Graph Coloring Made Easy},
    pdfauthor={Yannic Maus}
}

\begin{document}

\title{Fast Distributed Brooks' Theorem}

\begin{mycover}
    {\huge\bfseries Fast  Distributed Brooks' Theorem\par}
    \bigskip
    \bigskip
  
    \textbf{Manuela Fischer}
	\myaff{Reykjavik University, Iceland}
	
	\textbf{Magn\'{u}s M. Halld\'{o}rsson}
	\myaff{Reykjavik University, Iceland}
	
	\textbf{Yannic Maus}
	\myaff{TU Graz, Austria}
	
\end{mycover}
\medskip

\thispagestyle{empty}
\begin{abstract}
We give a randomized $\Delta$-coloring algorithm in the \LOCAL model that runs in $\poly \log \log n$ rounds, where $n$ is the number of nodes of the input graph and $\Delta$ is its maximum degree. 
This means that randomized $\Delta$-coloring is a rare distributed coloring problem with an upper and lower bound in the same ballpark, $\poly\log\log n$, given the known $\Omega(\log_\Delta\log n)$ lower bound [Brandt et al., STOC '16].

Our main technical contribution is a constant time reduction to a constant number of $(deg+1)$-list coloring instances, for $\Delta = \omega(\log^4 n)$, resulting in a $\poly \log\log n$-round \CONGEST algorithm for such graphs.
This reduction is of independent interest for other settings, including providing a new proof of Brooks' theorem for high degree graphs, and leading to a constant-round Congested Clique algorithm in such graphs.

When $\Delta=\omega(\log^{21} n)$, our algorithm even runs in $O(\log^* n)$ rounds, showing that the base in the $\Omega(\log_\Delta\log n)$ lower bound is unavoidable. 

Previously, the best \LOCAL algorithm for all considered settings used a logarithmic number of rounds. Our result is the first \CONGEST algorithm for $\Delta$-coloring non-constant degree graphs. 

\clearpage

\tableofcontents
\thispagestyle{empty}
\end{abstract}


\clearpage

\title{Fast and Efficient Distributed Brooks' Theorem}

\section{Introduction}

\setcounter{page}{1}

In the \emph{graph coloring problem}, we are to color the nodes of a graph so that adjacent nodes receive different colors.
In the distributed setting, the graph to be colored is the same as the communication network.
The objective is both to minimize the number of colors and the time complexity of the algorithm.
Graph coloring is fundamental to distributed computing, as a way of \emph{breaking symmetry}.
In fact, it was the topic of the paper introducing the \LOCAL model \cite{linial92}.

In local models of distributed computing, a communication network is abstracted as a graph $G=(V,E)$ with $n$ nodes and maximum degree $\Delta$. Communication happens in  synchronous rounds. In each round, each node can send a message through each of its incident edges to its neighbors. At the end of the computation, each node should output its own part of the solution, e.g., its own color.  
In the \CONGEST model each message is limited to $O(\log n)$-bits, while they can be of unbounded size in the \LOCAL model.

Distributed algorithmics is typically concerned with problems that are easy to solve greedily in a centralized setting. For coloring, this corresponds to using $\Delta+1$ colors, which has the key property that any partial coloring can be extended to a valid full solution.
$O(\log n)$-round randomized algorithms were known for $\Delta+1$-coloring early on \cite{luby86,alon86,linial92,johansson99},
while recent progress has led to $\poly\log\log n$-round algorithms in the \LOCAL model \cite{BEPSv3,HSS18,CLP20,RG20} and the \CONGEST model \cite{HKMT21}. 
The $\Delta$-coloring problem is of a 
different nature. Despite only a single color of difference, it is non-trivial to solve centrally and even the existence of $\Delta$-colorings is a classic result in graph theory:
\begin{quote} 
	\textbf{Brooks' theorem} \cite{brooks_1941}: Any (connected) graph that is neither an odd cycle nor a clique on $\Delta+1$ nodes can be $\Delta$-colored.
\end{quote}
Understanding $\Delta$-coloring and the related but simpler problem of \emph{sinkless orientation} has received enormous attention since the publication of  an $\Omega(\log_{\Delta}\log n)$-round ($\Omega(\log_{\Delta}n)$-round) randomized (deterministic) lower bound  for both problems \cite{brandt2016LLL,chang2016exponential}. In fact, approaches in understanding the $\Delta$-coloring problem have ushered in the technique of \emph{round elimination} \cite{Brandt19,brandt2016LLL} that has resulted in numerous new lower bounds for various related problems \cite{Balliu2019,BrandtO20,BBORules,BBKOmis,BBKO22}. 
However, the $\Delta$-coloring problem has so far withstood all attempts in improving either the lower bound or the algorithms. This even caused authors of a recent paper to name it  \emph{'Distributed $\Delta$-coloring plays hide-and-seek'} \cite{BBKO22}. 

Finding or ruling out a lower bound for the $\Delta+1$-coloring problem is considered one of the most central open questions in distributed algorithms \cite{linial92,barenboimelkin_book,disc16_coloring}. Understanding the complexity of the $\Delta$-coloring problem may help shed a light on this \cite{BBKO22}. Let us detail on how the state-of-the-art complexities for both problems differ. 
On constant-degree graphs, $\Delta+1$-coloring has a $O(\log^* n)$-round algorithm \cite{linial92}, even a deterministic one, while randomized algorithms for $\Delta$-coloring provably require $\Omega(\log_{\Delta}\log n)$ rounds. On general graphs, the gaps between known upper and lower bounds are much greater, but again the state-of-the-art round complexities (in \LOCAL) are $O(\log^3 \log n)$ for $\Delta+1$-coloring \cite{CLP20} and only $O(\log\Delta)+\poly\log\log n=O(\log n)$ for $\Delta$-coloring \cite{GHKM21}.
One might be tempted to conjecture that $\Delta$-coloring is much harder and requires entirely different algorithmic techniques.

\subsection{Results}
In this paper, we investigate this question. As our main result we provide an algorithm for $\Delta$-coloring graphs whose maximum degree is at least polylogarithmic in the number of nodes. We obtain the following theorem.
\begin{restatable}{theorem}{tmMainLogstar}
	\label{thm:mainLogstar}
	There is a randomized $\poly\log\log n$-round \CONGEST algorithm, that w.h.p. $\Delta$-colors any graph with $\Delta= \omega(\log^3 n)$ that does not contain a $\Delta+1$-clique.  If $\Delta=\omega(\log^{21}n)$, it runs in $O(\logstar n)$ rounds. 
\end{restatable}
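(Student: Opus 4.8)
The plan is to reduce $\Delta$-coloring to a constant number of $(\deg+1)$-list coloring instances, each of which can be solved fast, and then stitch the solutions together. The overall strategy has three phases. First, I would handle the degenerate/obstruction structure: since the graph is guaranteed to contain no $(\Delta+1)$-clique, Brooks' theorem tells us a $\Delta$-coloring exists, but to build one algorithmically we need to manufacture \emph{slack}. The key idea is that if every node can be given a list of $\Delta$ available colors together with at least one ``unit of slack'' --- e.g.\ two neighbors that are guaranteed to end up with the same color, or a neighbor already permanently colored with a color outside the node's list, or simply a node of degree $< \Delta$ --- then the remaining instance becomes a $(\deg+1)$-list coloring instance (after deleting the precolored/identified vertices, each uncolored node has a list strictly larger than its residual degree). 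So the heart of the argument is: (i) find, in $O(1)$ rounds, a sparse set of ``seed'' structures that inject slack, (ii) argue that the slack propagates so that \emph{every} node, or all but a negligible fraction, ends up with positive slack, and (iii) invoke the known fast $(\deg+1)$-list coloring algorithm.

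For the slack-generation step I would use randomness: have each node independently try a random color from its palette (the \trycolor/\slackgeneration primitives hinted at by the macros), keeping it only if no neighbor picked the same. With $\Delta = \omega(\log^{3} n)$ (resp.\ $\omega(\log^{21} n)$), concentration bounds (Chernoff plus a union bound over the $n$ nodes, or a local-sparsity/Lov\'asz-Local-Lemma argument via the \llle\ machinery) guarantee that w.h.p.\ every node acquires $\Omega(\Delta)$ --- or at least one unit of --- permanent slack from \emph{colored neighbors whose colors collide} or from the few vertices where something goes wrong being handled separately. The high degree is exactly what makes this a one-shot, $O(\log^* n)$-style computation rather than an $O(\log \Delta)$ iterative process: the deviation probabilities are polynomially small in $n$ as soon as $\Delta$ is a large enough power of $\log n$, so a single round of sampling suffices, after which the only coordination left is the $O(\log^* n)$ overhead inherent in list coloring on high-degree graphs (e.g.\ applying Linial-type color reduction / the \CONGEST\ $(\deg+1)$-coloring routine). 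When $\Delta$ is only $\omega(\log^{3} n)$ the sampling has to be repeated a bounded ($\poly\log\log n$) number of times, or one falls back on a $\poly\log\log n$-round list-coloring subroutine, yielding the weaker $\poly\log\log n$ bound; when $\Delta = \omega(\log^{21} n)$ the slack is so abundant that everything collapses to $O(\log^* n)$.

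The main obstacle --- and the place where the ``neither an odd cycle nor a clique'' hypothesis of Brooks' theorem really bites --- is dealing with nodes that are \emph{locally clique-like}: a node all of whose neighbors form (almost) a clique gets essentially no slack from random coloring, because with high probability its neighbors all receive distinct colors. One must show that such bad configurations either cannot occur globally (a maximal clique-like region, being not the whole graph, has a boundary vertex that can be recolored to break the clique and free a color --- this is the classical Brooks augmenting argument, which here must be localized and run in $O(1)$ rounds) or are rare and small enough to be peeled off and solved directly. Handling these ``$\BadCover$''/non-convivial components in constant rounds, and arguing their interaction with the slack-generation step does not cascade, is the delicate combinatorial core; the rest is a careful but routine application of concentration inequalities and the cited $(\deg+1)$-list coloring algorithm, tracking that every message fits in $O(\log n)$ bits so the whole thing runs in \CONGEST.
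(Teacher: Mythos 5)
Your high-level framing — manufacture slack, reduce to $(\deg+1)$-list coloring, invoke the fast list-coloring algorithm — matches the paper's skeleton, and you correctly flag that clique-like regions are the core difficulty because random color trials give them essentially no slack. But the proposal leaves precisely that core unfilled, and the fallback you reach for would not work. You suggest handling dense regions by "localizing the classical Brooks augmenting argument" in $O(1)$ rounds; that argument (augment a partial coloring along a boundary Kempe-like structure) is inherently sequential and global, and indeed the previous distributed $\Delta$-coloring papers that go this route (degree-choosable subgraphs, Gallai trees) pay $O(\log n)$ or more in the \LOCAL model and are not \CONGEST-implementable at all. The paper's contribution is to \emph{avoid} this entirely: after computing an almost-clique decomposition, it classifies almost-cliques into ordinary/nice/guarded/runaway types, and for the dense ones it manufactures a \emph{toehold} by same-coloring a carefully chosen non-adjacent \emph{pair} of vertices (a non-edge inside the AC for nice ACs, or an outside "protector" paired with an inside non-neighbor for guarded ACs, with shared "escape" nodes stalled to the very end for runaway ACs). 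This pair-coloring is itself formulated as a $(\deg+1)$-list-coloring instance on a virtual graph of pairs, and showing it is solvable — with lists of size $\Omega(\phi)=\Omega(\Delta^{2/3})$ — is where the parameters $\psi=\Delta^{1/3}$, $\phi=\Delta^{2/3}/2$, and the protector/escape dichotomy come from. None of this appears in your sketch, and without it you have no way to give a unit of slack to a $\Delta$-clique with only a few external edges.

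Two further inaccuracies: the paper runs \slackgeneration\ exactly once, not "repeated a $\poly\log\log n$ number of times" in the $\Delta=\omega(\log^3 n)$ regime — the reduction to list coloring is $O(1)$ rounds in both regimes, and the runtime gap comes solely from the list sizes ($\Omega(\psi)=\Omega(\Delta^{1/3})$ for ordinary ACs is $\omega(\log^7 n)$ only when $\Delta=\omega(\log^{21}n)$, which is what unlocks the $O(\log^* n)$ list-coloring bound of \cite{HKNT22,HNT22}). And the concentration for ordinary almost-cliques is \emph{not} a Chernoff/union-bound or LLL argument over nodes: it requires a Talagrand-inequality argument over a matching between the AC and its outside neighbors, splitting the target count into two Lipschitz certifiable pieces, because the events are heavily dependent and not locally decomposable. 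So the proposal identifies where the difficulty lies but does not supply a working mechanism for it, and the mechanism it gestures at would not yield an $O(1)$-round \CONGEST reduction.
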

As the aforementioned \LOCAL algorithm from \cite{GHKM21} with runtime $O(\log \Delta)+\poly\log\log n$ yields an algorithm with runtime $\poly\log\log n$ whenever the maximum degree is at most polylogarithmic, we obtain the following corollary. 
\begin{restatable}{corollary}{thmPolylog}
	\label{corr:mainPolylog}
	There is a randomized $\poly\log\log n$-round \LOCAL algorithm that w.h.p. $\Delta$-colors any graph that does not contain a $\Delta+1$-clique, for any $\Delta\geq 3$.
\end{restatable}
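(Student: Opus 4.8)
The plan is a case analysis on the magnitude of $\Delta$ that stitches together \Cref{thm:mainLogstar} with the existing \LOCAL $\Delta$-coloring algorithm of \cite{GHKM21}, whose round complexity is $O(\log\Delta)+\poly\log\log n$. Fix a threshold $\tau=\tau(n)$ that is simultaneously $\omega(\log^3 n)$ and $\poly\log n$; concretely $\tau=\log^4 n$ works. Every node knows $n$ and $\Delta$ (a standard assumption, and one that both subroutines need anyway), so all nodes can agree on which of the two regimes applies and run the corresponding algorithm.

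\emph{Case $\Delta\ge\tau$.} Then $\Delta\ge\log^4 n=\omega(\log^3 n)$, so \Cref{thm:mainLogstar} applies verbatim: it $\Delta$-colors $G$ w.h.p.\ in $\poly\log\log n$ \CONGEST rounds, hence in $\poly\log\log n$ \LOCAL rounds. Its only hypothesis --- that $G$ contains no $\Delta+1$-clique --- is assumed in the corollary, and since $\Delta\ge 3$ no connected component of $G$ is an odd cycle, so the second Brooks exception is vacuous.

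\emph{Case $\Delta<\tau=\log^4 n$.} Then $\log\Delta<4\log\log n$, so the \cite{GHKM21} algorithm runs in $O(\log\Delta)+\poly\log\log n=O(\log\log n)+\poly\log\log n=\poly\log\log n$ rounds and $\Delta$-colors $G$ w.h.p., again invoking only clique-freeness and $\Delta\ge 3$ (components of maximum degree below $\Delta$, and small exceptional components such as smaller cliques or short cycles, are $\Delta$-colorable for free and are handled inside that routine). Since exactly one subroutine is executed, the overall success probability is that of a single w.h.p.\ algorithm.

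I do not expect a genuine obstacle here: all the substance sits in \Cref{thm:mainLogstar}, and the corollary is simply the observation that the complementary, polylogarithmic-degree regime was already solved in $\poly\log\log n$ rounds by \cite{GHKM21}. The only points to verify are that the two ranges of $\Delta$ cover every possibility --- they do, with slack, because $\log^4 n=\omega(\log^3 n)$ --- and that $O(\log\Delta)=\poly\log\log n$ throughout the low-degree case, which is immediate.
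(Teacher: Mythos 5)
Your proof is correct and follows the same two-regime split the paper uses: invoke \Cref{thm:mainLogstar} for $\Delta$ above a polylogarithmic threshold and the $O(\log\Delta)+\poly\log\log n$ algorithm of \cite{GHKM21} below it. One small detail the paper is careful about and you gloss over: the main algorithm of \cite{GHKM21} is stated for $\Delta\geq 4$, so the $\Delta=3$ case must be dispatched by a separate $O(\log^2\log n)$-round routine also given in that work, rather than being absorbed into the generic low-degree branch.
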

Hence, on general graphs, our algorithm for $\Delta$-coloring matches the best known runtime bound for $\Delta+1$-coloring. We believe that the most interesting aspect of \Cref{thm:mainLogstar} is its runtime on graphs with large maximum degree. Here, we improve the state-of-the-art from $O(\log n)$ rounds to $O(\logstar n)$ rounds, matching the fastest algorithm for $\Delta+1$-coloring such graphs \cite{HKNT22}. Thus, the complete known picture for the two problems is that their complexities are provably separated on small degree graphs, both problems can be solved in $\poly\log\log n$ for intermediate values of $\Delta$, and as soon as $\Delta$ is large enough both can be solved in $O(\logstar n)$ rounds. 

The latter shows that the $\Delta$ appearing in the base of the logarithm in the runtime lower bound of $\Omega(\log_\Delta \log n)$ rounds for randomized $\Delta$-coloring algorithms \cite{brandt2016LLL} is not an artifact of the technique used, as the problem can indeed be solved faster for larger degrees. Note that as soon as the maximum degree is at least polylogarithmic, the lower bound actually vanishes. Similarly, it is not known whether the celebrated $\Omega(\logstar n)$ lower bound by Linial \cite{linial92} (randomized by Naor \cite{naor95}) for $\Delta+1$-coloring path and ring graphs ($\Delta=2$) extends to larger degrees. So, while both $\Delta+1$- and $\Delta$-coloring admit now extremely fast algorithms for large degree graphs, it is not known whether one can improve on these results.

From a technical and conceptual point of view, the main strength of \Cref{thm:mainLogstar} lies in the algorithm itself. In fact, we show the following:

\begin{framed}
    \centering
   For graphs with a sufficiently large polylogarithmic maximum degree, there is a constant time reduction of the $\Delta$-coloring problem to a constant number of $(deg+1)$-list coloring instances.  
\end{framed}
The objective in the \emph{$(deg+1)$-list coloring problem} is to compute a valid coloring where each node outputs a color from its list of allowed colors; the list is part of the node's input and its size exceeds the node's degree. The reduction is helpful for several reasons: (1) Any $(deg+1)$-list coloring instance is solvable.  (2) Any partial solution can be extended to a full solution, and hence the problem is conceptually simpler than the $\Delta$-coloring problem.  (3) The problem has been extensively studied in the distributed setting, e.g., \cite{FHK,MT20,HKMT21}, culminating in recent \LOCAL \cite{HKNT22} and \CONGEST \cite{HNT22} algorithms running in $\poly\log\log n$ rounds, that even run in $O(\logstar n)$ rounds if all lists are sufficiently large. Hence, the first part of \Cref{thm:mainLogstar} follows immediately from the reduction and the algorithm from \cite{HNT22}. (4) The reduction provides a new self-contained distributed proof for Brooks' theorem, at least for graphs with sufficiently large maximum degree. (5) The constant time reduction is deterministic, except for a single round in which some carefully selected nodes pick a random color and keep it if no neighbor wants to get colored with the same color. 
Hence, the reduction is robust enough that we expect it could help also in other models of computing.

We exemplify the latter by providing an algorithm in the Congested Clique. 
Similar to the \CONGEST model, in the Congested Clique model,  the nodes of the input graph can communicate with each other in synchronous rounds by sending $O(\log n)$-bit messages. However, instead of only communicating through the edges of the input graph they can communicate in all-to-all fashion, that is, each node can send one message containing $O(\log n)$ bits to each other node in the network per round.

\begin{restatable}{theorem}{thmMaincongestedClique}
	\label{cor:CCmain}
	There is a randomized $O(1)$-round Congested Clique algorithm that $\Delta$-colors w.h.p.\ any graph without a $\Delta+1$-clique, when $\Delta=\omega(\log^{4+\epsilon} n)$.
\end{restatable}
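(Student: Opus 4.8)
The plan is to obtain \Cref{cor:CCmain} as an almost immediate consequence of the constant-round reduction underlying \Cref{thm:mainLogstar}, replacing only the list-coloring post-processing. First I would observe that the reduction is a bounded-round procedure in which every node sends $O(\log n)$-bit messages along edges of $G$ — the lone use of randomness being one round in which selected nodes try a random color and keep it if no neighbor wants the same — so it can be executed \emph{verbatim} in the Congested Clique in $O(1)$ rounds. Since the hypothesis $\Delta=\omega(\log^{4+\eps} n)$ comfortably exceeds the degree requirement of the reduction, after $O(1)$ rounds the task reduces to solving a constant number of $(deg+1)$-list coloring instances on subgraphs of $G$, and it suffices to solve each of them in $O(1)$ rounds of the Congested Clique.

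For this second step I would solve each list-coloring instance either by invoking an off-the-shelf $O(1)$-round Congested Clique (equivalently, near-linear-memory MPC) algorithm for $(deg+1)$-list coloring, or, self-containedly, via palette sparsification in its $(deg+1)$-list variant: every node samples $\Theta(\log n)$ colors from its list; with high probability the sampled sub-lists still admit a proper coloring, and the sparsified instance has only $\tilde{O}(n)$ (node, sampled-color) pairs and bounded load, so it can be aggregated onto few machines by Lenzen's routing and colored there in $O(1)$ rounds. Here the stronger hypothesis $\Delta=\omega(\log^{4+\eps} n)$ — rather than the $\omega(\log^{3} n)$ needed for the reduction alone — enters: the reduction may shrink the minimum list size to a polynomial of $\Delta$, while the sparsification/list-coloring subroutine requires every list to exceed a sufficiently large polylogarithmic threshold in $n$ (so that the sampled sub-lists concentrate), so one must take $\Delta$ polynomially above that threshold, with the $\eps$ in the exponent absorbing lower-order slack.

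I expect the main obstacle to be the congestion analysis of this second step: unlike in \LOCAL or \CONGEST on $G$, one must argue that the sampled colors together with the edges along which they could conflict can be gathered and resolved globally within a constant number of all-to-all rounds, which needs a careful accounting that every node and every color participates in only $\tilde{O}(n)$ relevant pairs so that Lenzen-style routing applies — and it is exactly this accounting that pins down the exponent $4+\eps$. The remaining pieces are quoted as black boxes (correctness of the reduction, and of $(deg+1)$-list coloring / palette sparsification), and since the constantly many randomized phases each fail with probability $n^{-\Omega(1)}$, a union bound yields the claimed with-high-probability guarantee.
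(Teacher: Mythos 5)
Your high-level plan --- re-run the constant-round reduction in the Congested Clique and then solve each of the $O(1)$ resulting $(deg+1)$-list coloring instances there --- matches the paper's, but the substance of the second step does not, and the gaps are concrete. There is no ``off-the-shelf'' $O(1)$-round Congested Clique algorithm for $(deg+1)$-list coloring; the paper explicitly notes that the relevant prior work \cite{CFGUZ18} ``is not designed for the list coloring problem.'' What the paper actually does is prove (\cref{lem:balanced}) that all but one of the produced instances is \emph{balanced}, meaning every palette has size $\Omega(\Delta_0)$ where $\Delta_0$ is the maximum degree of \emph{that} instance, and then extends the vertex/color partitioning scheme of \cite[Theorem~3.2]{CFGUZ18} to this balanced-list setting (\cref{lemma:d1LC}). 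The threshold $\log^{4+\eps} n$ in the hypothesis is inherited from that extension of \cite{CFGUZ18}, not from a Lenzen-routing load analysis as you suggest.

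The piece your argument misses entirely is the one instance that is neither balanced nor a subgraph of $G$: the virtual graph $H$ on the pairs $(u_C,p_C)$ for guarded almost-cliques. Its vertices are node pairs, its palettes are only $\phi/2=\Theta(\Delta^{2/3})$ while its degrees can be $\Theta(\Delta)$, so neither the \cite{CFGUZ18} adaptation nor a naive palette-sparsification of $H$ applies. The paper handles $H$ by observing it has only $O(n/\Delta)$ vertices and degree $O(\Delta)$, hence total size $O(n)$, so it can be gathered at a single node via Lenzen's routing and colored there centrally. Without a separate treatment of $H$ your proof does not close. As a secondary point, even for the balanced instances your palette-sparsification route is underjustified: the smallest lists after the reduction are $\Omega(\psi)=\Omega(\Delta^{1/3})$, which at $\Delta=\omega(\log^{4+\eps}n)$ is only $\omega(\log^{(4+\eps)/3}n)$, below the polylogarithmic thresholds typically required for $(deg+1)$-list palette sparsification, whereas the paper's balancedness property (lists comparable to the instance's own maximum degree) is exactly what makes its partitioning argument go through.
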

\Cref{cor:CCmain} is obtained by combining our reduction with a previous $\Delta+1$-coloring algorithm \cite{CFGUZ18}. Even though that algorithm is not designed for the list coloring problem, our reduction yields list coloring instances that are "simple" enough to be dealt with within their framework. 

Despite the attention that the $\Delta$-coloring problem has gotten, to the best of our knowledge, \Cref{cor:CCmain} yields the first $\Delta$-coloring algorithms in these models and \Cref{thm:mainLogstar} is also the  first \CONGEST algorithm for $\Delta$-coloring non-constant degree graphs.

Open Problem 8 in \cite{BBKO22} asks for a $\Delta$-coloring algorithm in the \LOCAL model that is genuinely different from the algorithms in \cite{PS95,GHKM21}. We solve this partially, in the context of non-constant degree graphs. Unlike  previous distributed $\Delta$-coloring algorithms \cite{panconesi95,GHKM21}, our results do not rely on any deep graph-theoretic insights such as \emph{degree choosability} and Gallai trees (see \cite{GHKM21} for details).

\subsection{\texorpdfstring{The Challenges of $\Delta$-coloring}{The Challenges of Delta-coloring}}
\label{ssec:challenges}
The $\Delta$-coloring problem clearly differs from $\Delta+1$-coloring only in that each node has one fewer color to use. This unit deficit makes all the difference though, since if we color the node's neighbors carelessly, we may no longer be able to color the node.  
To deal with this we need to ensure that each node has a unit of \emph{slack}, meaning excess of colors over the competition for them. Also note that in a solution to the problem every node with degree $\Delta$ has to have two neighbors with the same color. 
\iftrue
\begin{wrapfigure}{L}{0.65\textwidth}
	\includegraphics[width=0.65\textwidth, page=1]{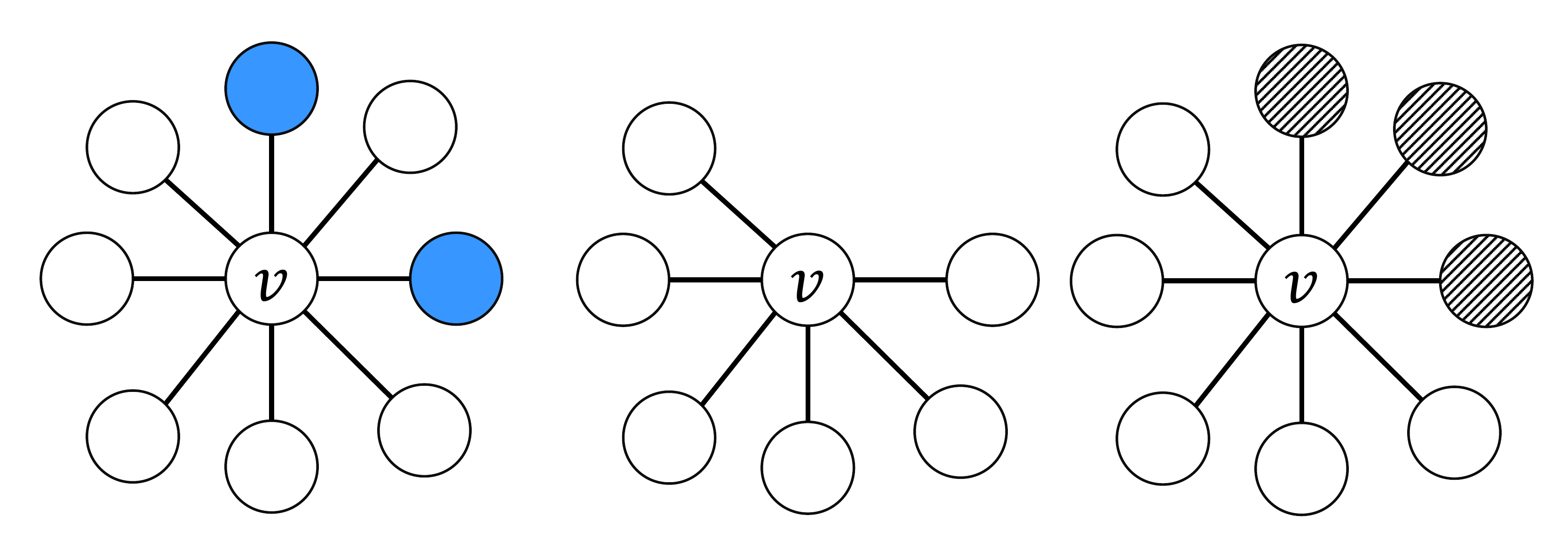}
	\caption{The three types of slack. An uncolored node can obtain permanent slack by two neighbors receiving the same color (left); a priori slack by having a degree $<\Delta$ (middle); or temporary slack by stalling some of its neighbors, to be colored later (right).}
	\label[figure]{fig:slack-types}
\end{wrapfigure}
\fi

When designing a coloring algorithm in which not all nodes are colored at the same time, there are different types of slack. 
A node gets slack if two of its neighbors are assigned the same color. It also gets a temporary reprieve if a neighbor is stalled to be colored later. 
We call the former \emph{permanent slack} and the latter \emph{temporary slack}. The easiest form is the \emph{a priori} slack that a node of low-degree has. These three types of slack are illustrated in \cref{fig:slack-types}.
Classic proofs of Brooks' theorem revolve around finding just one  non-adjacent pair of nodes with a common neighbor whose removal does not disconnect the graph.  Such a pair must exist in a $\Delta+1$-clique free graph.
The pair is then colored with the same color, giving its common neighbor permanent slack, and as long as we color it last, its neighbors have temporary slack, and if we stall their coloring, their neighbors will have temporary slack, etc., that is one can $\Delta$-color the whole graph with a layered approach.

The challenges towards a $O(\logstar n)$-round algorithm for $\Delta$-coloring become more evident when looking at the tools to obtain such runtimes for $(\Delta+1)$-coloring. 
The standard initial step is to partition the nodes into the sparse nodes and clusters of dense nodes known as \emph{almost-cliques (AC)}  \cite{HSS18,ACK19}, see \Cref{lem:acd} for the details.
Then, a standard step  in randomized coloring algorithms is to give slack to \emph{sparse} nodes -- those with many pairs of non-adjacent neighbors -- via the following one-round random step, which we call \emph{SlackGeneration}: 
Each node of a random subset chooses a color at random and keeps it if no neighbor also picked that color.
Thus, after a single round of color trials, all uncolored sparse nodes have permanent slack and induce a $(deg+1)$-list coloring instance.  Therefore,  \emph{dense} nodes are  the main challenge for $\Delta$-coloring. As we detail soon, dense parts of the graph impose strong requirements on which nodes need to be colored consistently. Thus, in addition to this \emph{probabilistically generated slack} for sparse nodes, we also require to handcraft slack deterministically. 

We illustrate some typical difficult cases, which also motivate our design choices. 
For the sake of this exposition,  it's best to think of an AC as a clique on approximately $\Delta$ nodes, possibly with a single edge missing, also see \Cref{fig:ac-types} on page \pageref{fig:ac-types}.
Then, in order to reduce to the $(deg+1)$-list coloring problem it suffices to give just one node of an AC slack, i.e., a \emph{toehold}. 
Namely, by deferring it, its neighbors get temporary slack, which can be recursively provided to the rest of the AC. Still, as we shall see, no single method suffices to deal with all ACs. 
Some ACs can obtain such a toehold probabilistically;
others can have it handcrafted; while yet others are best handled with temporary slack.
In the example of \Cref{fig:ac-types} (left) on page \pageref{fig:ac-types}, there is exactly one pair of non-adjacent nodes, and they \emph{must} be colored the same for the instance to be $\Delta$-colored. In the example on the right, we need to same-color an outside neighbor and an internal node, so that their common neighbor gets permanent slack. In this case, it is hard to do so in a deterministic distributed manner due to dependencies,
as the outside neighbors may themselves be in ACs of the same type.

However, it is not always possible to find a non-edge within the AC or use a probabilistic argument on the outside neighbors. For example, consider the ACs  in  \Cref{fig:guardedRunaway} on page \pageref{fig:guardedRunaway}. Here, all ACs have too few distinct outside neighbors to get probabilistic slack with high probability guarantee when coloring nodes randomly. 
For those, we either need to handcraft slack by same-coloring an inside-outside pair (\Cref{fig:guardedRunaway}, left, on page \pageref{fig:guardedRunaway}), or find an outside neighbor to be colored later (\Cref{fig:guardedRunaway}, right, on page \pageref{fig:guardedRunaway}). In the latter case, we need the later-colored neighbor to have permanent slack.

Even with a method to handle each type, the problem is fragile, as we must ensure that the different solutions are compatible. Any probabilistic slack generation needs to leave essential features sufficiently intact, and a solution for one type of AC cannot destroy the solution for another type, or one requires a completely new approach.

In this section we have only discussed the challenges for an efficient reduction to the $(deg+1)$-list coloring problem. In order to use such a reduction for an $O(\logstar n)$-round algorithm, these list coloring instances require additional properties, namely, the lists need to be large enough to leverage the list coloring result of \cite{HKNT22,HNT22}.

\subsection{Related Work}
\label{ssec:related}
The only works on $\Delta$-coloring general graphs in the \LOCAL model are a seminal paper from 1995 \cite{PS95} and a paper first published in 2018 \cite{GHKM21}. Both rely on so-called \emph{degree choosable subgraphs}, i.e., subgraphs with a certain structure such that they can be $deg$-list colored. In a nutshell, the algorithms in \cite{GHKM21} first remove all degree subgraphs that have at most a certain fixed radius $r$. As a result, the remaining graph has lots of additional tree-like structure that can be exploited. Degree choosable subgraphs can then be colored \emph{brute-force} 
in the end, since, by definition, any partial coloring outside the subgraph can be extended to a full solution on the subgraph. The algorithm of \cite{PS95} also relies on coloring such large subgraphs brute-force.  Hence, these algorithms are crucially based on learning the full topology in (often even non-constant radius)  neighborhoods, which is infeasible for an efficient \CONGEST implementation. 

Additional papers refer to $\Delta$-coloring, e.g., \cite{KuhnRecursive20,MT20,RG20,GK21}.
These do not address the $\Delta$-coloring problem directly, but 
provide improved $(deg+1)$-list coloring algorithms, which is used as a subroutine in \cite{GHKM21}.

The randomized $\Omega(\log_{\Delta}\log n)$ and deterministic $\Omega(\log_{\Delta} n)$ lower bounds for $\Delta$-coloring \cite{brandt2016LLL} already hold on tree topologies and these are known to be tight \cite{chang2016exponential}. 

There is only one result on $\Delta$-coloring in the \CONGEST model \cite{MU21}, that is, a $\poly\log \log n$-round randomized algorithm,  if $\Delta$ is constant. The result follows from a general statement about so called locally checkable labeling (LCL) problems \cite{naor95}. It is known that any LCL problem (on constant-degree graphs) with a sublogarithmic randomized \LOCAL algorithm also has a $\poly\log\log n$-round \LOCAL algorithm \cite{ChangHierarchy17,RG20}. In \cite{MU21} the authors show that the result also holds in the \CONGEST model. Hence, their $\Delta$-coloring algorithm is non-explicit and it is unlikely that the same technique extend to large values of $\Delta$. 

These general results also show that the $\Delta$-coloring problem is closely linked to the distributed \lovasz\ local lemma problem (LLL). Any problem with a sublogarithmic algorithm in the \LOCAL model on constant degree graphs can be solved in the same time as the LLL problem \cite{ChangHierarchy17}. $\Delta$-coloring is such a problem \cite{GHKM21}, and in fact on trees both problems can be solved in $\Theta(\log_{\Delta}\log n)$ rounds \cite{chang2016exponential} (and $\Delta$ coloring is in fact solved via an LLL-like approach). It is conjectured that the same runtimes hold on general graphs \cite{ChangHierarchy17}. 

A single-pass streaming algorithm was recently given for $\Delta$-coloring that uses $\tilde{O}(n)$ memory \cite{AKM22}. 
Although the setting is quite different from ours, many of the issues they deal with and the structures they propose are very similar. They give a similar classification of almost-cliques, with different treatment of the three different settings described in the previous subsection: a) ACs containing a non-edge (our "nice" ACs; their "holey" and "unholey critical"), b) ACs that benefit from slack generation (our "ordinary"; their "lonely small"), and c) ACs that need outside help (our "difficult"; their "friendly small"). 
Let us examine these in some more details.

The distinction between b) and c) depends on if there exists a node outside the clique that has many neighbors in the clique. These are 'special nodes' as we call them, or 'friends' as in \cite{AKM22}, are crucial to the arguments, yet have not appeared before in the literature. ACs without special nodes are loosely enough connected to receive slack in the initial random color trial step (for case b)). Our arguments in this case involves the somewhat heavy application of Talagrand's inequality; they can process the cliques individually and compute a coloring for each via a random graph argument similar to \cite{ACK19}. 
Treating the first type, a), is relatively easy distributively (our 'nice' ACs). That is harder in streaming since full access to the adjacency lists is not available, but made possible with the sparse recovery technique. 
Treating the third type is the hardest for the distributed solution, because of the possible dependencies. We split these 'difficult' ACs into two types, 'runaway' and 'guarded', corresponding to the two cases shown in \Cref{fig:guardedRunaway}. This case is also the last and most involved one in \cite{AKM22} but the fact that the cliques can be resolved sequentially in the streaming setting simplifies the arguments. Just as we do for 'guarded' ACs, they pair an outside node with a non-neighbor in the AC and assign them the same color. 
We ensure that the outside node participates only in this one pair (which necessitates a separate treatment of the other type, 'runaway' ACs) to avoid dependencies, while they recolor the relevant outside node if it was already colored in previous phases. 

At the end of the day, the challenging issues are different in streaming and distributed settings and the precise definitions and the solutions are somewhat different. Yet, the similarity between the classifications of ACs and 'special' nodes is illuminating, given that these concepts have not appeared before.

There is a vast amount of randomized and deterministic distributed vertex coloring algorithms 
that color with more than $\Delta$ colors. 
The excellent monograph \cite{barenboimelkin_book} covers many of these results until 2013. We only cover a few selected newer results, and, e.g., completely spare the large body of work on edge colorings or colorings of special graph classes. 
An ultrafast $O(\logstar n)$-round algorithm was already published in 2010, though using $O(\Delta^{1+\Omega(1)})$ colors \cite{SW10}.
In 2016, the aforementioned influential concept of almost-clique decompositions was introduced to the area and yielded the first sublogarithmic $\Delta+1$-coloring algorithm \cite{HSS18}.  In 2018, the first  $\poly\log\log n$-round algorithm appeared \cite{CLP20,RG20}. 
All these results hold in the \LOCAL model. Only in 2021, the first $\poly\log\log n$ algorithm in the \CONGEST model appeared \cite{HKMT21}. 
Very recently, the aforementioned results have been extended to the $(deg+1)$-list coloring problem in both the \LOCAL \cite{HKNT22} and \CONGEST \cite{HNT22} models,  answering an open question in \cite{CLP20}.

A recent breakthrough provided the first deterministic $\poly\log n$-round algorithms for $\Delta+1$-coloring  in the \LOCAL model \cite{RG20}; the algorithm is also used as a subroutine in many of the aforementioned results. Shortly after, these result were extended to the \CONGEST model \cite{BKM20,GK21}.
An orthogonal research direction aims for fast deterministic \CONGEST and \LOCAL algorithms mainly as a function of the maximum degree $\Delta$, that is, the dependence on $n$ is limited to the unavoidable  $O(\logstar n)$ term \cite{linial92}. There have been several recent algorithms \cite{Barenboim16,FHK,BEG18,MT20,M21} with the state-of-the art being $O(\sqrt{\Delta\log\Delta}+\logstar n)$ rounds for $(\deg+1)$-list coloring in \LOCAL and $O(\Delta^{3/4}+\logstar n)$ rounds in \CONGEST \cite{Barenboim16}. 

To the best of our knowledge, there are no known $\Delta$-coloring algorithms for the Congested Clique.
$O(1)$-round algorithms are known for $\Delta+1$-coloring, both randomized \cite{CFGUZ18} and deterministic \cite{CDP21}.

\subsection{Outline}
In \Cref{sec:acd}, we present notation and the almost-clique decomposition (modified to our needs). 
In \Cref{sec:algHighDegree}, we present the steps of our reduction, including reasons for their correctness (in a nutshell). In \Cref{sec:detailedReduction}, we formally prove their correctness, except for the analysis of the randomized slack generation. To ease presentation, the latter appears in \Cref{sec:slackGeneration}. In \Cref{sec:CC}, we reason how to implement our algorithm in the Congested Clique.

\section{Almost Clique Decompositions \& Notation}
\label{sec:acd}
Denote $[k]=\{0,\ldots,k-1\}$. Consider a graph $G=(V,E)$. For a node $v\in V$ let $N(v)$ be the set of its neighbors and for a subset $S\subseteq V$ let $N(S)=\bigcup_{v\in S}N(v)$. $deg(v)$ denotes the degree of a node and for $S\subseteq V$ we denote $N_S(v)=S\cap N(v)$ and $deg_S(v)=|N_S(v)|$. We use $g(\Delta) = \omega(f(\Delta))$ in a statement if the statement holds for a (sufficiently large) constant $C > 0$ such that 
$g(\Delta) \ge C f(\Delta)$.
In the \emph{$\ell$-list coloring problem} each node is given a list of allowed colors of size at least $\ell$.  The objective is to compute a valid coloring in which each node outputs a color from its list.

\subsection{Almost Clique Decomposition}

The \emph{(local) sparsity} $\zeta_v$ of a vertex $v$ is defined as $\zeta_v=\frac{1}{\Delta}\left(\binom{\Delta}{2}-m(N(v))\right)$, where for a set $X$ of vertices, $m(X)$ denotes the number of edges in the subgraph $G[X]$. Roughly speaking, $\zeta_v$ is proportional to the number of missing edges in the graph induced by $v$'s neighborhood. However, note that a node with degree $\ll \Delta$ also has large sparsity. 

Almost-clique decompositions have become a useful tool for distributed graph coloring, e.g., \cite{HSS18,ACK19,HKMT21}. Here, we use a form of the decomposition with slightly stronger properties. 
It partitions $V$ into $V_{sparse}$, the \emph{sparse} nodes, and $\cC$, the \emph{dense} nodes.
It further partitions the dense nodes into \emph{almost cliques (AC)} $C_1, C_2, \ldots, C_t$.
Let $C(v)$ denote the almost-clique that a given dense node $v$ belongs to.
The \emph{outside degree} $e(v)$ of a dense node $v$ is the number of neighbors of $v$ outside $C(v)$. 
\footnote{We use this term to distinguish from the \emph{external degree} used in \cite{HSS18,HKMT21} and elsewhere, denoting the number of neighbors in \emph{other} almost cliques (not counting sparse neighbors).} The \emph{anti-degree} $a(v)$ of a dense node $v$  is the number of non-neighbors of $v$ in $C(v)$.

\begin{restatable}[ACD computation]{lemma}{lemACDcomputation}
	\label{lem:acd}
	For any graph $G=(V,E)$, there is a partition (\emph{almost-clique decomposition (ACD)} of $V$ into sets $V_{sparse}$ and $C_1, C_2, \ldots, C_t$ such that:
	\begin{compactenum}
		\item Each node in $V_{sparse}$ is $\Omega(\epsilon^2\Delta)$-sparse\ ,
		\item For every  $i\in [k]$, $(1-\eps)\Delta\le |C_i|\le (1+3\eps)\Delta$\ ,
		\item Each dense node $v$ has at least $(1-4\eps)\Delta$ neighbors in $C(v)$:   $|N(v)\cap C(v)|\ge (1-4\eps)\Delta$\ ,
		\item For each node $u \not\in C_i$, $|N(u)\cap C_i|\le (1-2\eps)\Delta$\ .
	\end{compactenum}
	Further, there is an $O(1)$-round \CONGEST algorithm that for a given graph with maximum degree $\Delta$ and a given constant $0<\eps<1/20$ computes a valid ACD, w.h.p., if $\Delta= \omega(\log^2 n)$.
\end{restatable}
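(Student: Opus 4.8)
# Proof Proposal for Lemma~\ref{lem:acd}

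The plan is to follow the now-standard template for constructing an almost-clique decomposition (originating with \cite{HSS18,ACK19} and refined in \cite{HKMT21}), and then to argue that the slightly stronger numerical guarantees in items (1)--(4) follow by choosing the defining threshold for "density" appropriately and doing a careful local clean-up pass. First I would fix the similarity notion: call two adjacent nodes $u,v$ \emph{friends} if $|N(u)\cap N(v)|\ge (1-\gamma)\Delta$ for a suitable small constant $\gamma$ (to be set as a function of $\eps$), and declare a node \emph{dense} if it has at least $(1-\gamma)\Delta$ friends, otherwise \emph{sparse}. The standard counting argument shows that a node with few friends (or a node that is a friend of few of its neighbors) has sparsity $\Omega(\gamma^2\Delta)$; unrolling this gives item~(1), that every node we declare sparse is $\Omega(\eps^2\Delta)$-sparse, once $\gamma=\Theta(\eps)$. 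Then I would let the almost-cliques be the connected components of the graph restricted to dense nodes and friend-edges. The triangle-inequality-style argument on the friendship relation (two nodes sharing a common friend have large common neighborhood, hence are themselves close) yields that each such component has weak diameter $2$ and size in the window stated in item~(2), and that each dense node has $\ge(1-4\eps)\Delta$ neighbors \emph{inside} its own component, which is item~(3). Item~(4) — no outside node sends more than $(1-2\eps)\Delta$ edges into a fixed $C_i$ — follows because such a node would itself satisfy the friendship/density condition with respect to many nodes of $C_i$ and would therefore have been placed inside $C_i$; choosing the density threshold and the clean-up constants with enough slack makes this quantitative bound go through.

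The clean-up pass is where the "slightly stronger properties" have to be earned. The raw components coming out of the friendship graph satisfy the bounds only up to additive lower-order terms, so I would add one round in which each tentative almost-clique shrinks: a node is ejected from $C_i$ into $V_{sparse}$ (or into a neighboring component, whichever the argument prefers) if its inside-degree, outside-degree, or anti-degree violate the target inequalities. Because the violating nodes are a small fraction of each component, ejecting them changes all degrees by only $o(\Delta)$, so a single pass suffices and the final sets satisfy (1)--(4) exactly as stated, after renaming constants. The key quantitative lemma here is the standard one: in a set $S$ with $|S|\approx\Delta$ where most pairs are adjacent, the number of nodes with anti-degree exceeding $\eps\Delta$ is at most $O(\zeta\cdot\Delta/\eps)$ summed appropriately, which is small when the set is nearly a clique.

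For the algorithmic claim, I would observe that every quantity used in the construction — the sparsity $\zeta_v$, the number of common neighbors of two adjacent nodes, the friend-count — is a $2$-hop quantity, and in $\CONGEST$ a node can learn the \emph{counts} it needs (degrees of neighbors, and, via $O(1)$ rounds of exchanging neighbor lists hashed or via the $\Theta(\log n)$-bit messages, good estimates of $|N(u)\cap N(v)|$ for each neighbor $u$) in $O(1)$ rounds. The only subtlety is that computing $|N(u)\cap N(v)|$ exactly in $\CONGEST$ is not obviously $O(1)$ rounds; the resolution, as in \cite{HKMT21}, is that each node may \emph{sample} $\Theta(\log n)$ of its neighbors, and a Chernoff bound shows the sampled intersection estimates every relevant quantity up to a $\pm\gamma\Delta/10$ additive error w.h.p., which is absorbable into the constants precisely because we assumed $\Delta=\omega(\log^2 n)$ — this is the one place that hypothesis is used. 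Given these estimates, each node can locally decide its own status (sparse/dense) and, if dense, agree on a component identifier by a constant number of rounds of flooding within the weak-diameter-$2$ component (e.g., adopt the minimum ID among dense friends, iterated twice); the clean-up pass is then one more round. In total $O(1)$ rounds.

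The main obstacle I anticipate is not any single step but making the \emph{chain of constants} consistent: the friendship threshold $\gamma$, the sampling error, the density threshold, and the clean-up thresholds all feed into the four inequalities, and one must choose them in the right order (first $\eps$, then $\gamma=c_1\eps$, then the sampling accuracy $c_2\gamma\Delta$, then verify $(1-4\eps)$ and $(1-2\eps)$ survive two additive perturbations of size $c_3\gamma\Delta$ each) so that everything closes with the stated numbers and the hypothesis $\eps<1/20$. The second mild obstacle is the $\CONGEST$ estimation of pairwise common-neighborhood sizes; I would handle it exactly by the sampling argument above rather than trying to compute intersections exactly.
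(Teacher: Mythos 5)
Your overall plan (friendship relation, connected components, clean-up pass) is a legitimate route to an ACD, but it diverges from the paper in a way that matters for exactly the one property this lemma strengthens, namely item~(4), and I think there is a real gap there.

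The paper does not rebuild the ACD from scratch. It invokes the \CONGEST construction of \cite{HKMT21} as a black box, obtaining sets $D_1,\dots,D_t$ and a sparse remainder $V'$, and then \emph{enlarges} each $D_i$ to $C_i := D_i \cup \{u \in V' : |N(u)\cap D_i|\ge (1-4\eps)\Delta\}$, setting $V_{sparse}=V\setminus\bigcup_i C_i$. Property~(4) then holds almost by definition: any remaining sparse node has at most $(1-4\eps)\Delta$ neighbors in $D_i$, and a short edge-counting argument shows $|C_i\setminus D_i|\le 2\eps\Delta$, so its neighbor count in $C_i$ is at most $(1-2\eps)\Delta$; nodes in other almost-cliques have at most $4\eps\Delta\le(1-2\eps)\Delta$ by item~(3). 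Items~(1)--(3) come essentially for free from the \cite{HKMT21} guarantees plus this enlargement.

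Your construction goes in the opposite direction: you \emph{eject} nodes from tentative almost-cliques into $V_{sparse}$ during the clean-up pass. That can only help the inside-degree and anti-degree bounds; it does nothing to cap $|N(u)\cap C_i|$ for a node $u$ left in $V_{sparse}$. Your justification for (4) --- ``such a node would itself satisfy the friendship/density condition and therefore have been placed inside $C_i$'' --- is precisely the step that needs an argument and does not obviously hold. A node $u$ with $(1-2\eps)\Delta$ neighbors in $C_i$ and the rest of its neighbors outside $C_i$ has sparsity $\Theta(\eps\Delta)$, so it may well be declared sparse by your density threshold; being friends with many nodes of $C_i$ does not by itself force it into the dense component under a threshold-based classification, and the raw ACD constructions in the literature do not promise an upper bound on $|N(u)\cap C_i|$ for outside $u$. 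That is exactly the deficiency the paper is repairing, and the repair is an absorption step, not an ejection step. Your clean-up pass would have to be reversed (absorb high-degree-into-$C_i$ outsiders, then re-verify (2) and (3) under the absorption) to close the gap --- which at that point reproduces the paper's argument. Your \CONGEST implementation discussion (sampling common neighborhoods, $\Delta=\omega(\log^2 n)$ to control Chernoff error) is fine and matches what \cite{HKMT21} does internally, though the paper sidesteps it entirely by citing that result.
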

\begin{proof}[Proof of \Cref{lem:acd}]
	We first use the \CONGEST algorithm from \cite[arXiv, Lemma 9.1]{HKMT21} to compute an ACD $(V', D_1, D_2, \ldots, D_t)$ in $O(1)$ rounds.  The algorithm is stated for $\eps=1/3$ but it can be run in the same number of rounds and with the same guarantees for any given constant $\eps\leq 1/3$. 
	Then, we modify the decomposition as follows.
	Let $C_i$ consist of $D_i$ along with any node $u \in V'$ with at least $(1-4\eps)\Delta$ neighbors in $D_i$, for each $i \in [t]$.
	Let $V_{sparse} = V \setminus \cup_i C_i$. This completes the construction.
	
	First, observe that the decomposition is well-defined, as a node $u\in V'$ cannot have more than $(1-4\eps)\Delta$ neighbors in more than on $D_i$. 
	
	Second, observe that by \cite[arXiv, Lemma 6.2]{HKMT21} the nodes in $V'$ have sparsity $(\eta^2/4)\Delta$ where $\eta=\eps/108$. So, they are $\Omega(\eps^2\Delta)$-sparse. This holds then also for the nodes of $V_{sparse}$, establishing (1).
	
	Third, each node in $D_i$ has at least $(1-4\eps)\Delta$ neighbors in $D_i\subseteq C_i$, by the definition of an ACD in \cite[Lemma 9.1 + ACD Def]{HKMT21}, and the same holds by construction for the nodes of $C_i \setminus D_i$. This establishes (3).

	By  \cite[Lemma 9.1 + ACD Def]{HKMT21}, each $D_i$ satisfies $(1-\eps)\Delta \le |D_i| \le (1+\eps)\Delta$.
	We claim that $|C_i \setminus D_i| \le 2\epsilon$, which establishes (2).
	Each node in $D_i$ has at most $\eps \Delta$ outside neighbors, so the number of edges with exactly one endpoint in $D_i$ is at most $\eps\Delta|D_i| \le \eps(1+\eps)\Delta^2$.
	Each node of $C_i \setminus D_i$ is incident on at least $(1-4\eps)\Delta$ such edges.
	Thus, $|C_i \setminus D_i| \le \eps (1+\eps)\Delta/(1-4\eps) \le 2\eps\Delta$, for $\eps \le 1/20$. Hence, $|C_i|\le |D_i|+|C_i\setminus D_i| \le (1+3\epsilon)\Delta$, establishing (2).
	
	Finally, by definition, each node in $V_{sparse}$ has at most $(1-4\epsilon)\Delta$ neighbors in $D_i$. Also using $|C_i\setminus D_i|\leq 2\eps\Delta$, we obtain that each node in $V_{sparse}$ has  at most $(1-2\eps)\Delta$ neighbors in $C_i$. 
	By (3), each node in $C_j$, $j\ne i$, has at most $4\epsilon\Delta \le (1-2\epsilon)\Delta$ neighbors in $C_i$, establishing (4).
\end{proof}

The next observation is immediate from  properties of an ACD (\Cref{lem:acd} (2) and (3)).
\begin{restatable}{observation}{obsoutsideAndAntiDegree}
	\label{obs:outsideAndAntiDegree}
	For each node $u$ in an AC $C$ we have  $a(u) \le 7\epsilon\Delta$ and
	$e(u) \le 4 \epsilon \Delta$. 
\end{restatable}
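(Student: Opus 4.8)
The plan is to unfold the two definitions and plug in the size and internal-degree bounds from \Cref{lem:acd}. Recall that the anti-degree $a(u)$ counts the non-neighbors of $u$ inside $C(u)$, so $a(u) = |C(u)| - |N(u)\cap C(u)| - 1$ (subtracting $u$ itself), while the outside degree satisfies $e(u) = deg(u) - |N(u)\cap C(u)|$ since $N(u)$ partitions into its part inside $C(u)$ and its part outside.

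For the anti-degree bound, I would combine the upper bound $|C(u)|\le (1+3\eps)\Delta$ from \Cref{lem:acd}(2) with the lower bound $|N(u)\cap C(u)|\ge (1-4\eps)\Delta$ from \Cref{lem:acd}(3). This gives $a(u) \le (1+3\eps)\Delta - (1-4\eps)\Delta - 1 = 7\eps\Delta - 1 \le 7\eps\Delta$, as desired.

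For the outside degree, I would use the trivial bound $deg(u)\le \Delta$ together with the same lower bound $|N(u)\cap C(u)|\ge (1-4\eps)\Delta$ from \Cref{lem:acd}(3), yielding $e(u) \le \Delta - (1-4\eps)\Delta = 4\eps\Delta$.

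There is no real obstacle here: the statement is a one-line consequence of the two ACD guarantees, and the only thing to be slightly careful about is the bookkeeping of whether $u$ itself should be subtracted in the anti-degree count (it should, but it only helps the inequality, so it can even be dropped if one prefers a cleaner bound). I would present both computations in two short sentences.
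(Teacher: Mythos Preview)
Your proposal is correct and is essentially the same argument as the paper's: both bounds follow directly from \Cref{lem:acd}(2) and (3), combining $|C|\le (1+3\eps)\Delta$ with $|N(u)\cap C|\ge (1-4\eps)\Delta$ for the anti-degree, and $deg(u)\le \Delta$ with the same internal-degree bound for the outside degree. Your version is in fact slightly more explicit than the paper's terse two-sentence proof.
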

\begin{proof}[Proof of \Cref{obs:outsideAndAntiDegree}]
The bound on $a(v)$ additionally requires that the AC is of size $(1+3\eps)\Delta$ by \Cref{lem:acd} (2). 
	The bound on $e(v)$ follows as $v$ has at least $(1-4\eps)\Delta$ neighbors in $C(v)$ by \Cref{lem:acd} (3). 
\end{proof}

\section{\texorpdfstring{Ultrafast $\Delta$-coloring}{Ultrafast Delta-coloring}}
\label{sec:algHighDegree}
The objective of this and the next section is to argue the following main theorem.
\tmMainLogstar*

In this section, we present the high-level structure of our reduction to a constant number of $deg+1$-list coloring instances, including the arguments for correctness in a nutshell. In  \Cref{sec:detailedReduction}, we present the corresponding formal proofs. We solve the instances via the following result.
\begin{lemma}[List coloring \cite{HKNT22,HNT22}]
	\label{lem:listColoring}
	There is a randomized \CONGEST algorithm to $(deg+1)$-list-color any graph in $O(\log^3 \log n)$ rounds, w.h.p. If all lists are of size at least $\omega(\log^7 n)$, then the algorithm runs in $O(\logstar n)$ rounds. 
\end{lemma}

\subsection{Definition and Almost-Clique Classification}
\label{ssec:ACpartition}
To avoid an unsolvable situation where a partial coloring cannot be extended to a full $\Delta$-coloring, the order in which we color nodes is crucial. We now define the notation needed to present this order. We partition the ACs of a given ACD into three types: ordinary, nice, and difficult, with the last further divided into guarded and runaway ones.  We next explain the intuition behind this partition, formal definitions follow hereafter. Ordinary ACs can profit from an initial randomized coloring step. These ACs have many outgoing edges connecting the AC with many  distinct external neighbors, and hence coloring a constant fraction of these external neighbors and a constant number of the nodes inside the AC randomly will with high probability produce an uncolored toehold node inside the clique;  two of its neighbors,  one external and one within the AC, will have chosen the same color. Difficult ACs are such that they do have many outgoing edges but many of them are incident to the same external (special) node.  Thus, a similar reasoning based on coloring nodes randomly is doomed to fail as its success depends too much on the randomness of that special node and hence difficult ACs are treated as follows. We create slack differently, depending on whether the special node is shared by more than one AC (runaway ACs) or not (guarded ACs). For guarded ACs we same-color the special node (a protector) and one of its non-neighbors inside the clique. This procedure itself is a $(deg+1)$-list coloring problem on a virtual graph where the nodes are pairs of vertices. Intuitively, we could proceed similarly for runaway ACs, but since different ACs share the same special node it is difficult to perform this \emph{same-coloring procedure} on a complicated virtual graph. Instead, we use the special node itself as the toehold (an escape) by deferring coloring it. It can be colored at the end as it can obtain slack through coloring the nodes in its adjacent runaway ACs randomly. Lastly, the intuition for nice ACs is yet a different one. One example of a nice AC is a $\Delta+1$ clique with a single edge missing. Here, we need to same-color the non-adjacent pair of nodes inside the AC, again a $(deg+1)$-list coloring problem on a virtual graph by itself. The precise definitions are more subtle than this intuitive overview.  For example, we need to ensure that we can deal with the fact that ACs of different types can occur next to each other in a graph, the special node of a difficult AC could be part of another AC, and the freedom in the two list coloring problems on virtual graphs crucially depends on the order in which we color nodes.  Next, we continue with formally defining the partition.

Let $\psi = \Delta^{1/3}$ and $\phi = \Delta^{2/3}/2$. 
A node $v\notin C$ is \emph{special} for an AC $C$ if it has at least $\phi$ neighbors in $C$. 
A node is \emph{simplicial} if its neighborhood is a clique.
\begin{itemize}
\item An AC is \emph{easy} if it contains a simplicial node or a non-edge (i.e., two non-adjacent vertices).
\item An AC is \emph{difficult} if: a) it is non-easy, b) it has a \emph{special} neighbor, and c) $|C| \ge \Delta-\psi$. For each difficult AC $C$ pick one arbitrary special neighbor.  Non-picked special nodes will not play a special role in the algorithm. 
\item An AC is \emph{nice} if it is easy or contains a picked special node. 
\item An AC is \emph{ordinary} if it is neither difficult nor nice. 
\end{itemize}

Call a special node $v$ an \emph{escape} if it is picked by at least two ACs. If the special node of a difficult AC $C$ is such an escape node denote it by $e_C$, call the clique a \emph{runaway},  and call $e_C$ its \emph{escape}. For a difficult AC whose special node is not picked by another AC, denote its special node by $p_C$, call $C$ a \emph{guarded} AC and call $p_C$ its \emph{protector}. See \Cref{fig:ac-types,fig:guardedRunaway} for examples of these definitions.

\begin{figure}
	\centering
	\includegraphics[width=0.75\textwidth,page=8]{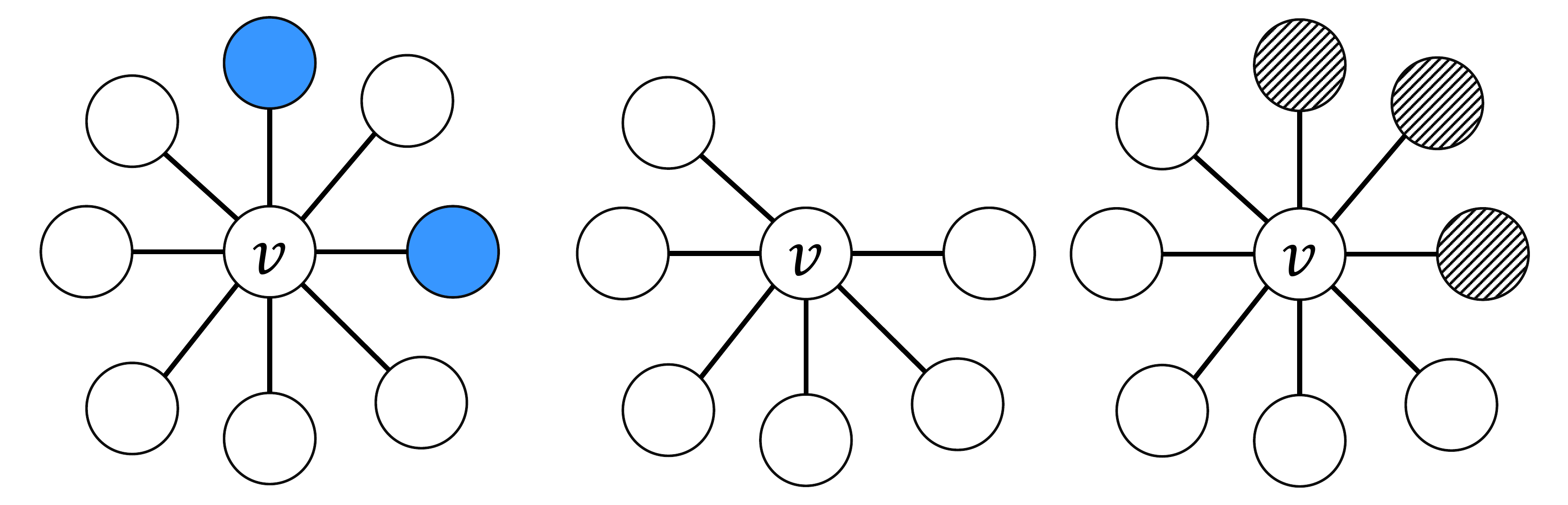}
	\caption{In both of the examples the $\Delta$-cliques have few outside neighbors. On the left side we see a guarded AC with two special nodes. Each of them, e.g., the red node may serve as a protector for the AC. On the right side there are two runaway ACs that have common special nodes. Both special nodes, e.g, the blue node, may become an escape. 
	}
	\label[figure]{fig:guardedRunaway}
\end{figure}

\begin{definition}[Fine Grained ACD]
\label{def:fineGrainedACD}
Given an ACD of a graph, we partition the graph nodes into the following sets.
\begin{compactenum}
	\item $\cP$: the set of protector nodes,
	\item $\cE$: the set of escape nodes, 
	\item $\cV_{*}$: nodes in $V_{sparse}$, excluding those in $\cP\cup \cE$,
	\item $\cO$: nodes in ordinary ACs, 
	\item $\cR$: nodes in runaway ACs,
	\item $\cN$: nodes in nice ACs, excluding those in $\cP\cup \cE$,
	\item $\cG$: nodes in guarded ACs.
\end{compactenum}
\end{definition}

\begin{observation}
\label{obs:difficultNoSpecial}
Difficult and ordinary ACs are cliques and contain neither escapes nor protectors.
\end{observation}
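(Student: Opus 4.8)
The statement falls into two parts. The "clique" part is immediate: both difficult and ordinary ACs are \emph{non-easy} — for a difficult AC this is condition (a) of its definition, while an ordinary AC is by definition neither difficult nor nice, hence in particular not nice, hence not easy (every easy AC is also nice) — and a non-easy AC contains no non-edge, i.e.\ is a clique. So the only real content is that difficult and ordinary ACs contain neither an escape nor a protector.

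For ordinary ACs this is again essentially definitional. The set $\cE$ of escape nodes and the set $\cP$ of protector nodes (\Cref{def:fineGrainedACD}) together form exactly the set of special nodes picked by difficult ACs: each difficult AC picks exactly one special neighbour, and that node is an escape if at least two ACs pick it and a protector otherwise. But by definition any AC containing a picked special node is \emph{nice}, so an ordinary AC — which is not nice — contains none of them. (No circularity arises: the special picks are fixed as soon as the difficult ACs are identified, and only afterwards are the nice and ordinary ACs determined.)

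For difficult ACs the argument is a short degree count, which I would run as follows. Suppose towards a contradiction that a difficult AC $C$ contains a node $v$ picked as a special node by some difficult AC $C'$. Since a node special for $C'$ lies outside $C'$, we have $v\notin C'$, so $C'\neq C$, and $v$ has at least $\phi=\Delta^{2/3}/2$ neighbours inside $C'$, all of them outside $C$. On the other hand $C$ is a clique (being non-easy) of size $|C|\ge\Delta-\psi$ with $\psi=\Delta^{1/3}$, by condition (c) of the definition of difficult; hence $v$ is adjacent to all $|C|-1\ge\Delta-\psi-1$ other vertices of $C$. Summing these two disjoint sets of neighbours gives $deg(v)\ge(\Delta-\psi-1)+\phi=\Delta+\tfrac{1}{2}\Delta^{2/3}-\Delta^{1/3}-1>\Delta$ once $\Delta$ exceeds a small absolute constant, which holds throughout since $\Delta=\omega(\log^3 n)$; this contradicts $deg(v)\le\Delta$.

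I expect this last degree count to be the only nontrivial step, and its whole point is that the "special threshold" $\phi=\Delta^{2/3}/2$ dominates the "size deficit" $\psi=\Delta^{1/3}$ that a difficult AC may have — which is precisely why these two parameters carry different exponents. Note that the weaker clique-size bound $|C|\ge(1-\eps)\Delta$ from \Cref{lem:acd} would \emph{not} suffice here, since $\phi\ll\eps\Delta$; it is the bound $|C|\ge\Delta-\psi$ built into the definition of a difficult AC that makes the count go through. Everything else is just unwinding the definitions of \emph{easy}, \emph{nice}, \emph{difficult}, \emph{ordinary}, \emph{escape} and \emph{protector}, together with the one structural fact that a node special for an AC lies outside it.
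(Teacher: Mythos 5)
Your proof is correct and follows essentially the same route as the paper: clique-ness from non-easiness, the ordinary case from the definition of nice, and the difficult case from the fact that a clique of size at least $\Delta-\psi$ forces external degree at most about $\psi<\phi$, which rules out escapes and protectors. The only cosmetic difference is that you phrase the last step as a total-degree count exceeding $\Delta$, while the paper directly bounds the external degree by $\psi$ and compares it to the threshold $\phi$; the two formulations are equivalent.
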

\begin{proof} Ordinary and difficult ACs are cliques as a non-edge between any two of its vertices would classify the respective clique as easy. 
	Let $C$ be a difficult AC and let $v\in C$. As $|C|\geq \Delta-\psi$, and the AC $C$ is not easy,  $v$ has at most $\psi$ external neighbors. As nodes in $\cP\cup \cE$ have at least $\phi>\psi$ neighbors in an AC different from their own, $v\not\in\cP\cup \cE$. 
	
	By definition, any AC that is neither easy nor difficult and contains a special node is classified as a nice AC. Hence,  by definition, (the remaining) ordinary ACs do not contain special nodes, i.e., they contain neither escapes nor protectors. 
\end{proof}
\begin{observation}
	\label{obs:partitionWellDefined}
	The node sets $\cP,\cE, \cV_*, \cO, \cR, \cN,$ and  $\cG$ partition the nodes of the graph.
\end{observation}
\begin{proof}The sets of nodes clearly cover all nodes of the graph as every node in an ACD is either contained in $V_{sparse}$ or in an AC. Most sets are disjoint by definition, in particular, we have $\cP\cap \cE=\emptyset$. The only potential non-empty intersection between sets is covered by \Cref{obs:difficultNoSpecial}, which shows that difficult ACs cannot contain a node of $\cP\cup \cE$.
\end{proof}

\subsection{The Algorithm \& Intuition}
\label{ssec:AlgorithmHighDegree}
During the course of our algorithm we keep updating the palettes of available colors of nodes by removing colors used by already colored neighbors. Such an update can be implemented in a single round. Whenever we create a list coloring instance, the \emph{(uncolored) degree} of a node refers to its degree in the considered subgraph (or in some cases virtual graph), while the \emph{palette} is the set of available colors in the whole graph $G$. 
A node has \emph{slack} $x$ in a subgraph if its palette is at least by $x$ larger than its uncolored degree in the subgraph. A node has \emph{unit-slack} in a subgraph if it has slack of at least one. 
Next, we present our main high-level algorithm, where each step is detailed on later.

\begin{algorithm}[H]\caption{$\Delta$-coloring (High-Level Overview)} \label{alg:orderedPartition}
	\begin{algorithmic}[1] 
		\STATE Compute an ACD ($\eps=1/172$)
		\STATE Break the graph into groups $\cV_*, \cO, \cR, \cN, \cG, \cP$, and $\cE$ according to \Cref{def:fineGrainedACD}. 
		\STATE Run \textsf{SlackGeneration} on $\cV_*\cup \cO\cup \cR$.
		\STATE Color $\cV_*$: sparse nodes in $V_{sparse}\setminus (\cP\cup \cE)$
		\STATE Color $\cO$: nodes in ordinary ACs
		\STATE Color $\cR$: nodes in runaway ACs 
		\STATE Color $\cN$: nodes in nice ACs, excluding those in $\cP\cup \cE$ 
		\STATE Color $\cG \cup \cP$: nodes in guarded ACs $\cG$ and protector nodes $\cP$
		\STATE Color $\cE$: escape nodes
	\end{algorithmic}
	Note that some nodes of $\cV_*\cup \cO\cup \cR$ are already colored in \textsf{SlackGeneration}.
\end{algorithm}
See \Cref{figure:ACorder} on page \pageref{figure:ACorder} for an illustration for the dependencies on the coloring order of different node sets. 

\subsection{Our Steps in a Nutshell}

In order to use the $O(\logstar n)$-round algorithm of \Cref{lem:listColoring}, we need not only ensure that the nodes in the list coloring instances have unit-slack, but that they also have large palettes.  One observation that we use frequently is that a node with many uncolored neighbors always has  a large palette of remaining colors.
Thus, we achieve large palettes in one of three ways: many same-colored pairs of neighbors, many stalled neighbors, or by coloring large-degree subsets simultaneously. We next provide intuition for why each of the steps works.

Step~1 uses the algorithm from \Cref{lem:acd} and the partition in Step~2 is well defined due to \Cref{obs:partitionWellDefined}. We continue with the remaining steps.
\myparagraph{Step~3 (Slack Generation):} Step~3 executes $\slackgeneration$ (recall its description in \Cref{ssec:challenges} or see \Cref{sec:slackGeneration} for the formal definition) where only nodes in $\cV_*\cup \cO\cup\cR$ participate. We prove the following lemma that is central to our algorithm. It abstracts away all randomized steps and probabilistic guarantees that we need for the rest of the reduction. Its proof appears in \Cref{sec:slackGeneration}.
\begin{restatable}[Properties of Step~3, \textsf{SlackGeneration}]{lemma}{lemSlackGenerationProperties}
	\label{lem:slackGenerationProperties}
	If $\Delta=\omega(\log^{3}n)$, the following properties hold w.h.p. after Step~3.
	\begin{compactenum}[(i)]
		\item All sparse nodes have slack $\Omega(\Delta)$ in $G[\cV_* \cup \cO]$,  and escape nodes have slack $\Omega(\Delta)$ in $G[V]$.
		\item  All ordinary cliques have  $\Omega(\psi)$ uncolored vertices with unit-slack in $G[\cV_* \cup \cO]$.
		\item  
		For each difficult AC $C$, at most half the vertices in $N_{C}(e_C)$ ($N_{C}(p_C)$) are colored.
	\end{compactenum} 
	\label{lem:slackprop}
\end{restatable}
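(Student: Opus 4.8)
The plan is to establish each of the three items by a first-moment estimate followed by a concentration bound, exploiting that $\Delta=\omega(\log^3 n)$ forces $\psi=\Delta^{1/3}=\omega(\log n)$: a Chernoff bound then concentrates counts of size $\Omega(\psi)$, Talagrand's inequality concentrates the slack variables (whose expectations I will arrange to be $\Omega(\Delta)$) up to lower-order deviation, and a final union bound over the $\le n$ relevant nodes and almost-cliques finishes each item. Recall that \slackgeneration{} is the single round in which each node of $\cV_*\cup\cO\cup\cR$ independently joins a sample with probability $\psample=\psampleval$, picks a uniformly random color from its palette (still the full set $[\Delta]$, since nothing is colored yet), and keeps it iff no sampled neighbor picked the same color.

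For item (i) I would invoke the classical ``permanent slack'' mechanism: if a vertex $v$ has $p$ non-adjacent pairs of neighbors that both lie in the sample region, then each color is taken by a fixed such pair and survives with probability $\Theta(1/\Delta^2)$, so the expected number of colors occurring at least twice in $N(v)$ is $\Theta(p/\Delta)$, and Talagrand's inequality upgrades this to $\Omega(\Delta)$ permanent slack w.h.p.\ whenever $p=\Omega(\Delta^2)$. For a sparse node $v\in\cV_*$, whose slack is measured in $G[\cV_*\cup\cO]$, a short bookkeeping shows that the slack equals the a-priori slack $\Delta-deg(v)$, plus the number of color-collisions among $v$'s colored neighbors, plus one unit per uncolored neighbor of $v$ outside $\cV_*\cup\cO$; hence either $v$ has $\Omega(\Delta)$ neighbors in $\cR\cup\cN\cup\cG\cup\cP\cup\cE$ (and we are done), or $v$ has $(1-o(1))deg(v)$ neighbors in $\cV_*\cup\cO$, so its $\Omega(\eps^2\Delta^2)$ sparsity-witnessing non-edges (\Cref{lem:acd} (1)) are, up to the $o(\Delta)\cdot\Delta$ that can touch the few outside neighbors, concentrated among participating neighbors, and the permanent-slack mechanism applies (if $deg(v)$ is itself small we simply use a-priori slack). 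For an escape node $e_C\in\cE$, the slack is measured in all of $G$, so no such ``free'' contribution is available; here I would first note that $e_C$ must be sparse, since a dense node has $\ge(1-4\eps)\Delta$ neighbors in its own almost-clique by \Cref{lem:acd} (3), which together with its $\ge\phi$ neighbors in each of the $\ge 2$ runaway almost-cliques that picked it would exceed $\Delta$; I would then apply the permanent-slack mechanism, using that all neighbors of $e_C$ inside runaway almost-cliques belong to $\cR$ and therefore participate.

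For item (ii), an ordinary almost-clique $C$ is a clique by \Cref{obs:difficultNoSpecial}; moreover every $v\in C$ has $e(v)\ge 1$ (else its neighborhood is a clique, contradicting non-easiness), whence $|C|=deg_C(v)+1\le deg(v)-e(v)+1\le\Delta$, so $(1-\eps)\Delta\le|C|\le\Delta$, and, being non-difficult, $C$ has no special neighbor or satisfies $|C|<\Delta-\psi$. Since the internal neighbors of a fixed $v\in C$ form a clique, their surviving colors are pairwise distinct, so a unit of slack for $v$ in $G[\cV_*\cup\cO]$ can only come from an outside neighbor of $v$ lying outside $\cV_*\cup\cO$, from $deg(v)<\Delta$, or from a color repeated among $v$'s outside neighbors or between an outside and an inside neighbor. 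For the last case I would use that the absence of a special neighbor spreads $C$'s $\ge|C|$ outgoing edges over $\ge|C|/\phi$ distinct external vertices, so that one round of color trials gives a fixed outside neighbor $x$ of $v$ a color matched by exactly one of $v$'s $\approx\Delta$ inside neighbors with probability $\Omega(1)$; conditioning additionally on $v$ not being sampled (so that $v$ stays uncolored), this shows each $v\in C$ is an uncolored unit-slack vertex with probability $\Omega(1)$, and summing over the $\ge(1-\eps)\Delta$ vertices of $C$ and applying Talagrand's inequality yields $\Omega(\Delta)=\Omega(\psi)$ such vertices w.h.p. For item (iii): if $C$ is guarded then $C\subseteq\cG$, which does not participate in \slackgeneration{}, so no vertex of $N_C(p_C)$ is colored and the claim holds trivially; if $C$ is runaway then $C\subseteq\cR$, a vertex of $N_C(e_C)$ is colored only if it is sampled (probability $\psample$) and survives, and since $C$ is a clique on $\ge\Delta-\psi$ vertices the survival probability of a sampled vertex is at most $e^{-1/2}+o(1)$, so the expected number of colored vertices in $N_C(e_C)$ is a constant factor below $|N_C(e_C)|/2$, and a concentration bound keeps it below $|N_C(e_C)|/2$ w.h.p.\ because $|N_C(e_C)|\ge\phi=\omega(\log n)$.

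The hard part will be item (i) for escape nodes: one must certify that $\Omega(\Delta^2)$ of the non-edges witnessing the sparsity of $e_C$ genuinely lie between neighbors in the sample region $\cV_*\cup\cO\cup\cR$, rather than being hidden among neighbors buried inside nice or guarded almost-cliques — unlike the $\cV_*$ case, there is no ``free'' slack to fall back on since the slack for $e_C$ is measured in all of $G$ — and this is exactly where the fine-grained classification of \Cref{def:fineGrainedACD}, and the structure it forces on the neighborhood of a node that is special for several almost-cliques, must do the work. A secondary technical obstacle is checking the certifiability / bounded-differences hypotheses of Talagrand's inequality for the ``number of uncolored unit-slack vertices of $C$'' statistic in item (ii), since a single color choice in or near $C$ can simultaneously change the status of many vertices of $C$.
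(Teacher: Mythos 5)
Your overall plan (expectation estimate, Talagrand's inequality, union bound) matches the paper's template, and your treatment of item (iii) and of sparse nodes in item (i) agrees with the paper in spirit. But two of your items have genuine gaps.

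For item (ii), the difficulty you mention at the end as a ``secondary technical obstacle'' is in fact fatal to the argument as stated, not a detail to be checked later. The random variable $Z=$ ``number of uncolored unit-slack vertices in $C$'' is not $O(1)$-Lipschitz: a single external vertex $x$ of $C$ may have nearly $\phi=\Delta^{2/3}/2$ neighbors inside $C$ (just below the ``special'' threshold), so changing $x$'s activation or color can flip the slack status of $\Theta(\phi)$ vertices simultaneously. With Lipschitz constant $c=\Theta(\phi)$ and $\Exp[Z]=O(\Delta)$, the precondition $b>60c\sqrt{r\Exp[Z]}$ of Talagrand's inequality (\Cref{lem:talagrand}) already forces $b=\Omega(\Delta^{7/6})$, which exceeds the entire range $|Z|\le|C|=O(\Delta)$, so the inequality yields nothing. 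The paper's key idea — entirely absent from your proposal — is to first fix a matching $M$ of size $\Omega(\psi)$ between $C$ and $N(C)\setminus C$ (\Cref{obs:vertexCoverMatching}, whose existence uses precisely the absence of a special neighbor), and then count not unit-slack \emph{nodes} but \emph{colors} simultaneously tried and kept by some matched outside node $w_i$ together with a non-neighbor $q\in C$ while $c_i$ is not activated, split as a difference $T-D$ of two $O(1)$-Lipschitz, $O(1)$-certifiable quantities (\Cref{lem:ordinaryMatchingSlack}). The matching achieves two things you cannot get otherwise: the $c_i$ being distinct guarantees the color count lower-bounds the node count, and perturbing one node's randomness affects only $O(1)$ colors, so the Lipschitz constant drops to $O(1)$. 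Note also that the lemma only claims $\Omega(\psi)$, not $\Omega(\Delta)$; your $\Omega(\Delta)$ conclusion is in fact false when $C$ has only $\Theta(\psi)$ distinct external neighbors, which the definitions permit.

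For the escape-node part of item (i), your claim that an escape node must be sparse is false. The inequality $(1-4\eps)\Delta+2\phi>\Delta$ is equivalent to $\Delta^{1/3}<1/(4\eps)$; with $\eps=1/172$ this already fails for $\Delta\ge 43^3$, so for all but bounded $\Delta$ a dense node (sitting inside a nice AC) can perfectly well be the escape for two runaway ACs. The paper does not need sparsity of $e_C$ and does not worry about where its other neighbors live: the two fragments $X_1\subseteq C_1$ and $X_2\subseteq C_2$ of $N(e_C)$, each of size $\ge\phi$, are disjoint, and every node of $X_1$ has outside degree at most $\psi$ (difficult ACs are cliques of size $\ge\Delta-\psi$), so the bipartite graph between $X_1$ and $X_2$ has $\Omega(\phi^2)$ non-edges (\Cref{lem:escapeSlack}). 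This gives local sparsity $\Omega(\phi^2/\Delta)=\Omega(\psi)$ for $e_C$ supported entirely inside $\cR\subseteq S$, which is exactly what \Cref{lem:slackGeneration} needs; no case analysis on what kind of node $e_C$ is, and no claim that $\Omega(\Delta^2)$ non-edges lie among participating neighbors, is required.
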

We next provide some intuition for the result of \Cref{lem:slackGenerationProperties}. The result for sparse nodes is closely related to similar results in previous work \cite{Molloy1998,EPS15,HKMT21}. By definition, a sparse node has  $\Omega(\Delta^2)$ non-edges in its neighborhood, and for any of these the probability that both "endpoints" are same-colored in $\slackgeneration$ is $\Theta(1/\Delta)$. Thus, in expectation, $\Omega(\Delta)$ of these non-edges are colored monochromatic, providing $\Omega(\Delta)$ permanent slack for the sparse node. The dependencies can be taken care of with a version of Talagrand's inequality. 
The most involved part of \Cref{lem:slackGenerationProperties} is part $(ii)$, that provides slack for ordinary cliques. First, we show that ordinary ACs  have $\Omega(\phi/\psi)=\Omega(\psi)$ distinct outside neighbors. The choice of $\phi$ and $\psi$ is crucial in this step. 
This merely follows  as they do not have a special neighbor that is incident to many nodes of the AC. Then, consider one of these outside neighbors $w$. Due to \Cref{lem:acd}, 4., it has at most $(1-2\eps)\Delta$ neighbors inside the clique, implying that it has a set $A_w$ of $2\eps\Delta$ non-neighbors in the AC. If there were no dependencies between the coloring of different nodes, it is easy to see that there is a constant probability that $w$ and a node of $A_w$ get same-colored. Then, any uncolored neighbor $c$ of $w$ inside the clique (and these exist) obtains permanent slack from this same-colored pair. Intuitively, a similar reasoning holds for all the other $\Omega(\psi)$ outside neighbors, yielding $\Omega(\psi)$ nodes with permanent slack for the clique. The main difficulty here in the randomized analysis is to deal with the various types of dependencies between the different events. 

Property $(iii)$ follows with a Chernoff bound as every node only participates in the coloring procedure with a small constant probability. 

\myparagraph{Step~4 (Sparse Nodes):} As \Cref{lem:slackGenerationProperties} (i) provides $\Omega(\Delta)$ permanent slack for  sparse nodes, coloring these nodes immediately yields a $(deg+1)$-list coloring instance with lists of size $\Omega(\Delta)$. 

\myparagraph{Step~5 (Ordinary ACs):} \Cref{lem:slackGenerationProperties} (ii) provides us with a large set $X$ of uncolored nodes inside the ordinary AC that all have permanent unit-slack. We stall the nodes in $X$ and first color the remaining nodes of the AC which have temporary slack due to having many neighbors in $X$. Then, we proceed with coloring the nodes in $X$. In both cases the $(deg+1)$-list coloring instances have large lists because every node is connected to all nodes in $X$ and $X$ is large.  

\begin{figure}
	\centering
	\includegraphics[width=0.6\textwidth]{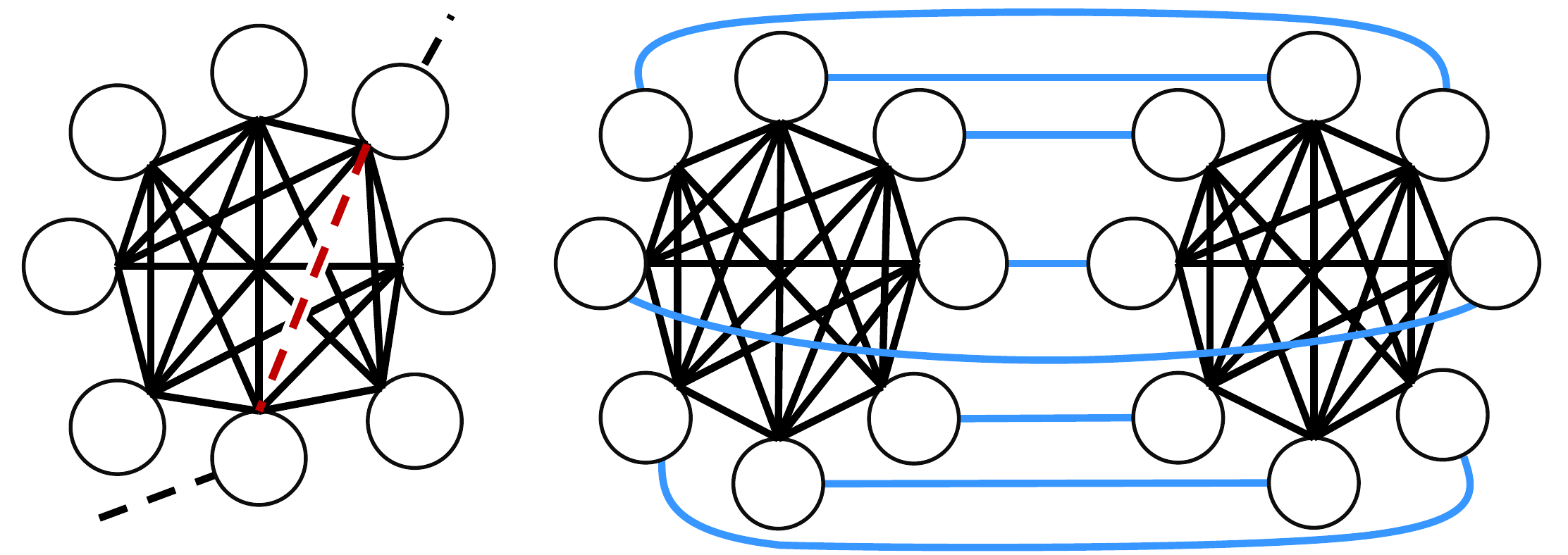}
	\caption{Two of the challenging situations that can occur. On the left, a $\Delta+1$-clique, minus a single edge (red dotted). On the right,
		two $\Delta$-cliques connected via a perfect matching (blue). The left hand side is a nice AC, the right hand side is an ordinary AC.}
	\label[figure]{fig:ac-types}
\end{figure}

\myparagraph{Step~6 (Runaway ACs):} For each runaway AC $C$ there is an escape node $e_C$ that is adjacent to at least $\phi$ nodes in $C$. 
 By \Cref{lem:slackGenerationProperties} (iii), at least $\phi/2$ of them remain uncolored during \textsf{SlackGeneration},
As $e_C\in \cE$ is only colored in Step~9, these nodes obtain temporary slack. Then, all nodes of the clique can be colored with two list coloring instances with a similar two-layer approach as for ordinary ACs.  See \Cref{fig:guardedRunaway} (right hand side),  for an illustration of these ACs.

\myparagraph{Step~7 (Nice ACs):} For nice ACs there are several subcases that we deal with one after the other. Nice ACs that contain a protector or an escape node can be colored in two  $(deg+1)$-list coloring instances, as most nodes have temporary slack from the stalled vertices (protectors and escapes are only colored in later steps). If a nice AC has no non-edge and no node in $\cP\cup \cE$,  we can show that the nice clique has to contain a node with degree $<\Delta$, that can serve as a toehold. 
The most involved part, is the case of having a non-edge and no node of $\cP\cup\cE$. In this case we form a \emph{pair} from the endpoints
of the non-edge and create a virtual graph $H$ in which each pair forms a vertex, and two pairs are adjacent if some of their endpoints are adjacent in $G$. Additionally, each pair receives as its list of available colors, the intersection of the lists of its endpoints. Then, list-coloring $H$ same-colors the endpoints in each pair. This list coloring problem is solvable because (1) one can show that both endpoints have more than $\Delta/2$ common uncolored neighbors in their clique, yielding a list of size $\Delta/2$, and because (2)  one can show that the maximum degree of $H$ is at most $O(\eps\Delta)\ll \Delta/2$. The latter follows as \Cref{obs:outsideAndAntiDegree} bounds the outside degree of nodes inside the AC  by $8\eps \Delta$, and hence the degrees in $H$ by $16\eps\Delta$. Once we have same-colored the pairs in these nice ACs there are several nodes with permanent slack in each nice AC allowing to color the ACs with a two layer approach, similar to the previous cases. Note that a bit more care is needed as there may be further non-edges yielding that nice ACs are less connected than ordinary ACs or runaway ACs.

\myparagraph{Step~8 (Guarded ACs):}  See \Cref{fig:guardedRunaway} (left hand side),  for an illustration of these ACs.  Similar to the treatment of nice ACs, we form pairs of non-adjacent vertices. Then, same-coloring these pairs yields permanent slack for the common neighbors of vertices in the pair. Instead of pairing two non-adjacent vertices inside the AC $C$ (these do not exist!), we form pairs by teaming up the protector node $p_C\notin C$ with a non-neighbor $u_C$ inside the clique. Such a non-neighbor exists, because a protector node can have at most $(1-2\eps)\Delta<|C|$ neighbors in $C$, due to \Cref{lem:acd}, 4. Similar to the treatment of nice ACs, same-coloring the pairs yields a list coloring instance on a virtual graph $H$. 	The main challenge here is to show that these pairs of nodes can be colored consistently (and in parallel with the pairs of other guarded ACs) with a color in their joint palette. Here, the definition of protector nodes and the choice  of $\psi$ and $\phi$ is crucial in order to show that this list coloring instance can be solved efficiently. Here, it is also crucial that a protector node is only used by a single runaway clique. This is why special nodes that several difficult cliques want to use are treated differently (as escape nodes). As this step is crucial, we provide the full proof.

\begin{restatable*}{lemma}{lemGuardedPairs}
	\label{lem:guardedPair-coloring}
	The set $\{(u_C,p_C)\}_C$ of pairs for guarded ACs induces a $(deg+1)$-list coloring instance with palette size $\geq \phi/2$. 
\end{restatable*}
\begin{proof}
A node of $G$ \emph{conflicts} with a pair $(u_C,p_C)$ of a guarded AC $C$ if the node is already colored or is contained in an adjacent pair of the same phase.
	Consider a guarded AC $C$. As $C$ has no non-edges and its size is at least $\Delta-\psi$ node $u_C$ has at most $\psi$ neighbors outside $C$.  By definition, node $p_C$ has at most $\Delta - \phi$ neighbors outside $C$. Due to \Cref{obs:difficultNoSpecial}, $C$ does not contain any protector of another AC. Hence, no nodes of $C$ (besides $u_C$) are colored in this phase. Thus, at most $\Delta - \phi + \psi \le \Delta - \phi/2$ nodes conflict with the pair $(u_C,p_C)$. The number of nodes that conflict with a pair is significantly less than $\Delta$, the number of colors initially available. Thus, the problem of coloring the pairs reduces to $(deg+1)$-list coloring with a palette of size at least $\phi/2$.
\end{proof}

Once we have colored the pair $(u_C,p_C)$, all nodes in the set $X$ of neighbors of $p_C$ in $C$ obtain temporary unit slack. Indeed $X$ contains at least $\phi/2$ nodes as $p_C$ has at least $\phi$ neighbors  by \Cref{lem:slackGenerationProperties} $(iii)$ half of these are still uncolored.  Now, we put the nodes in $X$ aside, providing temporary slack for all remaining nodes in $C$, list color these, and then list color the nodes in $X$. One can show that each list coloring instance uses lists of size at least $\phi/2$. 

In this step we also automatically color all protector nodes. 

\myparagraph{Step~9 (Escape Nodes):} While it is crucial for the previous steps that escape nodes are colored at the very end, coloring them is just a single $(deg+1)$-list coloring instance. Due to \Cref{lem:slackGenerationProperties}, $(i)$, each escape node has $\Omega(\Delta)$ permanent slack, which implies that the list coloring problem is solvable and that all lists are large.

\section{\texorpdfstring{Ultrafast $\Delta$-coloring: Details of the Reduction}{Ultrafast Delta-coloring: Details of the Reduction}}
\label{sec:detailedReduction}

To formally prove the correctness and runtime of our reduction we use  the following nomenclature. 

\begin{definition} Consider an arbitrary step of the algorithm. A node is \emph{white} if it has unit-slack or a neighbor that will be colored in a later step of the algorithm. A node is \emph{gray} if it has a white neighbor (and is not white itself). 
\end{definition}
We emphasize that the property of being white or gray crucially depends on the order in which nodes are processed. Whenever we claim that a node (set) is white or gray it refers to a certain step of the algorithm in which we color the respective set. 

Any set consisting of white and gray nodes only can be colored. 
\begin{lemma}
	\label{lem:graytoneColorable}
	A set consisting of white and gray nodes only can be colored via two $(deg+1)$-list coloring instances. 
\end{lemma}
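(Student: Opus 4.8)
The plan is to color the white nodes in the first list-coloring instance and the gray nodes in the second. The key observation is that a white node, by definition, either already has unit-slack (its palette exceeds its uncolored degree in the subgraph) or has a neighbor scheduled for a later step. In the latter case that later-colored neighbor is, for the purposes of the current step, effectively removed from the graph: it contributes to the node's degree count but we do not need to color it now, so it is as good as a unit of slack. Hence, when we restrict attention to the subgraph induced by the white nodes alone, every white node has at least one more available color than it has white neighbors — either because it had unit-slack to begin with, or because at least one of its neighbors (the stalled one, or a gray/white neighbor that is not white) lies outside the white subgraph. So the white nodes induce a genuine $(deg+1)$-list coloring instance, which is solvable by \Cref{lem:listColoring}.

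Next I would color the gray nodes. After the white nodes have been colored, consider a gray node $u$. By definition $u$ has a white neighbor $w$. Before this step, $u$ did not have unit-slack on its own, but it also was not required to — what we need now is that in the subgraph induced by the \emph{uncolored} gray nodes, $u$ has unit-slack. This holds because $w$ has just been colored in the first instance: the white neighbor $w$ counts toward $u$'s degree in $G$ but is no longer uncolored, so $u$'s uncolored degree within the gray subgraph is strictly smaller than its degree, giving the needed surplus of one color. (If $u$ has several white neighbors, so much the better.) Thus the gray nodes induce a $(deg+1)$-list coloring instance as well, again solvable by \Cref{lem:listColoring}. Since every node in the set is white or gray, two instances suffice, and by \Cref{lem:listColoring} each costs $O(\log^*n)$ or $O(\log^3\log n)$ rounds depending on list sizes.

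The main subtlety — and the place to be careful — is bookkeeping about \emph{which} neighbors are counted in the "uncolored degree" at each moment. One must check that a white node's extra color really does survive: the stalled neighbor must be colored strictly later (not in the same instance), and a white node's white neighbors are precisely the ones colored alongside it, so the slack argument has to be made against the white subgraph, not against $G$. Symmetrically, for a gray node we rely on its white neighbor having been colored in the \emph{first} of the two instances, so the order of the two instances matters. A secondary point is that "color via two $(deg+1)$-list coloring instances" presupposes the palettes used are the genuine remaining palettes in $G$; since we only ever remove colors used by already-colored neighbors, the palettes stay consistent and each instance is a legitimate list-coloring instance with lists exceeding the relevant degree. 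No large-palette claim is needed here — that is handled separately when the individual steps invoke this lemma with their specific slack bounds.
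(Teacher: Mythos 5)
There is a genuine gap, and it is the order of the two instances: you color white first, then gray, but the argument only works the other way around, which is what the paper does (gray first, then white).

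Your argument for the gray instance is the problem. You write that after a gray node's white neighbor $w$ is colored, $u$'s uncolored degree in the gray subgraph is strictly smaller than its degree, "giving the needed surplus of one color." But this compares the wrong two quantities. When $w$ is colored, $w$'s color is also deleted from $u$'s palette, so the palette shrinks in lockstep with the uncolored degree. Concretely: a gray node has, by definition, no unit-slack and no later-step neighbor, so its palette size equals its uncolored degree in the set $S$ being colored. If $u$ has $a$ white and $b$ gray neighbors in $S$, its palette has size $a+b$; after the $a$ white neighbors are colored (worst case with $a$ distinct colors from $u$'s palette), the palette has size $\ge b$ while the uncolored degree in the gray subgraph is exactly $b$. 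That is a $(deg)$-list instance, not a $(deg+1)$-list instance, and it may be unsolvable. The crucial resource — the white neighbor being \emph{uncolored and non-participating} — is exactly what is destroyed by coloring white first.

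The paper's order avoids this: color the gray nodes first, while their white neighbors are still uncolored and excluded from the gray subgraph, so every gray node has uncolored degree at most (degree in $S$)$-1$ against a full palette of size $\ge$ (degree in $S$), giving genuine unit-slack. Then color the white nodes; each has permanent unit-slack or a stalled later-step neighbor, and both of these survive the coloring of the gray set (palette and uncolored degree decrease together). Your analysis of the white instance is essentially correct and would also hold in the paper's order; it is only the gray step that needs to come first.
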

\begin{proof}
	First, we use one $(deg+1)$-list coloring instances for the gray nodes,  and then one for the white nodes.  Gray nodes have unit-slack in the graph induced by themselves as they have at least one white non-participating neighbor. White nodes have unit-slack or a non-participating neighbor  in the graph induced by all white nodes.
\end{proof}
We say that a set of white or gray nodes has list size at least $\ell$ if all lists in the respective $(deg+1)$-list coloring instances from \Cref{lem:graytoneColorable}  have this minimum size.

\subsection{Detailed Steps \& Analysis}
\label{ssec:detailedSteps}
We now prove the correctness and the constant runtime bounds of each step in detail. As the list sizes crucially depend on the choice of $\phi$ and $\psi$ and are essential to obtain our $O(\logstar n)$, we prominently  mention the list size of each step, including all occurring $(deg+1)$-list coloring instances.

\subsubsection{Step~1: ACD Construction}
Step~1 uses the algorithm from \Cref{lem:acd} that runs in $O(1)$ rounds and requires that $\Delta=\omega(\log^2 n)$.  

\subsubsection{Step~2: Partition of the Nodes}

The partition in Step~2 is well defined partition due to \Cref{obs:partitionWellDefined}. Nodes can determine in which set they are in $O(1)$ rounds, after the ACD has been computed as each AC receives a unique identifier, e.g., the maximum ID of its nodes. 

\subsubsection{Step~3: \textsf{SlackGeneration}}
Step~3 consists of a single round of color trials, in which each node of $\cV_*\cup\cO\cup\cR$ activates itself with a small constant probability and activated nodes then try a color from $[\Delta]$ uniformly at random; a color is kept permanently by a node if no neighbor tries the same color, otherwise the trial is discarded.We show the following lemma. 
\lemSlackGenerationProperties*

\Cref{lem:slackGenerationProperties} abstracts away all randomized steps and probabilistic guarantees that we need for the rest of the reduction. Hence, to ease the presentation we defer its involved proof to \Cref{sec:slackGeneration}.

\subsubsection{Step~4: Coloring Sparse Nodes}
Sparse nodes are white due to \Cref{lem:slackGenerationProperties}.
\begin{lemma}
	\label{lem:basic-graytone}
	W.h.p, sparse nodes  are white with list size at least $\Omega(\Delta)$.
\end{lemma}
\begin{proof}
	By \Cref{lem:slackGenerationProperties} (i) a sparse node  has slack $\Omega(\Delta)$ in $G[\cV_*\cup \cO]$. As all nodes not in $\cV_*\cup \cO$ are colored later, each sparse node is white and lists are of size $\Omega(\Delta)$.
\end{proof}
	
\paragraph{List sizes:}
\begin{compactitem}
    \item Sparse nodes: $\Omega(\Delta)$
\end{compactitem}

\subsubsection{Step 5: Ordinary Cliques}
Nodes in ordinary cliques are white or gray due to \Cref{lem:slackGenerationProperties}.
\begin{lemma}
	W.h.p, all nodes in
	ordinary ACs are white or gray with list size at least $\Omega(\psi)$.
 \label{lem:ordinary-white}
\end{lemma}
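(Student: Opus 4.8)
The plan is to verify that, after \textsc{SlackGeneration} (Step~3), every node in an ordinary AC $C$ is either white or gray, and that the resulting two $(deg+1)$-list coloring instances from \Cref{lem:graytoneColorable} have lists of size $\Omega(\psi)$. By \Cref{lem:slackGenerationProperties}~(ii), $C$ contains a set $X$ of $\Omega(\psi)$ uncolored vertices that have unit-slack in $G[\cV_*\cup\cO]$; since ordinary ACs are cliques (\Cref{obs:difficultNoSpecial}), every vertex of $C$ is adjacent to all of $X$. I would split the coloring of $C$ into two sub-steps in the spirit of Step~5's description: first stall $X$ and color $C\setminus X$, then color $X$.

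First I would argue the nodes of $X$ are white: by definition they have unit-slack in $G[\cV_*\cup\cO]$, and all nodes outside $\cV_*\cup\cO$ are colored in later steps (Steps~6--9), so the slack only grows when we restrict to the instance we actually solve; hence each $x\in X$ is white with list size at least its (uncolored) degree plus one. For the list size, note $x$ is adjacent to every other node of $X$ and $|X|=\Omega(\psi)$ of these are uncolored and not yet processed when we reach the final white instance, so $x$ has at least $\Omega(\psi)$ uncolored neighbors and therefore a palette of size $\Omega(\psi)$ — this is exactly the ``many uncolored neighbors $\Rightarrow$ large palette'' observation from \Cref{ssec:AlgorithmHighDegree}. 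Next, the nodes of $C\setminus X$: each such node is adjacent to all of $X$, and the nodes of $X$ are stalled (colored after $C\setminus X$), so each node of $C\setminus X$ has $\Omega(\psi)$ non-participating neighbors when we color $C\setminus X$, making it white with list size $\Omega(\psi)$ by the same palette argument. Finally I would observe these two sub-instances together are covered by \Cref{lem:graytoneColorable}: in fact in both sub-steps every relevant node is white (there are no merely-gray nodes here), so the claim ``white or gray with list size $\Omega(\psi)$'' follows, and the whole step costs $O(1)$ list-coloring instances hence $O(\logstar n)$ rounds via \Cref{lem:listColoring} once $\psi=\Delta^{1/3}=\omega(\log^7 n)$, i.e.\ $\Delta=\omega(\log^{21}n)$.

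The main obstacle, and the place where care is needed, is making sure the $\Omega(\psi)$ lower bound on palette sizes genuinely survives all of the earlier coloring steps: when we reach Step~5, nodes of $\cV_*$ have already been colored (Step~4), and \textsc{SlackGeneration} itself may have colored some nodes of $C$. So the argument must use that \Cref{lem:slackGenerationProperties}~(ii) already accounts for those colorings — it guarantees $\Omega(\psi)$ \emph{uncolored} unit-slack vertices \emph{in $G[\cV_*\cup\cO]$}, i.e.\ after Step~3 and with sparse neighbors discounted — and that coloring $\cV_*$ in Step~4 only removes colors from palettes of dense nodes in a way already subsumed by the unit-slack bookkeeping, since the slack in \Cref{lem:slackGenerationProperties}~(ii) is measured in the subgraph $G[\cV_*\cup\cO]$ where $\cV_*$-neighbors are present. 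Concretely I would phrase it so that the set $X$ is used both as the source of unit-slack (via the lemma) and as the source of large palettes (via $|X|=\Omega(\psi)$ uncolored mutual neighbors), and check that no step between Step~3 and the end of Step~5 colors more than $|X|-\Omega(\psi)$ of $X$ — which holds because nothing outside $C$ is adjacent to a controlled number of $X$'s vertices except special-type neighbors, and ordinary ACs have none (\Cref{obs:difficultNoSpecial}), while $C$ itself is only touched in Step~5. Once this accounting is pinned down, the rest is the routine two-layer white/gray argument.
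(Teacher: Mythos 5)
Your proposal is correct and follows essentially the same route as the paper: invoke \Cref{lem:slackGenerationProperties}~(ii) to get the set $X$ of $\Omega(\psi)$ uncolored unit-slack vertices, note that every node of the ordinary AC is adjacent to all of $X$ since the AC is a clique, and run a two-layer ($C\setminus X$ first, then $X$) list coloring with lists of size $\Omega(\psi)$. The only slip is terminological: you label $C\setminus X$ as white because its $X$-neighbors are stalled within the step, whereas the paper's Definition classifies a node as white only via unit-slack or a neighbor colored in a later \emph{numbered} step, so $C\setminus X$ is gray (with white neighbors in $X$); this does not affect the argument, since your order of coloring matches \Cref{lem:graytoneColorable} anyway.
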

	\begin{proof}
	By \Cref{lem:slackGenerationProperties} (ii), each ordinary AC has $\Omega(\psi)$ uncolored nodes with unit-slack in $G[\cV_*\cup \cO]$. These nodes are white when we process ordinary ACs. The remaining uncolored nodes of the AC are then gray, since $C$ is a clique.
\end{proof}

The following list sizes follow as each node of an AC is adjacent to each of the uncolored unit-slack nodes inside the clique.  
\paragraph{List sizes:}
\begin{compactitem}
\item Ordinary ACs (gray instance): $\Omega(\psi)$,
   \item Ordinary ACs (white instance): $\Omega(\psi)$.
\end{compactitem}

\subsubsection{Step 6: Runaway Cliques}
	\begin{lemma}
	W.h.p, all nodes in runaway ACs are white or gray with list size at least $\phi/2$.
 \label{lem:runawaylists}
	\end{lemma}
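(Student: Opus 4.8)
The plan is to mimic the two-layer strategy already used for ordinary ACs (Step 5), but where the toehold nodes are supplied by the escape node $e_C$ rather than by \textsf{SlackGeneration} directly. Fix a runaway AC $C$ with escape node $e_C$. By definition $e_C$ is special for $C$, hence it has at least $\phi$ neighbors in $C$; call this set $N_C(e_C)$. By \Cref{lem:slackGenerationProperties}$(iii)$, at most half of the nodes in $N_C(e_C)$ get colored during \textsf{SlackGeneration}, so the set $X_C \subseteq N_C(e_C)$ of still-uncolored neighbors of $e_C$ in $C$ has $|X_C| \ge \phi/2$. Since $e_C \in \cE$ is only colored in Step~9, every node of $X_C$ has a neighbor (namely $e_C$) that is colored in a strictly later step; therefore every node of $X_C$ is white in Step~6. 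Every other uncolored node of $C$ is adjacent to all of $X_C$ (recall that, by \Cref{obs:difficultNoSpecial}, difficult ACs are cliques), so it has at least $|X_C| \ge \phi/2$ white neighbors and is thus gray. Hence all uncolored nodes of $C$ are white or gray, and by \Cref{lem:graytoneColorable} the clique is colored via two $(deg+1)$-list coloring instances.

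For the list sizes, I would invoke the same adjacency observation: in the gray instance, a gray node $v \in C$ has all of $X_C$ as non-participating neighbors in $G[\text{gray nodes of }C]$ (the white nodes of $X_C$ are not in the gray instance), so its uncolored degree in that instance is at most $|C \setminus X_C| - 1 \le |C| - \phi/2$; since $C$ is a clique on at least $\Delta - \psi$ nodes and the palette of $v$ started at $\Delta$ colors, of which at most $\deg_G(v) \le |C| - 1 + e(v)$ can be removed, the remaining list has size at least $\Delta - (|C|-1) - e(v) + (\text{stalled neighbors})$; cleaner is to argue directly that $v$'s list size is at least the number of its uncolored neighbors plus one, and among those uncolored neighbors at least $\phi/2 - 1$ lie in $X_C$ which will be stalled — so the effective list in the gray instance is $\ge \phi/2$. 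In the white instance, a node of $X_C$ is adjacent in $G$ to all of $X_C \setminus \{v\}$ and to $e_C$; its list relative to its uncolored (white-instance) degree has slack at least one because $e_C$ is colored later, and the list itself contains all colors not used by the (few) already-colored neighbors, which is at least $\Delta - a(v) - e(v) \ge \Delta - 11\eps\Delta \gg \phi/2$ by \Cref{obs:outsideAndAntiDegree}; since we only claim a lower bound of $\phi/2$, this is comfortably enough.

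The main obstacle I anticipate is bookkeeping the palette lower bound precisely rather than the white/gray structure, which is essentially identical to Step~5. One must be careful that "list size $\ge \phi/2$" refers, per the convention fixed after \Cref{lem:graytoneColorable}, to the minimum list size in \emph{both} the gray and the white instance; the gray instance is the tighter one, and there the bound $\phi/2$ comes exactly from $|X_C| \ge \phi/2$ stalled common neighbors, so the crucial input is \Cref{lem:slackGenerationProperties}$(iii)$ together with the definition of "special" (at least $\phi$ neighbors in $C$). A secondary subtlety: different runaway ACs may share the same escape node $e_C$, but this causes no problem here because each such AC stalls its \emph{own} set $X_C$ of common neighbors and colors its own clique; the escape node itself is untouched until Step~9, where \Cref{lem:slackGenerationProperties}$(i)$ guarantees it $\Omega(\Delta)$ slack. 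I would close by noting the constant round count: two $(deg+1)$-list coloring instances, each run via \Cref{lem:listColoring} in $O(\logstar n)$ rounds since all lists have size $\phi/2 = \Delta^{2/3}/4 = \omega(\log^{7} n)$ when $\Delta = \omega(\log^{21} n)$.

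\paragraph{List sizes:}
\begin{compactitem}
\item Runaway ACs (gray instance): $\phi/2$,
\item Runaway ACs (white instance): $\phi/2$.
\end{compactitem}
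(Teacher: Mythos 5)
Your proof is correct and follows essentially the same route as the paper: use the escape $e_C$'s $\ge\phi$ neighbors in $C$, invoke \Cref{lem:slackGenerationProperties}(iii) to keep $\ge\phi/2$ of them uncolored and hence white (stalled until Step~9), and observe that $C$ being a clique makes the rest gray with lists of size $\ge\phi/2$ from adjacency to $X_C$. Your extra remark that distinct runaway ACs sharing an escape causes no conflict (each stalls only its own $X_C$ and the escape itself is untouched until Step~9) is a reasonable clarification the paper leaves implicit.
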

	\begin{proof}
	A runaway AC $C$ with escape $e_C$ contains at least $\phi$ neighbors of $e_C$ by definition. By \Cref{lem:slackGenerationProperties} (iii), at least $\phi/2$ of them remain uncolored during \textsf{SlackGeneration}, these nodes are white, because $e_C$ is stalled and only colored in the last step (Step~9). The remaining uncolored nodes of $C$ are then gray, since $C$ is a clique.
\end{proof}

\paragraph{List sizes:}
\begin{compactitem}
    \item Runaway AC (gray instance): $\phi/2$, 
    \item Runaway AC (white instance): $\phi/2$.
\end{compactitem}

\subsubsection{Step~7: Coloring Nice ACs} 
We color the nice ACs in the following order. 
\begin{compactenum}
	\item Nice ACs containing an escape, or a protector,
	\item Nice ACs with no non-edge and no escape nor protector node,
	\item Nice ACs with a non-edge.
\end{compactenum}
The first two steps reduce to few $(deg+1)$-list coloring instances via the next lemma.
\begin{lemma} 
	\label{lem:niceAC1}
	All nodes of a nice AC that contains an escape/protector or has no  escape/protector and no non-edge are white or gray with list size at least $\Delta/2$. 
\end{lemma}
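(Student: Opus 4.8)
\textbf{Proof plan for \Cref{lem:niceAC1}.}
The plan is to handle the two subcases together, producing in each case a \emph{toehold} inside the AC that renders every node of the AC white or gray, and then to verify that the list sizes in the resulting two $(deg+1)$-list coloring instances of \Cref{lem:graytoneColorable} are at least $\Delta/2$. Recall that a nice AC $C$ has, by \Cref{lem:acd} (2) and (3), at least $(1-4\eps)\Delta$ internal neighbors for each of its nodes, and, by \Cref{obs:outsideAndAntiDegree}, outside degree at most $4\eps\Delta$ and anti-degree at most $7\eps\Delta$; with $\eps=1/172$ these bounds are all far below $\Delta/2$.

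First I would treat a nice AC $C$ that contains an escape $e$ or a protector $p$; call this node $z$. Since $z\in\cP\cup\cE$, it is colored only in Step~8 or Step~9, i.e.\ strictly later than Step~7, so at the time we process $C$ the node $z$ is still uncolored and stalled. Every node $u\in C$ with $u\in N(z)$ therefore has a not-yet-to-be-colored-in-this-step neighbor, so $u$ is white; since $|N_C(z)|\ge |C|-a(z)-\text{(colored during SlackGeneration)}$, and $z$ was not a participant of \textsf{SlackGeneration} when $z\in\cP$ (protectors are excluded from $\cV_*$) — more robustly, even allowing that some internal neighbors of $z$ got colored in Step~3 when $C\subseteq\cR$ does not happen here since nice ACs are disjoint from $\cR$ — at least $|C|-a(z)\ge (1-7\eps)\Delta>\Delta/2$ nodes of $C$ are white, and the rest are gray because $C$ is "almost" a clique (any two nodes of $C$ have a common neighbor in $C$, since each misses at most $a(u)+a(w)\le 14\eps\Delta<|C|$ of the others). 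Then every node of $C$ is white or gray, and since each node of $C$ is adjacent to at least $|C|-a(u)>\Delta/2$ of the $>\Delta/2$ white nodes, both list coloring instances from \Cref{lem:graytoneColorable} have list size at least $\Delta/2$.

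Second I would treat a nice AC with no non-edge and no node of $\cP\cup\cE$. Here $C$ is a clique; I claim it must contain a node of degree strictly less than $\Delta$, which then serves as an a-priori-slack toehold. Suppose not: every node of $C$ has degree exactly $\Delta$. Since $|C|\ge(1-\eps)\Delta$ and $C$ is a clique, $|C|\le\Delta+1$ (a node of degree $\Delta$ in a clique of size $s$ needs $s-1\le\Delta$). If $|C|=\Delta+1$ then $C$ is a $(\Delta+1)$-clique, excluded by hypothesis. If $|C|\le\Delta$ then a node $v\in C$ of degree $\Delta$ has a neighbor $w\notin C$; since the AC is not easy, $v$ is not simplicial, so $N(v)$ is not a clique, hence $v$ has two non-adjacent neighbors — but $v$'s neighbors inside $C$ are pairwise adjacent, so at least one of the non-adjacent pair lies outside $C$, i.e.\ $e(v)\ge 1$; iterating this with the fact that $|C|\ge(1-\eps)\Delta$ forces $|C|=\Delta$ and $e(v)=1$ for the node, and then consider the unique outside neighbor's list coloring behavior — more simply: if $|C|=\Delta$ and all internal degrees force... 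I would instead argue directly that $|C|\le\Delta$ and every node has exactly one outside neighbor is impossible by a parity/matching argument, or cleaner: pick $v\in C$; it has $|C|-1$ neighbors in $C$, so $e(v)=\Delta-(|C|-1)=\Delta-|C|+1$, which is the \emph{same} for all $v\in C$; if this is $0$ then $C$ is a $(\Delta+1)$-clique (excluded); if it is $\ge1$, then since the AC is nice-but-not-easy it has no simplicial node, fine, but now $C$ is a clique component-like structure with a perfect "outside" regularity — here I would simply invoke that a connected graph where a clique $C$ of size $|C|\le\Delta$ has all internal nodes of degree exactly $\Delta$ and the graph is $\Delta+1$-clique-free cannot have $C$ be the whole picture; the clean statement is: \emph{some node of $C$ has degree $<\Delta$, for otherwise $C$ together with the regularity would need $|C|=\Delta+1$.} Granting the toehold node $v_0$ of degree $<\Delta$: $v_0$ has unit-slack a priori, so $v_0$ is white; every other node of $C$ is adjacent to $v_0$ (clique), hence gray (if not already white). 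Thus all of $C$ is white or gray, and as before each node is adjacent to all $|C|-1\ge(1-\eps)\Delta-1>\Delta/2$ other nodes of $C$, including $v_0$ — but one white node is not enough for a list size of $\Delta/2$ directly; instead the list size follows because in the \emph{gray} instance each gray node has unit-slack from $v_0$ and its remaining degree within the gray set is at most $|C|-1<\Delta$ while its palette has $\ge\Delta-(\text{anti-degree contributions})$... I would make this precise by noting that before Step~7 no neighbor of a node in $C$ outside $C\cup N(v_0)$-structure has been colored except sparse/ordinary/runaway nodes, and $e(u)\le4\eps\Delta$, so each node of $C$ retains a palette of size $\ge\Delta-4\eps\Delta-1>\Delta/2$, and its uncolored degree is bounded appropriately, giving list size $\ge\Delta/2$.

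The main obstacle I expect is the second subcase: cleanly proving that a non-easy nice clique with no special node of its own and no $(\Delta+1)$-clique must contain a node of degree strictly below $\Delta$ (equivalently, nailing the exact size/regularity case analysis so that the "right" toehold exists), and then carefully tracking which neighbors have already been colored in Steps~3--6 so that the claimed palette lower bound of $\Delta/2$ genuinely holds — this is where the precise values $\psi=\Delta^{1/3}$, $\phi=\Delta^{2/3}/2$, $\eps=1/172$, and the ACD guarantees of \Cref{lem:acd} and \Cref{obs:outsideAndAntiDegree} all have to line up.
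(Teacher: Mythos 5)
Your first subcase (AC contains an escape or protector $z$) matches the paper's argument: stall $z$, note that by \Cref{lem:acd}~(3) at least $(1-4\eps)\Delta$ nodes of $C$ are neighbors of $z$ and hence white, note that the anti-degree bound from \Cref{obs:outsideAndAntiDegree} gives every node of $C$ at least $(1-11\eps)\Delta \ge \Delta/2$ neighbors in that white set, and conclude via \Cref{lem:graytoneColorable}. That part is sound.

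The second subcase has a genuine logical error that also explains why you could not close the argument. You write "since the AC is not easy, $v$ is not simplicial" and later "since the AC is nice-but-not-easy it has no simplicial node." That is backwards. By the paper's definitions, an AC is \emph{nice} iff it is easy or contains a picked special node; a picked special node that is in an AC is exactly a protector or an escape. So a nice AC containing no protector and no escape \emph{must be easy}, and an easy AC with no non-edge \emph{must contain a simplicial node} $u$. This is the toehold, and it falls straight out of the definitions — no degree-regularity or parity argument is needed (and indeed such arguments cannot work in general: two $\Delta$-cliques joined by a perfect matching is exactly the counterexample, but it is classified as \emph{ordinary}, not nice, which is why the definitional route is the right one). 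Having the simplicial $u$, one then argues $u$ has zero outside degree: if $u$ had an outside neighbor $w$, then $w \in N(u)$ and $N(u)$ a clique would force $w$ to be adjacent to all of $C\setminus\{u\}$, contradicting \Cref{lem:acd}~(4). Hence $\deg(u)=|C|-1\le\Delta-1$ (since $C$ is a clique and the graph has no $(\Delta+1)$-clique), giving $u$ a priori slack. The gray set $C\setminus\{u\}$ has list size $|C|-1\ge\Delta/2$, and $u$ is colored last with at least one remaining color. Your attempted replacement — proving "some node of $C$ has degree $<\Delta$" by a direct structural/regularity argument — does not go through without the definitional chain, and the several "cleaner:" restarts in the write-up reflect exactly that obstruction.
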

\begin{proof}
	Protectors and escape nodes are stalled and colored after the nodes in nice ACs. Let $v$ be a protector or an escape node of $C$ and let $W\subseteq C$ be all neighbors of $v$ in $C$. Note that $W$ may contain protectors and escape nodes as well. Due to \Cref{lem:acd}, (3), $W$ has size at least $(1-4\epsilon)\Delta$. By \Cref{obs:outsideAndAntiDegree}, each node of $C$ has at most $7\eps\Delta$ anti-neighbors in $C$. Hence, any node in $C$ has at least $(1-11\eps)\Delta \ge \Delta/2$ neighbors in $W$. Hence, all nodes in $C\setminus (W\cup \cP\cup\cE)$ are gray with list size at least $\Delta/2$ and all nodes in $W\setminus \cP\cup\cE$ are white with list size at least $\Delta/2$.
	
	If a nice AC $C$ contains no non-edge, no protector, and no escape, it contains a simplicial node $u$, by the definition of nice ACs. Node $u$ must have zero outside degree. Furthermore, $C$ is a clique, and since by assumption the graph contains no $\Delta+1$-cliques, $u$ has at most $\Delta-1$ neighbors, all inside $C$. Then $u$ white, but since it is isolated in the graph it can be colored independently of the rest of the graph after the gray set $C\setminus \{u\}$ has been colored.
\end{proof}

For the remaining nice ACs we cannot immediately prove that they  only contain white and gray nodes. Instead, we manually create a white toehold for them by coloring a carefully selected pair of nodes with the same color. We then argue that all nodes in the AC are white or gray, which allows us to complete the coloring.

\textit{Toeholds for nice ACs:} For a nice AC $C$ with a non-edge and no escape/protector, let $(u_C, w_C)$ be any non-edge in $C$.

\begin{lemma}
	The set $\{(u_C,w_C)\}_C$ of pairs induces a $(deg+1)$-list coloring  instance with list size at least $\Delta/2$.
	\label{lem:niceAC3}
\end{lemma}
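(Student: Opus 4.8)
The plan is to bound the degree of each pair-vertex in the virtual graph $H$ and lower-bound each pair's palette, exactly as was sketched in Step~7 of the nutshell and mirroring the structure of \Cref{lem:guardedPair-coloring}. First I would set up notation: for each nice AC $C$ with a non-edge and no escape/protector, we have a pair $(u_C,w_C)$ consisting of two non-adjacent vertices of $C$; in $H$ this pair is adjacent to another pair $(u_{C'},w_{C'})$ iff one of $\{u_C,w_C\}$ is adjacent in $G$ to one of $\{u_{C'},w_{C'}\}$. A pair also "conflicts" with any already-colored node or any node in an adjacent pair of the same phase, just as in \Cref{lem:guardedPair-coloring}. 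The list assigned to a pair is the intersection of the two endpoints' current palettes.

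The \textbf{palette bound} is the first key step. Both $u_C$ and $w_C$ lie in $C$, so by \Cref{lem:acd}(3) each has at least $(1-4\eps)\Delta$ neighbors inside $C$, and by \Cref{obs:outsideAndAntiDegree} at most $7\eps\Delta$ anti-neighbors in $C$; hence they share at least $(1-4\eps)\Delta - 7\eps\Delta \ge (1-11\eps)\Delta > \Delta/2$ common neighbors in $C$ (using $\eps=1/172$). None of these common neighbors has been colored yet: nice ACs are processed before protectors, escapes, guarded ACs and (in the ordering fixed at the start of Step~7) this sub-case is handled after the escape/protector and simplicial sub-cases but the common neighbors are ordinary nice-AC vertices that are still uncolored. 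Each such uncolored common neighbor leaves at least one color available in the intersection of the two palettes — actually it is cleaner to argue directly: each of $u_C,w_C$ has at most $\Delta - (1-11\eps)\Delta = 11\eps\Delta$ colored-or-outside neighbors contributing forbidden colors, wait — instead I would argue via slack: both endpoints have more than $\Delta/2$ uncolored neighbors, so fewer than $\Delta/2$ of the $\Delta$ colors are used up in each palette, and the intersection of two sets each of size $>\Delta/2$ out of $\Delta$ colors has size $>0$; more carefully, the number of colors forbidden for the \emph{pair} is at most (colors used by neighbors of $u_C$) $+$ (colors used by neighbors of $w_C$) minus the over-count, but the direct bound is: at most $2\cdot 11\eps\Delta$ colors are used among all neighbors of $u_C$ and $w_C$ outside their common uncolored neighborhood — this is $\ll \Delta/2$, leaving a list of size at least $\Delta/2$.

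The \textbf{degree bound} in $H$ is the second key step and, as in the nutshell, it follows from \Cref{obs:outsideAndAntiDegree}: each of $u_C,w_C$ has outside degree at most $4\eps\Delta$, so the pair has at most $8\eps\Delta$ neighbors outside $C$ in $G$; since distinct pairs live in distinct ACs, each edge of $H$ incident to $(u_C,w_C)$ is witnessed by an outside-$C$ edge of $u_C$ or $w_C$, giving $\deg_H((u_C,w_C)) \le 8\eps\Delta$. With $\eps=1/172$ this is at most $\Delta/21 \ll \Delta/2$, so the palette of size $\ge \Delta/2$ strictly exceeds the $H$-degree plus the number of same-phase conflicting colored nodes (also $O(\eps\Delta)$), and the pairs form a genuine $(deg+1)$-list coloring instance with list size $\ge \Delta/2$. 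Coloring $H$ therefore same-colors each pair $(u_C,w_C)$ using a color in both endpoints' palettes.

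The \textbf{main obstacle} I anticipate is not any single inequality but the bookkeeping of \emph{which} nodes are already colored when this instance is created — one must be careful that the $>\Delta/2$ common neighbors of $u_C$ and $w_C$ inside $C$ are genuinely still uncolored at this point (they are, since within Step~7 the non-edge sub-case only colors the pairs themselves, not arbitrary AC-interior nodes, and earlier steps colored only $\cV_*,\cO,\cR$ and nice ACs of the other two sub-types, none of which touch the interior of a non-edge nice AC), and that the "conflict" count from same-phase adjacent pairs is correctly folded into the $(deg+1)$ accounting rather than double-subtracted. Once the pair is colored, the common neighbors gain permanent unit-slack, and a two-layer white/gray argument à la \Cref{lem:graytoneColorable} finishes the AC — but that is the content of the next lemma, not this one.
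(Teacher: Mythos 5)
Your proof is correct and follows essentially the same approach as the paper: bound both the $H$-degree and the already-colored conflicting nodes by the outside degrees of $u_C$ and $w_C$ (at most $4\eps\Delta$ each via \Cref{obs:outsideAndAntiDegree}), using that no node inside $C$ has been colored before the pairs are processed. The detour through the $(1-11\eps)\Delta$ common-neighbor count is unnecessary for this lemma's palette bound (it belongs to the \CONGEST implementation argument of \Cref{lem:niceAC4}); the paper simply subtracts the at most $8\eps\Delta$ outside-$C$ conflicts directly from the $\Delta$ initially available colors.
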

\begin{proof}
	We say that a node of $G$ \emph{conflicts} with a pair if the node is already colored or is contained in an adjacent pair of the same phase.
	Consider a pair $(u_C,w_C)$ where $C$ is nice. $C$ is not colored during slack generation and the nodes in the pair are the only nodes in $C$ colored while coloring the pairs of such nice ACs as $C$ does not contain a protector. Thus, the only nodes that conflict with the pair are the outside neighbors of either node, of which there are at most $2\cdot 4\epsilon\Delta \le \Delta/2$, by \Cref{obs:outsideAndAntiDegree}. Hence, the number of nodes that conflict with a pair is significantly less than $\Delta$, the number of colors initially available. Coloring the pairs reduces to $(deg+1)$-list coloring with palettes of size $\geq \Delta/2$.
\end{proof}

\begin{lemma}
	\label{lem:niceAC4}	
	The pairs of nice ACs can be colored in the \CONGEST model in $O(\log^3 \log n)$ rounds, and even in $O(\log^* n)$ rounds when $\Delta = \omega(\log^7 n)$.
\end{lemma}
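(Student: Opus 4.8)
The plan is to view the pair set $\{(u_C,w_C)\}_C$ from \Cref{lem:niceAC3} as a $(deg+1)$-list coloring instance on the virtual graph $H$ (one vertex $P_C$ per pair, two pairs adjacent iff some of their endpoints are adjacent in $G$, list of $P_C$ equal to $L(u_C)\cap L(w_C)$), and to run the \CONGEST list-coloring algorithm of \Cref{lem:listColoring} on $H$, simulated over the physical network $G$. Two things need checking: that one communication round on $H$ can be emulated by $O(1)$ \CONGEST rounds on $G$ without blowing up message sizes, and that the list size $\ge \Delta/2$ guaranteed by \Cref{lem:niceAC3} is large enough to reach the fast regime of \Cref{lem:listColoring} once $\Delta=\omega(\log^7 n)$. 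The latter is immediate: $\Delta=\omega(\log^7 n)$ gives $\Delta/2=\omega(\log^7 n)$ under the paper's $\omega(\cdot)$ convention, and otherwise we still get the $O(\log^3 \log n)$ bound.

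For the emulation, consider a nice AC $C$ with pair $(u_C,w_C)$. Although $u_C$ and $w_C$ are non-adjacent, \Cref{lem:acd}~(2)--(3) force them to share a common neighborhood $R_C\subseteq C$ of size at least $(1-11\eps)\Delta$, which for our small constant $\eps$ comfortably exceeds $4\eps\Delta$; the nodes of $R_C$ will act as relays, and since the ACs are vertex-disjoint each node is a relay for at most one pair. Designate $u_C$ as the representative of $P_C$. As preprocessing ($O(1)$ rounds), each endpoint asks its outside neighbors whether they are pair endpoints and, if so, for their pair's name; then $w_C$ ships its forbidden colors (at most $e(w_C)\le 4\eps\Delta$ of them by \Cref{obs:outsideAndAntiDegree}, since no node of $C$ is colored yet) to $u_C$, one color per distinct relay of $R_C$, so $u_C$ learns $P_C$'s list $L(u_C)\cap L(w_C)$. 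Every $H$-edge at $P_C$ stems from a $G$-edge joining an endpoint of $P_C$ to an endpoint of another pair; fix one such $G$-edge as its \emph{canonical link}. Because each vertex lies in at most one pair, every $G$-edge is the canonical link of at most one $H$-edge, so all direct traffic uses each $G$-edge at most once per round. The $H$-edges of $P_C$ whose canonical link meets $w_C$ rather than $u_C$ number at most $e(w_C)\le 4\eps\Delta<|R_C|$, so $u_C$ forwards their messages to, and receives the replies from, $w_C$ through a fixed injective assignment to relays in $R_C$ (and symmetrically the other pair relays on its own $w$-endpoint), costing $O(1)$ rounds and $O(1)$ messages per relay edge. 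Hence one round of the $H$-algorithm is emulated in $O(1)$ \CONGEST rounds on $G$ with $O(\log n)$-bit messages.

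With the emulation in hand, \Cref{lem:listColoring} applied to $H$ and invoked with the parameter $n$ (a valid upper bound on $|V(H)|$, so that its w.h.p.\ and runtime guarantees hold in terms of $n$) properly colors $H$ in $O(\log^3 \log n)$ rounds, and in $O(\logstar n)$ rounds when $\Delta=\omega(\log^7 n)$; by the emulation these are the round complexities on $G$. Finally, a proper coloring of $H$ assigns each $P_C$ a color $c_C\in L(u_C)\cap L(w_C)$ with $c_C\ne c_{C'}$ whenever $P_C$ and $P_{C'}$ are $H$-adjacent. Coloring both $u_C$ and $w_C$ with $c_C$ is then a valid partial coloring of $G$: $c_C$ avoids the colors of the already-colored neighbors of $u_C$ and of $w_C$; $u_C$ and $w_C$ are non-adjacent; and any $G$-edge between endpoints of two distinct pairs is witnessed by an $H$-edge, hence its endpoints receive different colors.

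The step I expect to be the main obstacle is precisely this \CONGEST emulation of $H$: a virtual vertex is a pair of \emph{non-adjacent} real nodes that must jointly store a list and communicate with up to $O(\eps\Delta)$ neighboring virtual vertices, so everything has to be routed through the AC without creating congestion on any $G$-edge. The structural ingredient that makes this work is the large common neighborhood $R_C$ inside $C$ supplied by the ACD; the only care needed is to verify that each physical node serves as a relay, and as a canonical-link endpoint, for only a constant number of virtual objects.
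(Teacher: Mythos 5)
Your proposal is correct and follows essentially the same route as the paper: identify the large common neighborhood $R_C$ (of size at least $(1-11\eps)\Delta \ge \Delta/2$) inside each nice AC, use these nodes as relays so that the pair $(u_C,w_C)$ can jointly act as a single virtual vertex, simulate the \CONGEST list-coloring algorithm of \Cref{lem:listColoring} on the virtual graph $H$ with $O(1)$-factor slowdown, and invoke the $O(\log^* n)$-regime once the palette size $\Delta/2$ is $\omega(\log^7 n)$. You supply more explicit congestion bookkeeping (canonical links, injective relay assignment, disjointness of the relay sets across ACs) than the paper, which only states that the common neighbors ``provide the bandwidth'' and that external messages can be forwarded in two rounds, but the underlying argument is the same.
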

\begin{proof}
	Both nodes of the pair $(u_C, w_C)$ have
	at least $(1-4\epsilon)\Delta$ neighbors in $C$, so they have at least $(1-4\epsilon)\Delta - (|C| - (1-4\epsilon)\Delta) > (1-11\epsilon)\Delta \ge \Delta/2$ common neighbors in $C$.
	They provide the bandwidth to transmit to one node all the colors adjacent to the other node.  Also, all messages to and from $u_C$ \emph{vis-a-vis} its external neighbors can be forwarded in two rounds. Hence, we can simulate any \CONGEST coloring algorithm on the pairs with $O(1)$-factor slowdown, in particular, we can simulate the algorithm from \Cref{lem:listColoring}. Additionally, \Cref{lem:niceAC3} shows that the palettes of the list coloring instance have at least $\Delta/2=\omega(\log^7 n)$ colors if $\Delta=\omega(\log^{7}n)$. In this case, the algorithm runs in $O(\logstar n)$ rounds.
\end{proof}

After coloring the pairs, we can color the remaining nice ACs. 

\begin{lemma}
	\label{lem:niceAC5}
	After coloring the pairs, the nodes in respective nice  ACs are white or gray with list sizes at least $\Delta/3$.
\end{lemma}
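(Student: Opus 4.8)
The plan is to show that once the pair $(u_C,w_C)$ has been assigned a common color, the common neighborhood of $u_C$ and $w_C$ inside $C$ becomes a set of white toehold nodes, and the rest of $C$ is gray because these toeholds are numerous and adjacent to everyone in the clique (modulo the few anti-neighbors). Concretely, first I would define $W_C$ to be the set of \emph{uncolored} common neighbors of $u_C$ and $w_C$ in $C$. By the argument already used in \Cref{lem:niceAC4}, $u_C$ and $w_C$ have at least $(1-11\eps)\Delta$ common neighbors in $C$; we must subtract the ones that became colored earlier. The only earlier coloring that touches $C$ is \textsf{SlackGeneration} (nice ACs do not participate) — wait, nice ACs are \emph{not} in $\cV_*\cup\cO\cup\cR$, so no node of $C$ is colored in Step~3 — together with the pair-coloring itself, which colors only $u_C,w_C$ (by \Cref{obs:difficultNoSpecial}-style reasoning, $C$ contains no protector/escape and hence no node colored in the pair phases of other ACs, and no non-edge partner of another nice AC since each such pair lies inside its own AC). Hence $|W_C|\ge (1-11\eps)\Delta$, easily $\ge \Delta/3$.

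Next I would argue the white/gray classification. Each node $x\in W_C$ now has two same-colored neighbors $u_C$ and $w_C$, so it has permanent slack of at least one, i.e. it is white; moreover its list size is at least $|W_C\setminus\{x\}|$ minus already-used colors, but since all of $W_C$ is still uncolored and pairwise adjacent (clique), each $x\in W_C$ sees at least $|W_C|-1 \ge \Delta/3$ uncolored neighbors and therefore has a palette of size $\ge \Delta/3$ — this is the "many uncolored neighbors $\Rightarrow$ large palette" observation from \Cref{ssec:AlgorithmHighDegree}. For the remaining uncolored nodes $C\setminus(W_C\cup\{u_C,w_C\})$: such a node $y$ is adjacent to all but its at most $7\eps\Delta$ anti-neighbors in $C$ (\Cref{obs:outsideAndAntiDegree}), hence adjacent to at least $|W_C|-7\eps\Delta \ge \Delta/3$ nodes of $W_C$, all of which are white and uncolored. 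So $y$ is gray with list size $\ge \Delta/3$ as well. Then \Cref{lem:graytoneColorable} applies: color the gray set first, then the white set $W_C$, each as a $(deg+1)$-list coloring instance, and by the definition following that lemma the list size of the whole nice AC is $\ge \Delta/3$.

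I would also note the one subtlety flagged in the nutshell: "a bit more care is needed as there may be further non-edges," meaning $C$ is not a clique. This only helps — a non-edge inside $C$ can only \emph{reduce} a node's uncolored degree — but one must be careful that $W_C$ is still a clique-like dense set so that each $x\in W_C$ genuinely has $\ge \Delta/3$ uncolored neighbors; this follows because anti-degrees are at most $7\eps\Delta$ and $|W_C|\ge(1-11\eps)\Delta$, so any two nodes of $W_C$ have $\ge (1-25\eps)\Delta$ common neighbors inside $W_C$, comfortably above $\Delta/3$ for $\eps=1/172$. The main obstacle, such as it is, is bookkeeping the constants: ensuring that after all the $\eps\Delta$ losses (anti-neighbors, the size gap $|C|-(1-4\eps)\Delta$, outside neighbors) the bound still lands at the claimed $\Delta/3$, and double-checking that no node of $C$ other than $u_C,w_C$ was colored before this step so that $|W_C|$ is not secretly depleted — once those are pinned down, the lemma is immediate from \Cref{lem:graytoneColorable}.
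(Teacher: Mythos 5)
Your proof takes essentially the same route as the paper's: define $W$ as the common neighborhood of $u_C,w_C$ inside $C$, observe these nodes are white by permanent slack, and use the anti-degree bound $\le 7\eps\Delta$ from \Cref{obs:outsideAndAntiDegree} to show every node of $C$ has $\ge \Delta/3$ neighbors in $W$, giving the gray/white classification and list sizes via \Cref{lem:graytoneColorable}. The paper's version is slightly terser (it reuses the $\Delta/2$ common-neighbor bound from \Cref{lem:niceAC4} directly and does not spell out the bookkeeping that no other node of $C$ was colored earlier), but the argument is the same and your extra care about the non-edge subtlety is consistent with what the paper implicitly assumes.
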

\begin{proof}
	Consider such a nice AC $C$. 
	Recall from the proof of \cref{lem:niceAC4} that there are at least $\Delta/2$ common neighbors of the nodes in the pair $(u_C,w_C)$ of a nice AC $C$. Denote these nodes by $W$ and observe that they are white.  As the anti-degree is bounded by $7\eps\Delta$ (see \Cref{obs:outsideAndAntiDegree}), each node of $C$  has at least $\Delta/2 - 7\epsilon\Delta = \Delta/3$ neighbors in $W$. Thus all nodes in $C\setminus W$ are gray with the claimed list size. The latter argument also shows that the list size for nodes in $W$ is at least $\Delta/3$. 
\end{proof}

\paragraph{List sizes:}
\begin{compactitem}
	\item Nice ACs containing an escape, or a protector (gray instance): $\Delta/2$ 
	\item Nice ACs containing an escape, or a protector (white instance): $\Delta/2$  
	\item Nice ACs with no non-edge and no escape nor protector node (gray instance): $|C|\geq \Delta/2$.  
	\item Nice ACs with no non-edge and no escape nor protector node (white instance): Nodes are simplicial, have zero outside degree and are colored independently from the rest of the graph. 
	\item Nice ACs with a non-edge (pairs): $\Delta/2$. 
	\item Nice ACs with a non-edge (gray): $\Delta/3$.
	\item Nice ACs with a non-edge (white): $\Delta/3$.
\end{compactitem}

\subsubsection{Step~8: Coloring Guarded ACs Including their Protectors}
Similarly to the last processed class of nice ACs, we cannot immediately argue that all nodes in guarded ACs are white or gray.  Instead, we first manually create a white toehold for them by coloring a carefully selected pair of nodes with the same color. We then argue that all remaining nodes in the AC are gray, which allows us to complete the coloring.

\textit{Toehold for guarded ACs:}  For a guarded AC $C$, let $p_C$ be its protector and $u_C$ be an arbitrary uncolored non-neighbor of $p_C$ in $C$. Node $u_C$ exists as all nodes of $C$ are yet to be colored, $p_C\notin C$ has at most $(1-2\eps\Delta)$ neighbors in $C$ by \Cref{lem:acd} (4), and $|C|\geq (1-\eps)\Delta$. 

\lemGuardedPairs
\begin{proof}[Repeated proof]
A node of $G$ \emph{conflicts} with a pair $(u_C,p_C)$ of a guarded AC $C$ if the node is already colored or is contained in an adjacent pair of the same phase.
	Consider a guarded AC $C$. As $C$ has no non-edges and its size is at least $\Delta-\psi$ node $u_C$ has at most $\psi$ neighbors outside $C$.  By definition, node $p_C$ has at most $\Delta - \phi$ neighbors outside $C$. Due to \Cref{obs:difficultNoSpecial}, $C$ does not contain any protector of another AC. Hence, no nodes of $C$ (besides $u_C$) are colored in this phase as $C$. Thus, at most $\Delta - \phi + \psi \le \Delta - \phi/2$ nodes conflict with the pair $(u_C,p_C)$. The number of nodes that conflict with a pair is significantly less than $\Delta$, the number of colors initially available. Thus, the problem of coloring the pairs reduces to $(deg+1)$-list coloring with a palette of size at least $\phi/2$.
\end{proof}

To color the pairs via \Cref{lem:listColoring} in $O(\logstar n)$ rounds, we only require $\Delta=\omega(\log^{14} n)$, as the list size of the pairs is at least $\phi/2=\omega(\log^7n)$ for that choice of $\Delta$. 
\begin{lemma}
	\label{lem:guardedPairs-col-congest}
	The pairs of guarded ACs can be colored in the \CONGEST model in $O(\log^3 \log n)$ rounds, and even in $O(\log^* n)$ rounds when $\Delta = \omega(\log^{14} n)$.
\end{lemma}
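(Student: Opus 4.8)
\textbf{Proof plan for \Cref{lem:guardedPairs-col-congest}.}
The plan is to mimic the argument used for the pairs of nice ACs in \Cref{lem:niceAC4}, namely to show that the list coloring instance on the virtual graph $H$ whose vertices are the pairs $\{(u_C,p_C)\}_C$ can be simulated on the original graph $G$ in the \CONGEST model with only an $O(1)$-factor slowdown, and then invoke the list coloring algorithm of \Cref{lem:listColoring}. Combined with the palette-size bound $\phi/2$ from \Cref{lem:guardedPair-coloring}, the claimed runtimes follow: $O(\log^3\log n)$ rounds in general, and $O(\logstar n)$ rounds once the lists are of size $\omega(\log^7 n)$, which holds when $\phi/2=\Delta^{2/3}/4=\omega(\log^7 n)$, i.e.\ when $\Delta=\omega(\log^{21/2} n)$ --- and certainly when $\Delta=\omega(\log^{14}n)$ as stated.

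The key steps, in order: (1) Identify, for each guarded AC $C$, a large set $W_C\subseteq C$ of common uncolored neighbors of $u_C$ and $p_C$ inside $C$; since $u_C\in C$ has at least $(1-4\eps)\Delta$ neighbors in $C$ (\Cref{lem:acd}(3)) and $p_C$ has at least $\phi$ neighbors in $C$ (definition of special, together with \Cref{lem:slackGenerationProperties}(iii) that $\geq\phi/2$ of them stay uncolored), and $|C|\le(1+3\eps)\Delta$, their common uncolored neighborhood in $C$ has size $\Omega(\phi)$. (2) Use $W_C$ as a communication backbone: the $\Omega(\phi)$ vertex-disjoint length-two paths $u_C$--$w$--$p_C$ through $w\in W_C$ let $u_C$ and $p_C$ exchange, in $O(1)$ rounds, all information needed (current palettes, random bits, coordination messages) to act as a single virtual node, just as the $\Delta/2$ common neighbors did in \Cref{lem:niceAC4}. (3) Handle communication of the virtual node $(u_C,p_C)$ with its $H$-neighbors: an edge of $H$ arises from a $G$-edge incident to $u_C$ or to $p_C$; messages along such edges are delivered directly in one round, with at most an $O(1)$ blow-up to route everything through whichever of the two endpoints is incident to the relevant $G$-edge, using the $W_C$ backbone again when the information must reach the other endpoint. (4) Bound the degree of $H$ so that the simulated instance is a genuine $(deg+1)$-list coloring instance: the $H$-degree of $(u_C,p_C)$ is at most the number of $G$-neighbors of $u_C$ outside $C$ plus those of $p_C$ outside $C$, which is $O(\eps\Delta)+ (\Delta-\phi) < \Delta$, consistent with the palette bound $\phi/2$ from \Cref{lem:guardedPair-coloring}. (5) Conclude by running the algorithm of \Cref{lem:listColoring} on $H$ through this simulation, paying the $O(1)$ slowdown.

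The main obstacle I expect is Step (3), the routing of messages between the virtual node $(u_C,p_C)$ and its neighbors in $H$. Unlike the nice-AC case, where both endpoints of the pair lay inside the same clique and external neighbors were few (outside degree $O(\eps\Delta)$), here $p_C$ is a \emph{special} node that may have up to $\Delta-\phi$ neighbors outside $C$, and some of those neighbors could themselves be $u_{C'}$ or $p_{C'}$ for a different guarded AC $C'$. One must check that every such cross-edge of $H$ corresponds to an actual $G$-edge incident to an endpoint of each pair (it does, by construction of $H$), so a single \CONGEST round suffices to exchange one message per $H$-edge directly between the incident $G$-endpoints, and then one more $O(1)$-round phase over the $W_C$ (resp.\ $W_{C'}$) backbones propagates that message to the partner endpoint inside each pair. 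A secondary, more bookkeeping-type point is that \Cref{obs:difficultNoSpecial} guarantees no node of a guarded AC is a protector of another AC, so the only vertex of $C$ colored in this phase is $u_C$; this keeps the backbone $W_C$ uncolored during the phase and hence usable for routing throughout, and it is what makes the degree count in Step (4) go through cleanly.
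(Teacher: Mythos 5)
Your plan correctly identifies the common neighborhood $W_C$ of $u_C,p_C$ as a communication backbone and correctly flags Step~(3) as the danger spot, but the resolution you offer there does not actually work, and this is where the paper's proof differs in a crucial way. You propose to have incoming $H$-edge messages land on whichever endpoint ($u_C$ or $p_C$) is incident to the relevant $G$-edge and then to ``propagate that message to the partner endpoint inside each pair'' via the $W_C$ backbone in ``one more $O(1)$-round phase.'' But $p_C$ is a special node and may have as many as $\Delta-\phi$ external neighbors, hence up to $\Theta(\Delta)$ such messages arriving per round of the simulated algorithm, while the backbone $W_C$ has only $\Theta(\phi)=\Theta(\Delta^{2/3})$ vertices; forwarding $\Theta(\Delta)$ distinct $O(\log n)$-bit messages to $u_C$ through a cut of size $\Theta(\Delta^{2/3})$ costs $\Theta(\Delta^{1/3})=\Theta(\psi)$ rounds, not $O(1)$. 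So the symmetric ``propagate to both endpoints'' simulation does not give an $O(1)$-factor slowdown.

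The fix in the paper is an \emph{asymmetric} simulation: maintain the palette and run the virtual node's local computation entirely at $p_C$, and treat $u_C$ only as a relay for \emph{its own} external edges. Since $C$ is a clique of size at least $\Delta-\psi$, $u_C$ has at most $\psi$ external neighbors, and $\psi<\phi$; so the $\le\psi$ pieces of information $u_C$ learns from its external edges can be shipped across the $\Omega(\phi)$-node backbone to $p_C$ in $O(1)$ rounds. Information never needs to flow in the other direction, because $p_C$ already sees all of its own external edges directly and holds the pair's state. With that asymmetry in place, the rest of your outline (palette bound $\phi/2$ from \Cref{lem:guardedPair-coloring}, invocation of \Cref{lem:listColoring}, the threshold on $\Delta$) goes through. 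Your observation that $\Delta=\omega(\log^{21/2}n)$ already suffices for $\phi/2=\omega(\log^7 n)$ is correct but immaterial, since the lemma only claims the weaker hypothesis $\Delta=\omega(\log^{14}n)$.
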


\begin{proof}
	For a guarded AC $C$, the pair $\{u_C,p_C\}$ has at least $\phi$ common neighbors in $C$, by definition (recall that $C$ is a clique).
	This suffices to communicate to $p_C$ the colors of the at most $\psi$ external neighbors of $u_C$. The palette of valid colors for the pair can then be maintained at $p_C$. Also, all messages to and from $u_C$ \emph{vis-a-vis} its external neighbors can be forwarded in two rounds. Hence, we can simulate any \CONGEST coloring algorithm on the pairs with an $O(1)$ slowdown, in particular, the algorithm from \Cref{lem:listColoring}.
	Also, \Cref{lem:guardedPair-coloring} shows that the palettes of the list coloring instance has at least $\phi/2=\Omega(\log^7 n)$ colors the algorithm runs in $O(\logstar n)$ rounds,  if $\Delta=\Omega(\log^{14}n)$. 
\end{proof}

We proceed with coloring guarded ACs.
The protectors are colored as part of coloring the pairs.

\begin{lemma}
	After coloring the pairs, all nodes in guarded ACs are white  or gray with list sizes at least $\phi$ and $\Delta/2$, respectively. 	\label{lem:nice-guarded-graytone}
\end{lemma}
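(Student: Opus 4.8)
The plan is to follow the same two-layer template used for the earlier AC types (ordinary, runaway, nice), but now starting from the colored pair $(u_C, p_C)$ rather than from probabilistically generated slack. First I would fix a guarded AC $C$ with protector $p_C$, and let $W$ be the set of neighbors of $p_C$ inside $C$ that are still uncolored at the moment we process guarded ACs. By definition of a special/protector node, $p_C$ has at least $\phi$ neighbors in $C$, and by \Cref{lem:slackGenerationProperties}~(iii) at most half of $N_C(p_C)$ was colored during \textsf{SlackGeneration}, so $|W| \ge \phi/2$. Since the only node of $C$ colored while coloring the pairs is $u_C$ itself (by \Cref{obs:difficultNoSpecial}, $C$ contains no other protector, and guarded ACs contain no escapes), every node of $W$ survives until now.

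Next I would observe that once the pair $(u_C, p_C)$ is colored with a common color, every node in $W$ has two same-colored neighbors, namely $p_C$ and $u_C$ — wait, that only works for nodes adjacent to both. The cleaner route: each node of $W$ is adjacent to $p_C$, which is colored, but that just removes one color. The actual toehold mechanism is that $W$ itself becomes a large stalled set. So I would instead declare the nodes of $W$ to be \emph{white} because each of them has permanent unit-slack: a node $v \in W$ is adjacent to both $p_C$ and (if $v$ is also adjacent to $u_C$) to $u_C$, which now carry the same color, giving permanent slack. To guarantee this for all of $W$, note $u_C$ has at least $(1-4\eps)\Delta$ neighbors in $C$ (\Cref{lem:acd}~(3)), so all but at most the anti-degree $7\eps\Delta$ of $W$ are common neighbors of the pair; I would simply redefine $W$ to be the common uncolored neighbors of $u_C$ and $p_C$ in $C$, which still has size at least $\phi/2 - 7\eps\Delta = \phi/2 - o(\phi)$, hence $\ge \phi/2$ up to the usual constant slack (or just $\ge \phi/3$, matching the pattern of \Cref{lem:niceAC5}). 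These $W$-nodes are white with list size $\ge |W|$ since each is adjacent to all other uncolored nodes of the clique $C$, of which there are at least $|W|$. Then every remaining uncolored node of $C$ is adjacent to all of $W$ (again because $C$ is a clique), so it has a white neighbor and is gray, with list size at least $|W| \ge \phi/2$ (after adjusting for anti-degree, $\ge \Delta/2$ is too strong — the white list size is $\phi$-scale, and the claimed bounds in the lemma statement are "$\phi$ and $\Delta/2$", which I read as: white nodes in $W$ get $\ge \phi$ and gray nodes get $\ge \Delta/2$; I would reconcile this by noting gray nodes additionally have temporary slack from the full set $C \setminus (\text{colored})$, whose uncolored part has size $\ge (1-4\eps)\Delta \ge \Delta/2$, while the white instance on $W$ has list size $\ge |W| \ge \phi$).

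Finally I would invoke \Cref{lem:graytoneColorable} to color $C$ via two $(deg+1)$-list coloring instances, first the gray nodes $C \setminus W$ (list size $\ge \Delta/2$), then the white nodes $W$ (list size $\ge \phi$), and note that protectors $\cP$ are colored as part of the pair step (\Cref{lem:guardedPairs-col-congest}), so nothing is left behind. The main obstacle I anticipate is bookkeeping the exact list-size constants: I must be careful that the "$\phi$" bound for the white set $W$ genuinely holds, i.e. that coloring $u_C$ and the protector did not eat into $W$, and that the temporary-slack argument for gray nodes correctly counts the uncolored portion of $C$ rather than all of $C$. Handling \Cref{lem:slackGenerationProperties}~(iii) carefully — it only guarantees half of $N_C(p_C)$ uncolored, and the anti-degree of $u_C$ further shrinks the common-neighbor set — is where the constants get tight, but since $\phi = \Delta^{2/3}/2 = \omega(7\eps\Delta)$ fails(!) — actually $7\eps\Delta \gg \phi$, so I cannot subtract the anti-degree from $\phi$. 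This forces the correct argument: $W$ must be taken as all uncolored neighbors of $p_C$ in $C$ (size $\ge \phi/2$), and whiteness of $W$ comes not from being a common neighbor of the pair but from the fact that $p_C \in \cP$ will only be colored during the pair step which has already happened — so instead whiteness must come from temporary slack relative to $\cE$ or from the pair $\{u_C,p_C\}$ being same-colored and $W \subseteq N(p_C)$, where each $v \in W$ loses only the single color of $p_C=u_C$ while retaining all of $W \setminus \{v\}$ as uncolored neighbors. That is the delicate point to get exactly right, and the rest is the routine two-layer closure.
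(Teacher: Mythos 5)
Your proof circles the right structure (a white core $W$ of common neighbors of the pair, then gray for the rest) but you never land the one fact that makes the argument close, and as a result both list-size bounds in the statement are left unproven. The key observation you keep reaching for but never state is that \Cref{obs:difficultNoSpecial} makes every difficult (hence every guarded) AC a \emph{clique}: $u_C\in C$ therefore has zero anti-degree \emph{inside} $C$, so every node of $W=N_C(p_C)$ is automatically adjacent to $u_C$. There is nothing to subtract. Your attempt to shrink $W$ by the anti-degree $7\eps\Delta$ is doomed (as you yourself notice, $7\eps\Delta\gg\phi$), and your final fallback --- ``$v\in W$ loses only the single color of $p_C=u_C$'' --- does not on its own give $v$ unit slack unless $v$ is adjacent to \emph{both} endpoints of the pair, which is exactly the claim you failed to establish. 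With the clique fact in hand, every $v\in W$ is a common neighbor of $p_C$ and $u_C$, has two neighbors of the same color, and is white by permanent slack, full stop.

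A second, smaller slip: you invoke \Cref{lem:slackGenerationProperties}~(iii) to argue that half of $N_C(p_C)$ survives \textsf{SlackGeneration}, but \textsf{SlackGeneration} runs only on $\cV_*\cup\cO\cup\cR$, and guarded nodes lie in $\cG$; \emph{none} of them are colored in Step~3. So $|W|\geq\phi$ outright, and shaving it to $\phi/2$ costs you the white list-size bound $\geq\phi$ that the lemma actually claims. The paper's proof uses exactly the two facts you are missing: all of $C$ except $u_C$ is uncolored when the pairs are done, and $C$ is a clique so all of $N_C(p_C)$ is a common neighbor of the pair. Everything else you wrote --- including the $\Delta/2$ bound for the gray instance via $|C|-1\geq(1-\eps)\Delta-1\geq\Delta/2$ and counting only the single colored vertex $u_C$ inside $C$ --- is on the right track, but without those two facts the proof does not go through.
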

\begin{proof}
	Consider a guarded AC $C$. It can neither contain a protector nor an escape (\Cref{obs:difficultNoSpecial}).  Guarded ACs are uncolored, except for the node $u_C$. By definition the protector $p_C$ has $\phi$ neighbors in $C$. As $C$ forms a clique, each of these nodes is a common neighbor of $p_C$ and $u_C$ and thus white, denoted these by $W$. As $C$ is a clique all nodes of $C$ have at least $\phi$ neighbors in $W$ and all nodes of $C\setminus W$ are gray. When coloring the gray set all nodes of the AC, except for $u$,  are uncolored and since there are no non-edges ($|C|\leq \Delta$) the list is is $|C|-1\geq (1-\eps)\Delta-1\geq \Delta/2$. 
\end{proof}

\paragraph{List sizes:}
\begin{compactitem}
    \item Guarded ACs (pairs): $\phi/2$,
    \item Guarded ACs (gray instance): $\Delta/2$,
    \item Guarded ACs (white instance): $\phi$.
\end{compactitem}

\subsubsection{Step 9: Coloring Escape Nodes}

\begin{lemma}
	\label{lem:escape-white}
	Escape nodes are white with list size at least $\Omega(\Delta)$.
\end{lemma}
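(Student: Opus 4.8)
The plan is to invoke \Cref{lem:slackGenerationProperties}~$(i)$, which was specifically designed to give us exactly this guarantee for escape nodes. Recall that escape nodes are put aside during \textsf{SlackGeneration} (they are not in $\cV_*\cup\cO\cup\cR$), but that step provides them with slack anyway by coloring a constant fraction of their neighbors inside adjacent runaway ACs: an escape node $e_C$ has at least $\phi = \Delta^{2/3}/2$ neighbors in $C$, and the random color trials in Step~3 create $\Omega(\Delta)$ permanent slack at $e_C$ because so many pairs of its neighbors end up monochromatic. This is precisely what part $(i)$ of \Cref{lem:slackGenerationProperties} asserts: escape nodes have slack $\Omega(\Delta)$ in $G[V]$, i.e. in the whole graph.

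First I would observe that Step~9 is the very last step, so by the time escape nodes are colored, every other node — all of $\cV_*, \cO, \cR, \cN, \cG, \cP$ — has already been colored. Hence an escape node $v$ has no later-colored neighbor, so whiteness must come purely from unit-slack. But by \Cref{lem:slackGenerationProperties}~$(i)$, $v$ has slack $\Omega(\Delta)$ in $G[V]$, and this slack only grew (or stayed the same) as neighbors got colored in intervening steps — coloring a neighbor removes at most one color from the palette but also removes one from the uncolored degree, so slack is non-decreasing. Therefore $v$ retains slack $\Omega(\Delta) \ge 1$ when we reach Step~9, so $v$ is white. Moreover, since all other nodes are colored, the "graph induced by all white nodes" is just the set of escape nodes $\cE$ themselves, and each such node has palette exceeding its degree within $\cE$ by $\Omega(\Delta)$; invoking \Cref{lem:graytoneColorable} (only the white instance is needed, as there are no gray nodes), the list size is $\Omega(\Delta)$.

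I do not expect any genuine obstacle here: the hard work is entirely encapsulated in \Cref{lem:slackGenerationProperties}, whose proof is deferred to \Cref{sec:slackGeneration}. The only things to be slightly careful about are (a) confirming that escape nodes indeed have no later-colored neighbors (immediate, since Step~9 is last), so that the slack must be genuine permanent/a-priori slack rather than temporary slack, and (b) noting that $\Delta = \omega(\log^3 n)$ is in force, which is needed for \Cref{lem:slackGenerationProperties} to apply. Putting these together yields the claim.

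\begin{proof}
	Step~9 is the last step of \Cref{alg:orderedPartition}, so when we process the escape nodes $\cE$, all nodes of $\cV_*\cup\cO\cup\cR\cup\cN\cup\cG\cup\cP$ are already colored. In particular, the subgraph induced by the uncolored nodes is exactly $G[\cE]$. By \Cref{lem:slackGenerationProperties}~$(i)$ (applicable since $\Delta=\omega(\log^3 n)$), each escape node has slack $\Omega(\Delta)$ in $G[V]$, i.e. its palette after Step~3 exceeds its degree in $G$ by $\Omega(\Delta)$. Coloring a neighbor in a subsequent step removes at most one color from the palette while also decreasing the uncolored degree by one, so the slack is non-decreasing; hence every escape node still has slack $\Omega(\Delta)\geq 1$ when we reach Step~9. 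Therefore every escape node is white, and there are no gray nodes among $\cE$. Applying the white instance of \Cref{lem:graytoneColorable} to $\cE$, the resulting $(deg+1)$-list coloring instance has lists of size at least the slack of each node in $G[\cE]$, which is $\Omega(\Delta)$.
\end{proof}
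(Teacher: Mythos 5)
Your proof is correct and uses the same approach as the paper: it invokes \Cref{lem:slackGenerationProperties}~$(i)$ to get slack $\Omega(\Delta)$ for escape nodes in $G[V]$, which immediately makes them white with large lists. Your additional remarks — that Step~9 is last (so whiteness must come from genuine slack, not stalled neighbors) and that slack is non-decreasing as intervening steps color neighbors — are accurate elaborations that the paper leaves implicit.
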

\begin{proof}
	 \Cref{lem:slackGenerationProperties}, an escape node has slack $\Omega(\Delta)$ in the whole graph $G[V]$, thus it is white when it becomes colored. 
\end{proof}

\paragraph{List size(s):}
\begin{compactitem}
    \item Escape nodes: $\Omega(\Delta)$.
\end{compactitem}

\subsection{Proof of  Theorem~\ref{thm:mainLogstar} and Corollary~\ref{corr:mainPolylog}}
\label{app:MainProof}
\begin{restatable}[reduction to list coloring]{theorem}{thmReduction} 
\label{thm:ListColorReduction}
There is a constant time randomized \CONGEST algorithm that w.h.p.\ reduces the problem of $\Delta$-coloring an $n$-node graph with maximum degree $\Delta\geq \omega(\log^3 n)$ to a constant number of $(deg+1)$-list coloring instances.
\end{restatable}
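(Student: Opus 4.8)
The plan is to assemble \Cref{thm:ListColorReduction} directly from the machinery built up over \Cref{sec:algHighDegree,sec:detailedReduction}: the algorithm of interest is precisely \Cref{alg:orderedPartition}, and the claim is that after its randomized parts succeed (which happens w.h.p.\ when $\Delta=\omega(\log^3 n)$), all remaining work decomposes into a constant number of $(deg+1)$-list coloring instances, each constructible and simulatable in $O(1)$ \CONGEST rounds. So I would first invoke \Cref{lem:acd} to compute the ACD in $O(1)$ rounds (valid w.h.p.\ since $\Delta=\omega(\log^2 n)$), then Step~2's partition into $\cP,\cE,\cV_*,\cO,\cR,\cN,\cG$, which is well defined by \Cref{obs:partitionWellDefined} and locally computable in $O(1)$ rounds once each AC has a canonical identifier. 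Next I would run $\slackgeneration$ (Step~3), a single round, and condition on the high-probability event that the three guarantees of \Cref{lem:slackGenerationProperties} hold; a union bound over the relevant node/AC sets keeps the overall failure probability polynomially small, so the whole reduction is w.h.p.\ correct.

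Conditioned on that event, I would walk through Steps~4--9 and tally the list coloring instances. By \Cref{lem:basic-graytone}, \Cref{lem:ordinary-white}, \Cref{lem:runawaylists}, \Cref{lem:niceAC1}, \Cref{lem:niceAC5}, \Cref{lem:nice-guarded-graytone}, and \Cref{lem:escape-white}, in each of these steps the still-uncolored nodes are all white or gray (in the sense relevant to the step's processing order), so by \Cref{lem:graytoneColorable} each such set is handled by exactly two $(deg+1)$-list coloring instances; in addition Steps~7 and~8 each contribute one further list coloring instance on a virtual pair-graph (\Cref{lem:niceAC3}, \Cref{lem:guardedPair-coloring}), which by \Cref{lem:niceAC4} and \Cref{lem:guardedPairs-col-congest} can be simulated in $G$ with $O(1)$ slowdown because the two endpoints of each pair share $\Omega(\Delta)$ (resp.\ $\ge\phi$) common clique-neighbors and the virtual graph has maximum degree $O(\eps\Delta)$. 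Counting: Steps~4--6 and~9 give two instances each (eight), Step~7's three sub-cases give at most $2+2+3=7$, and Step~8 gives three; summing yields a fixed constant number of $(deg+1)$-list coloring instances, independent of $n$ and $\Delta$. Each instance is produced after only $O(1)$ additional rounds of palette updates, and the partition into white/gray subsets needed to invoke \Cref{lem:graytoneColorable} is computable locally, so the total round overhead of the reduction is $O(1)$.

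The one genuine subtlety — and the step I'd treat most carefully — is verifying that the instances are mutually compatible, i.e.\ that the guarantees needed to call a node ``white'' in step $i$ survive everything done in steps $1,\dots,i-1$. Concretely: permanent slack created in $\slackgeneration$ for sparse nodes and escape nodes is never destroyed (colors only get removed from palettes, but the slack counted is an excess that already accounts for all of $V\setminus(\cV_*\cup\cO)$, resp.\ all of $V$, being colored); the toehold sets $X$ inside ordinary, runaway, nice, and guarded ACs remain uncolored precisely because the coloring order stalls them until after the rest of their AC; protectors and escapes are stalled until Steps~8 and~9 respectively, so they still provide temporary slack to their ACs when those ACs are processed; and \Cref{lem:slackGenerationProperties}(iii) guarantees that at least half of each $N_C(e_C)$ or $N_C(p_C)$ is \emph{uncolored} after $\slackgeneration$, which is exactly what Steps~6 and~8 need. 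The pair-coloring instances of Steps~7--8 are disjoint in the nodes they color (a node of $\cP\cup\cE$ lies in no difficult AC by \Cref{obs:difficultNoSpecial}, and each protector serves a unique guarded AC by construction), so they do not interfere. I would state this compatibility as the crux, cite the relevant lemma for each claim, and then conclude: conditioning on the $\slackgeneration$ success event, \Cref{alg:orderedPartition} is a deterministic $O(1)$-round reduction to a constant number of solvable $(deg+1)$-list coloring instances, each simulatable in $G$ with $O(1)$ slowdown, which is exactly the assertion of \Cref{thm:ListColorReduction}. (The sharper list-size bounds — $\Omega(\Delta)$, $\Omega(\psi)$, $\phi/2$, $\Delta/3$ — recorded step by step are not needed for \Cref{thm:ListColorReduction} itself but are what later feeds the $O(\logstar n)$ runtime of \Cref{thm:mainLogstar} via \Cref{lem:listColoring}.)
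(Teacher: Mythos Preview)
Your proposal is correct and follows essentially the same approach as the paper: the paper's own proof is a terse three-sentence summary that \Cref{lem:acd} gives the ACD in $O(1)$ rounds, Step~2 and \slackgeneration\ take $O(1)$ rounds, \Cref{lem:slackGenerationProperties} holds w.h.p.\ when $\Delta=\omega(\log^3 n)$, and setting up the constantly many instances takes constant time---all the per-step correctness and compatibility arguments you spell out are exactly what the paper establishes in the surrounding \Cref{ssec:detailedSteps} and simply cites implicitly. Your instance count is slightly off (Steps~4 and~9 involve only white nodes, so one instance each rather than two), but this does not affect the argument.
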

\begin{proof}[Proof of \Cref{thm:ListColorReduction}]
Computing the ACD runs in $O(1)$ rounds via \Cref{lem:acd}.  We have already reasoned that Step~2 can be executed in $O(1)$ rounds.  Executing $\slackgeneration$ in Step~3 works in $O(1)$ rounds as it only consists of a single round of color trials, see \Cref{sec:slackGeneration}.
If $\Delta = \omega(\log^{3} n)$, the probabilistic claims of \Cref{lem:slackGenerationProperties} hold w.h.p. Setting up the constant different $(deg+1)$-list coloring instances can be done in constant time. 
\end{proof}

\begin{remark}
\label{rem:sequentialInstances}
The $(deg+1)$-list coloring instances of \Cref{thm:ListColorReduction} need to be executed in the specified order as the lists of later instances depend on the solution computed for earlier processed instances. 

We emphasize that it may be possible to reduce to fewer sequentially executed list coloring instances as not every instance depends on the output of all instances that we process earlier. However, we did not  optimize the anyway constant number of instances to keep the arguments for the list size and solvability of each instance as simple as possible. 
\end{remark}
\tmMainLogstar*
\begin{proof}[Proof of \Cref{thm:mainLogstar}]
The constant number of sequentially (see \Cref{rem:sequentialInstances}) executed $(deg+1)$-list coloring instances can each be solved in $\poly\log\log n$ \CONGEST rounds  via \Cref{lem:listColoring}. 
	
As explicitly stated in the previous section, the list sizes of the constant different $(deg+1)$-list coloring instances are of size at least $\Omega(\Delta)$ (sparse nodes, escape nodes, all instances for nice cliques, guarded ACs gray instance), $\phi/2$ (pairs of guarded ACs, white instance of guarded ACs, runaway ACs),  $\Omega(\phi/\psi)=\Omega(\psi)$ (ordinary ACs), where the smallest list size is attained by the last term. Hence, if $\Delta=\omega(\log^{21}n)$, we obtain that $\Omega(\psi)=\omega(\log^7 n)$. In this case, each usage of \Cref{lem:listColoring} runs in $O(\logstar n)$ rounds.

Actually, note that coloring the respective pairs in Step~7 (nice ACs) and Step~8 (guarded ACs) happens on virtual graphs and it is not immediate that \Cref{lem:listColoring} can be used in a black box manner on these virtual graphs. Hence, we present a white box treatment of these cases in \Cref{lem:niceAC4,lem:guardedPairs-col-congest}.
\end{proof}

\begin{proof}[Proof of \Cref{corr:mainPolylog}]
If $\Delta=\omega(\log^3 n)$, we use \Cref{thm:mainLogstar}.
	If $\Delta= O(\log^3 n)$ we use the \LOCAL algorithm from \cite{GHKM21,RG20}. For this bound on $\Delta$, this algorithm has runtime $O(\log \Delta)+\poly\log\log n=\poly\log\log n$. As stated in \cite{GHKM21}, the algorithm works if the graph has no $\Delta+1$-clique and $\Delta\geq 4$. For $\Delta=3$, the work presents a different algorithm that solves the problem $O(\log^2\log n)$ rounds.
\end{proof}

\section{Slack Generation}
\label{sec:slackGeneration}

We now prove \cref{lem:slackprop} on the properties of Step~3 of \Cref{alg:orderedPartition}. 
We first state its simple algorithm \textsf{SlackGeneration}.

\begin{algorithm}[H]\caption{{\slackgeneration} (vertex $v$)}\label{alg:slackgen}
\begin{algorithmic}[1]
	\STATE Become \textsf{activated} with probability $p_g=1/20$.
	\STATE{\textbf{if}} \textsf{activated} \textbf{then}  $\trycolor(v,c_v)$ where $c_v$ is chosen u.a.r.\ from $[\Delta]$
\end{algorithmic}
\end{algorithm}
In $\trycolor(v,c_v)$ node $v$ sends color $c_v$ to all of its neighbors and receives the set of colors $T=\{c_u : u\in N(v)\}$ tried by its neighbors. If $c_v\notin T$ then $v$ gets permanently colored with $c_v$; otherwise $v$ remains uncolored. 

A node of degree $\Delta$ with local sparsity $\zeta_v$ has $\Delta\zeta_v$ non-edges in its neighborhood. A fixed pair of non-adjacent nodes $u,w\in N(v)$ selects the same color in \textsf{SlackGeneration} with probability $\Theta(1/\Delta)$. Including doublecounting, $v$ would thus expect $\Omega(\zeta_v)$ pairs of nodes colored with the same color.
It has been shown that this intuition is correct: $v$ obtains slack $\Omega(\zeta_v)$ w.h.p., if $\zeta_v$ is large enough.  
We need here the following slightly more general statement for when not all nodes participate in \textsf{SlackGeneration}. 

\begin{lemma}[Slack Generation, \cite{Molloy1998,EPS15,HKMT21}]
\label{lem:slackGeneration}
Let $S\subseteq V$. If a node $v$ has sparsity $\zeta_v$ in $G[V]$, then after  $\textsf{SlackGeneration}(S)$, the palette of $v$ exceeds its uncolored degree in $G[S]$ by $\Omega(\zeta_v)$, with probability $1-\exp(-\Omega(\zeta_v))$. 
\end{lemma}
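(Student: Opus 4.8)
The plan is to follow the now-standard two-step template for slack-generation proofs: first compute the expectation of the slack, then concentrate. Fix the node $v$ with sparsity $\zeta_v$ (so $G[N(v)]$ misses $\Delta\zeta_v$ edges, counting each non-edge once). For a non-adjacent pair $u,w \in N(v)$, call the pair \emph{successful} if both $u$ and $w$ are activated, both try the same color $c$, and no \emph{other} neighbor of $u$ or of $w$ tries $c$. A successful pair contributes one unit of permanent slack to $v$ (its two endpoints end up with the same final color, which then occupies only one slot of $v$'s palette while accounting for two neighbors). First I would lower-bound $\Pr[\text{pair }(u,w)\text{ successful}]$: the activation probabilities contribute a constant $p_g^2$, the color-agreement contributes $1/\Delta$, and the "no conflict'' event for the at most $2\Delta$ other relevant neighbors contributes a constant by a union bound (each tries $c$ with probability $p_g/\Delta$). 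So each pair is successful with probability $\Omega(1/\Delta)$, and by linearity over the $\Delta\zeta_v$ non-edges the expected number of successful pairs — hence expected slack — is $\Omega(\zeta_v)$. One must only be slightly careful that a single endpoint can belong to several successful pairs, but this only \emph{helps} (more monochromatic structure), and for the lower bound on the expected slack it suffices to count successful pairs with multiplicity; the resulting over-count is still a valid lower bound on the palette surplus because each monochromatic pair removes a competitor while keeping the palette size fixed.

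The concentration step is the main obstacle, and here I would invoke a Talagrand-type inequality for functions of independent random variables with bounded Lipschitz and certifiability constants — precisely the version used in \cite{Molloy1998,EPS15,HKMT21}. The underlying independent randomness is: for each node, whether it activates and which color it tries. The quantity of interest (number of successful pairs incident to $v$, or more robustly the surplus $\text{palette}(v) - \deg_{S}^{\text{unc}}(v)$) is $c$-Lipschitz with $c = O(1)$: changing one node's choice affects the status of $O(1)$ pairs (a node $x$ is in at most... — actually $x$ could be in $\Theta(\Delta)$ pairs, so one must instead bound the effect on the palette/degree surplus, where flipping one node's trial changes $v$'s uncolored degree by at most $1$ and its palette by at most $1$, giving Lipschitz constant $O(1)$), and it is $r$-certifiable with $r = O(1)$ (to certify that a given successful pair succeeded, one exhibits the choices of the two endpoints and the $O(1)$... here again $O(\Delta)$ neighbors, so the certifiable version must be applied to the correct statistic). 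I would therefore phrase the random variable as in the cited works — the number of pairs in $N(v)$ that become permanently monochromatic — and use the known fact that this is concentrated with deviation probability $\exp(-\Omega(\zeta_v))$ when its mean is $\Omega(\zeta_v)$, which is exactly the regime we are in; I would cite \Cref{lem:slackGeneration}'s sources rather than reprove Talagrand.

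The only genuinely new feature relative to the classical statement is the restriction to a subset $S \subseteq V$: slack is generated using the randomness of \emph{all} of $V$ (every node participates in \textsf{SlackGeneration} in this lemma's hypothesis — note the statement generates slack measured against the uncolored degree in $G[S]$, but the color trials range over $V$), yet the surplus is measured against $v$'s uncolored degree in $G[S]$ only. This is actually easier, not harder: the uncolored degree in $G[S]$ is at most the uncolored degree in $G[V]$, so any surplus argument valid for $G[V]$ transfers verbatim, and the monochromatic pairs we found lie in $N(v)$ which may or may not be in $S$ — either way each such pair still removes one color from at most one competitor. I would state this transfer in one sentence. So the proof is: (1) define successful pairs, compute $\E[\text{slack}] = \Omega(\zeta_v)$ via linearity; (2) observe the relevant statistic has $O(1)$ Lipschitz and certifiability constants; (3) apply the Talagrand bound from \cite{Molloy1998,EPS15,HKMT21} to get deviation $\exp(-\Omega(\zeta_v))$; (4) note the $G[S]$-vs-$G[V]$ discrepancy only weakens the adversary. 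The delicate point to get right in the write-up is choosing the statistic (monochromatic-pair count versus palette-minus-degree surplus) so that \emph{both} the Lipschitz and the certifiability constants are genuinely $O(1)$ — this is the subtlety the earlier papers handle and the one I would be most careful to cite precisely rather than redo.
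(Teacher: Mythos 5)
Your proposal goes wrong on the one point that is genuinely new in this lemma, namely the restriction to $S$. You write that ``the color trials range over $V$'' — that is a misreading. In $\textsf{SlackGeneration}(S)$ only the nodes of $S$ participate in the color trial; nodes outside $S$ try nothing and contribute no monochromatic pairs. Consequently, the non-edges in $N(v)$ that generate slack are exactly those whose both endpoints lie in $S$, and $v$ might have almost all of its $\Delta\zeta_v$ non-edges incident to $N(v)\setminus S$. Your ``transfers verbatim / only helps'' step therefore has no support: the expectation calculation you sketch is over the wrong probability space, and the conclusion that the $G[S]$-surplus is at least the $G[V]$-surplus does not address the fact that the \emph{source} of randomness has shrunk.

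The paper avoids re-deriving Talagrand-based concentration (which it does need elsewhere, in \cref{lem:ordinaryMatchingSlack}, but not here). It instead reduces cleanly to the already-known \cite[Lemma 3.1]{EPS15} by a short case split on $x=\Delta-\deg_S(v)$: if $x>\zeta_v/2$, then $v$ already has $\Omega(\zeta_v)$ a priori slack measured against its $S$-degree and nothing probabilistic is needed; if $x\le\zeta_v/2$, each neighbor of $v$ outside $S$ can destroy at most $\bar m(u)/\Delta\le 1$ unit of sparsity, so $v$ still has sparsity at least $\zeta_v-x\ge\zeta_v/2$ \emph{in $G[S]$}, and the classical lemma applied to $G[S]$ finishes. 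This dichotomy is exactly what handles the danger you waved away: when too many of $v$'s ``useful'' non-edges leave $S$, the missing randomness is compensated by deterministic a priori slack. If you want to keep your from-scratch expectation-plus-Talagrand route, you must replicate this case analysis and, in the low-$x$ case, restrict both the pair-counting and the expectation computation to $N_S(v)$.
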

\begin{proof}
Let $x=\Delta-deg_S(v)$ be the number of colors by which the initial palette exceeds the initial degree towards $S$. 
If $x> \zeta_v/2$, then we get deterministic slack of $x$. If $x\leq \zeta_v/2$, then $v$ has sparsity $\zeta_v-x$ in $G[S]$ as each node $u$ not in $S$ can reduce the sparsity by at most by $\bar{m}(u)/\Delta\leq 1$, where $\bar{m}(u)$ is the number of non-edges in $G[N(v)]$ incident to $u$. Thus, 
we get slack $\Omega(\zeta_v)$ w.p.\ at least $1-\exp(-\Omega(\zeta_v))$ by \cite[Lemma 3.1]{EPS15} applied to the graph $G[S]$. 
\end{proof}

Combining the above lemma with properties of the almost-clique decomposition yields the following.
\begin{corollary}
\label{cor:externalDegreeSlack}
Let $C$ be a non-easy clique, $v \in C$ and $S \supseteq C(v)$.
Then, after $\textsf{SlackGeneration}(S)$, $v$ has slack $\Omega(e(v))$ in $G[S]$ with probability $1-\exp(-\Omega(e(v)))$.
\end{corollary}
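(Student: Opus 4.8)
The plan is to deduce \Cref{cor:externalDegreeSlack} from \Cref{lem:slackGeneration} by showing that, for a non-easy clique $C$, every node $v\in C$ has local sparsity $\zeta_v = \Omega(e(v))$ in $G[V]$. Once we have this, we apply \Cref{lem:slackGeneration} with the given set $S\supseteq C(v)$: it gives that after $\textsf{SlackGeneration}(S)$ the palette of $v$ exceeds its uncolored degree in $G[S]$ by $\Omega(\zeta_v) = \Omega(e(v))$, with probability at least $1-\exp(-\Omega(\zeta_v)) = 1-\exp(-\Omega(e(v)))$, which is exactly the claim.

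So the whole content is the sparsity lower bound. First I would recall that $\zeta_v = \frac{1}{\Delta}\bigl(\binom{\Delta}{2} - m(N(v))\bigr)$, i.e.\ $\Delta\zeta_v$ counts (with the $\binom{\Delta}{2}$ convention) the "missing mass" in $v$'s neighborhood relative to a $\Delta$-clique. I want to exhibit $\Omega(\Delta\, e(v))$ units of this missing mass. Since $C$ is an almost-clique, $v$ has at least $(1-4\eps)\Delta$ neighbors inside $C$ by \Cref{lem:acd}(3); let $e = e(v)$ be its number of outside neighbors. Consider the $e$ outside neighbors $u_1,\dots,u_e$ of $v$. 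Each $u_i \in N(v)$ but $u_i \notin C$, and by \Cref{lem:acd}(4) each $u_i$ has at most $(1-2\eps)\Delta$ neighbors in $C$; since $|C| \ge (1-\eps)\Delta$, each $u_i$ has at least $\eps\Delta$ non-neighbors inside $C$. Now $v$ has at least $(1-4\eps)\Delta \ge \eps\Delta + (\text{slack})$ neighbors in $C$, so each $u_i$ has at least $\eps\Delta - a(v) \ge \eps\Delta - 7\eps\Delta$... — here I need to be a touch careful, but the point is that each outside neighbor $u_i$ is non-adjacent to a constant fraction, say $\eps\Delta/2$ (adjusting constants; note $\eps = 1/172$ is tiny), of $v$'s \emph{neighbors} inside $C$. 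Each such non-adjacency is a non-edge in $G[N(v)]$ with one endpoint $u_i$ outside and the other inside $C$. Summing over $i=1,\dots,e$ and noting these non-edges are distinct (they have distinct outside endpoints $u_i$), we get at least $e \cdot \eps\Delta/2 = \Omega(e\Delta)$ non-edges in $G[N(v)]$, hence $\Delta\zeta_v \ge \Omega(e\Delta)$, i.e.\ $\zeta_v = \Omega(e(v)) = \Omega(e)$. (If $v$ has degree less than $\Delta$ the sparsity only gets larger, since the $\binom{\Delta}{2}$ in the definition overcounts; this edge case only helps.)

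The main obstacle, such as it is, is bookkeeping the constants: I must make sure that the count of non-neighbors of each $u_i$ among \emph{$v$'s neighbors in $C$} (not just among $C$) stays a positive constant fraction of $\Delta$ — this needs $|C| - |N(v)\cap C| = O(\eps\Delta)$, which follows from \Cref{lem:acd}(2)--(3), i.e.\ the anti-degree bound $a(v) \le 7\eps\Delta$ of \Cref{obs:outsideAndAntiDegree}. With $\eps < 1/20$ there is plenty of room: each $u_i$ has at least $\eps\Delta$ non-neighbors in $C$, at most $a(v) \le 7\eps\Delta$... wait, that's the wrong direction — actually I only need: $u_i$ misses $\ge \eps\Delta$ nodes of $C$, and at most $a(v)\le 7\eps\Delta$ of those missed nodes could fail to be neighbors of $v$; if $7\eps < \eps$ this fails, so instead I use \Cref{lem:acd}(4) with a slightly larger margin or observe $|C|\le(1+3\eps)\Delta$ and $|N(u_i)\cap C|\le(1-2\eps)\Delta$ gives $u_i$ misses $\ge 5\eps\Delta$ nodes of $C$, of which at most $a(v)\le 7\eps\Delta$ are non-neighbors of $v$ — still not obviously positive. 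The clean fix: count instead over the $\ge(1-4\eps)\Delta$ neighbors of $v$ in $C$; $u_i$ is adjacent to at most $(1-2\eps)\Delta$ of them, so non-adjacent to at least $(1-4\eps)\Delta-(1-2\eps)\Delta = \text{(could be negative)}$ — hmm. The actual correct route, which I would carry out carefully in the write-up, is to bound the number of \emph{common neighbors in $C$} of $v$ and $u_i$ by $(1-2\eps)\Delta$ and the number of neighbors of $v$ in $C$ by $\ge (1-4\eps)\Delta$, so $u_i$ is non-adjacent to at least, pessimistically, a $2\eps\Delta$-sized subset once one uses that $v$'s neighborhood in $C$ has size between $(1-4\eps)\Delta$ and $(1+3\eps)\Delta$ and $u_i$ hits at most $(1-2\eps)\Delta$ of all of $C$; for $\eps$ small this is $\Theta(\Delta)$. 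This is exactly the type of constant-chasing the paper does elsewhere (cf.\ the proof of \Cref{lem:niceAC4}), so I expect it to go through with $\eps=1/172$, and the only real care needed is to state the inequality chain so that all coefficients of $\eps$ remain favorable.
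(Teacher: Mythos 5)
Your overall strategy is exactly the paper's: deduce the corollary from \Cref{lem:slackGeneration} by showing $\zeta_v = \Omega(e(v))$, and produce the required non-edges in $G[N(v)]$ by pairing each outside neighbor $u_i$ of $v$ with many of $v$'s neighbors inside $C$ that $u_i$ misses (these non-edges are distinct because they have distinct outside endpoints). But the inequality chain you attempt never closes, and you say as much. You repeatedly lower-bound $|N(v)\cap C|$ by $(1-4\eps)\Delta$ via \Cref{lem:acd}(3), and then try to subtract $|N(u_i)\cap C|\le (1-2\eps)\Delta$; since $(1-4\eps)\Delta-(1-2\eps)\Delta = -2\eps\Delta < 0$, this route genuinely fails, and throwing the anti-degree bound $a(v)\le 7\eps\Delta$ at it only makes the numbers worse, as you noticed.

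The missing ingredient is the one fact in the hypotheses you never exploit quantitatively: $C$ is a \emph{clique} (being non-easy, it has no non-edge). Therefore $v$ is adjacent to every other node of $C$, so $|N(v)\cap C| = |C|-1 \ge (1-\eps)\Delta-1$ — not merely $(1-4\eps)\Delta$. Combined with $|N(u_i)\cap C|\le (1-2\eps)\Delta$ from \Cref{lem:acd}(4), each $u_i$ is non-adjacent to at least $(1-\eps)\Delta-1-(1-2\eps)\Delta = \eps\Delta-1 = \Omega(\Delta)$ of $v$'s neighbors in $C$. Summing over the $e(v)$ outside neighbors gives $\Omega(e(v)\Delta)$ distinct non-edges in $G[N(v)]$, hence $\zeta_v = \Omega(e(v))$, and \Cref{lem:slackGeneration} finishes the job. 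So the gap is not cosmetic constant-chasing; it is that you use the ACD lower bound on $|N(v)\cap C|$ where the clique property must be used instead.
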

\begin{proof}
Each $u \not\in C(v)$ has fewer than $(1-2\eps)\Delta$ neighbors in $C(v)$, whereas $v$ has at least $|C(v)|-1 \ge (1-\eps)\Delta$ neighbors in $C(v)$, by \cref{lem:acd} (1,4) and the fact that $C$ is a clique.
So, $u$ is a non-neighbor of at least $\eps \Delta$ neighbors of $v$. 
Thus,
$\zeta_v \ge \frac{e(v) |N(v) \setminus N(u)|}{\Delta} \ge  e(v) \eps$.
The corollary then follows from \cref{lem:slackGeneration}.
\end{proof}

We continue (\Cref{obs:vertexCoverMatching}, \Cref{lem:ordinaryMatchingSlack,lem:ordinarySlack}) with showing that ordinary ACs obtain sufficiently many nodes with unit-slack. We begin with a structural observation  about the nodes in such ACs in the case that the AC is sufficiently large. The observation holds deterministically and we only need it for the sake of analysis.

\begin{observation}
\label{obs:vertexCoverMatching}
For any ordinary clique of size $|C|\geq \Delta-\psi$ there exists a matching of size $\Omega(\psi)$ in the bipartite graph between $Y=N(C)\setminus C$ and $C$.
\end{observation}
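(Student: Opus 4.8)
The plan is to produce the matching via a counting argument combined with the standard duality between maximal matchings and vertex covers. Let $C$ be an ordinary clique with $|C|\ge \Delta-\psi$ and set $Y=N(C)\setminus C$. I would first extract three structural facts by unpacking the classification. (a) Since $C$ is ordinary, it is not nice, hence not easy, hence contains no non-edge and no simplicial node; in particular $C$ is a clique and every $v\in C$ has $\deg_C(v)=|C|-1$. (b) Because $C$ is not difficult while $|C|\ge \Delta-\psi$ and $C$ is non-easy, condition (b) in the definition of ``difficult'' must fail, i.e.\ $C$ has no special neighbor; so every $y\in Y$ has $\deg_C(y)<\phi$. (c) Each $v\in C$ has at least one neighbor outside $C$: otherwise $N(v)=C\setminus\{v\}$, which is a clique, making $v$ simplicial, contradicting (a). Hence $e(v)\ge 1$ for every $v\in C$, and moreover $e(v)=\deg(v)-(|C|-1)\le \Delta-|C|+1\le \psi+1$.

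From (c) the number of edges of the bipartite graph $B$ between $C$ and $Y$ satisfies $|E(B)|=\sum_{v\in C}e(v)\ge |C|\ge \Delta-\psi$. From (b) each node of $Y$ is incident to at most $\phi$ edges of $B$, and from (c) each node of $C$ is incident to at most $\psi+1\le\phi$ edges of $B$ (using $\Delta^{1/3}+1\le \Delta^{2/3}/2$ for $\Delta$ large enough). So every vertex of $B$ covers at most $\phi$ edges of $B$.

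Now let $M$ be a maximal matching in $B$ and let $U$ be the set of its $2|M|$ endpoints. By maximality, every edge of $B$ has an endpoint in $U$, so $U$ is a vertex cover of $B$; hence $|E(B)|\le |U|\cdot\phi=2|M|\phi$. Combining with $|E(B)|\ge \Delta-\psi$ gives
\[
|M|\ \ge\ \frac{\Delta-\psi}{2\phi}\ =\ \frac{\Delta-\Delta^{1/3}}{\Delta^{2/3}}\ =\ \Delta^{1/3}-\Delta^{-1/3}\ =\ \Omega(\psi),
\]
which is the desired matching (a maximal, hence maximum-is-not-needed, matching suffices; equivalently one can invoke K\H{o}nig's theorem on $B$).

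The step I expect to be the only real obstacle is fact (c): one has to notice that an ordinary clique can have \emph{no} simplicial node, and that a node with all neighbors inside a clique is simplicial, so that every node of $C$ necessarily has outside degree at least $1$, yielding the lower bound $|E(B)|\ge|C|$. Everything else is bookkeeping with the definitions of $\psi=\Delta^{1/3}$, $\phi=\Delta^{2/3}/2$, and the size bound $|C|\ge\Delta-\psi$; no probabilistic input from \textsf{SlackGeneration} is needed, consistent with the statement being deterministic.
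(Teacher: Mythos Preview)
Your proof is correct and follows essentially the same approach as the paper: both extract the two key structural facts (no simplicial node $\Rightarrow$ every $v\in C$ has outside degree $\ge 1$; no special neighbor $\Rightarrow$ every $y\in Y$ has degree $<\phi$ into $C$) and then convert the edge count into a matching via a greedy/vertex-cover argument. Your version is slightly more explicit in also bounding the $C$-side degrees by $\psi+1\le\phi$ before invoking the maximal-matching/vertex-cover duality, whereas the paper just states that a greedy matching has size $\Omega(|C|/\phi)=\Omega(\psi)$.
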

\begin{proof}
Since $C$ is not easy, it contains no simplicial node.
Thus, each node of $C$ has degree at least one in this bipartite graph.
Also, since $C$ is not difficult (but satisfies both $|C|\geq \Delta-\psi$ and being non-easy), it has no special node.
Thus, each node of $Y$ has at most $\phi$ neighbors in this bipartite graph.
Therefore, a greedy matching in the bipartite graph has at least 
$\min\{|Y|, |C'|/\psi\} = \Omega(\Delta/\phi) = \Omega(\psi)$ edges.
\end{proof}

Talagrand's inequality has become a standard tool to show that a single node obtains slack proportional to its sparsity, see, e.g., \Cref{lem:slackGeneration}. The setting in the following lemma is more complicated, as it does not reason about the slack of a single (sparse) node, but about slack of a collection of (possibly non-sparse) nodes. 

\begin{restatable}{lemma}{lemMatchingSlack}
\label{lem:ordinaryMatchingSlack}
Fix any non-easy clique $C\subseteq V$, let $C\subseteq S\subseteq V$ be a set of nodes, and let $M$ be an arbitrary matching between $C$ and $N(C)\setminus C$.  Then, after $\textsf{SlackGeneration}$ run on $S$, then $C$ contains $\Omega(|M|)$ uncolored nodes with unit-slack in $G[S]$, with probability $1-\exp(-\Omega(|M|))$.
\end{restatable}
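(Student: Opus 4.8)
The goal is to show that, after \textsf{SlackGeneration} is run on $S \supseteq C$, the clique $C$ retains $\Omega(|M|)$ uncolored vertices that each have unit-slack in $G[S]$, except with probability $\exp(-\Omega(|M|))$. The natural strategy is to expose, for each matching edge $(w_i, c_i) \in M$ with $w_i \in N(C)\setminus C$ and $c_i \in C$, a "good event" $\mathcal{G}_i$ that, when it occurs, donates slack to $c_i$ (or to some uncolored neighbor of $w_i$ inside $C$). Concretely, I would call $(w_i,c_i)$ \emph{successful} if $w_i$ gets activated and permanently colored, and some vertex $a_i$ in the set $A_{w_i}$ of $\Omega(\eps\Delta)$ non-neighbors of $w_i$ inside $C$ (which exists by \Cref{lem:acd}(4) since $w_i\notin C$ has at most $(1-2\eps)\Delta$ neighbors in $C$) gets activated and permanently colored with the \emph{same} color $c_{w_i}$. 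When this happens, $w_i$ and $a_i$ form a monochromatic pair in the common neighborhood of $c_i$ (recall $c_i$ is adjacent to both, as $c_i\in C$ is adjacent to $a_i\in C$ and to its matched partner $w_i$), so $c_i$ gains a unit of permanent slack — provided $c_i$ itself stays uncolored, which happens with constant probability since $c_i$ is activated only with probability $p_g$.

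\textbf{Key steps.} First I would lower-bound $\prob{\mathcal{G}_i}$ by a constant: the probability that $w_i$ is activated is $p_g$; conditioned on $w_i$ trying a fixed color, the probability that at least one of the $\Omega(\eps\Delta)$ vertices of $A_{w_i}$ is activated and tries that same color is $\Omega(1)$ (each does so with probability $\Omega(1/\Delta)$, and these are independent activation/color choices); and then $w_i$ and that vertex survive their trials with constant probability, since each has at most $\Delta$ neighbors making independent uniform choices, so the probability a fixed color collides is bounded away from $1$. Multiplying, $\prob{\mathcal{G}_i \text{ and } c_i \text{ uncolored}} \ge \eta$ for an absolute constant $\eta>0$. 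Second, the events are \emph{not} independent across $i$ — the sets $A_{w_i}$ overlap, different $w_i$ may coincide is ruled out by $M$ being a matching on the $Y$-side too, but a single clique vertex can be the matched partner of only one edge, so the $c_i$ are distinct; still, color choices and survival events are globally correlated. I would therefore invoke a Talagrand-type / Azuma-type concentration bound (the same machinery cited for \Cref{lem:slackGeneration} and \cite[Lemma 3.1]{EPS15}) on the random variable $Z = \#\{i : c_i \text{ ends up uncolored with unit-slack from } \mathcal{G}_i\}$: $Z$ is a function of the independent per-vertex choices (activation bit and tried color), it is $c$-Lipschitz in each coordinate for a constant $c$ (changing one vertex's choice affects a bounded number of the counted events), and it is $r$-certifiable with $r=O(1)$ (each successful edge is certified by the choices of $w_i$, one vertex of $A_{w_i}$, and $c_i$). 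With $\E[Z] \ge \eta |M|$, Talagrand's inequality gives $Z \ge \E[Z]/2 = \Omega(|M|)$ with probability $1 - \exp(-\Omega(|M|))$.

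\textbf{Main obstacle.} The delicate part is the bounded-dependence / certifiability bookkeeping needed to apply Talagrand cleanly: I must make sure that the "good event" for edge $i$ is witnessed by only $O(1)$ underlying random choices, which forces a careful choice of which single vertex of $A_{w_i}$ to credit (e.g. the lowest-ID activated non-neighbor trying the right color) so that a single vertex flipping its choice cannot simultaneously ruin $\omega(1)$ matching edges. A subtlety is that one vertex $a \in C$ might lie in $A_{w_i}$ for many $i$; however, $a$ being colored can only \emph{help} (it contributes slack), and $a$ changing its tried color affects at most the $O(1)$ edges for which $a$ was the designated certificate vertex, so the Lipschitz constant stays $O(1)$. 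Another minor point is separating the "permanent slack for $c_i$" contributions from the "$c_i$ stays uncolored" requirement — these are positively correlated in the right direction (not trying a color makes survival of others no harder), so a clean way is to first condition on the set of non-activated clique vertices and argue within that conditioning, or simply absorb the constant-probability survival of $c_i$ into $\eta$ and note monotonicity. Once the Lipschitz and certifiability constants are pinned down, the concentration bound is routine.
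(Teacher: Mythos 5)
There is a genuine gap, and it is precisely where you flag "the delicate part."

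You define a per-edge good event $\mathcal{G}_i$ requiring that $w_i$ and some $a_i\in A_{w_i}$ each get activated and \emph{permanently colored} with the same color, and you then claim the resulting count $Z$ is $O(1)$-certifiable with certificate vertices $w_i$, $a_i$, $c_i$. That claim is false: certifying that $w_i$ \emph{kept} its color is a statement about \emph{all} of $w_i$'s up-to-$\Delta$ neighbors (none of them tried the same color), so revealing the choices of $w_i$, $a_i$, $c_i$ does not certify $\mathcal{G}_i$. The natural random variable counting successful monochromatic pairs is therefore not $O(1)$-certifiable, and Talagrand cannot be applied to it directly. Your "main obstacle" paragraph worries about which single $a\in A_{w_i}$ to credit, but the real obstruction is that "kept its color" is a negative event over $\Theta(\Delta)$ random variables. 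The paper's proof gets around this by splitting $Z = T - D$, where $T$ counts colors that some pair $(w_i, q)$ merely \emph{tried} (with $c_i$ inactive) — certifiable by $3$ vertices — and $D$ counts those trials that were spoiled — certifiable by $4$ vertices, since a spoiled trial is witnessed by one extra conflicting neighbor. Both $T$ and $D$ are $O(1)$-Lipschitz and $O(1)$-certifiable, so Talagrand applies to each separately and the conclusion follows from $Z = T - D$ and $\E[T]-\E[D]=\E[Z]=\Omega(|M|)$. Without this decomposition (or an equivalent device), your concentration step does not go through.

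A second, smaller omission: the matching $M$ goes between $C$ and $N(C)\setminus C$, but $S$ is only required to contain $C$ — some or even all of the outside endpoints $w_i$ may lie outside $S$ and hence not participate in \textsf{SlackGeneration} at all, so $\prob{\mathcal{G}_i}$ is $0$ for those $i$. Your argument silently assumes every $w_i$ is activated with probability $p_g$. The paper handles this with a preliminary case split: if at least half the $w_i$ lie outside $S$, then a Chernoff bound already gives $\Omega(|M|)$ non-activated $c_i$ with a-priori unit slack in $G[S]$ (their degree in $G[S]$ is strictly below $\Delta$); otherwise one restricts $M$ to the edges with $w_i\in S$ and runs the Talagrand argument there. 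You need some version of this split before the probabilistic part can start.

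Your expectation lower bound and the overall plan (per-edge good events giving monochromatic pairs in $N(c_i)$, then concentration) are in the right spirit and match the paper's outline; the two missing ingredients above are what make the concentration step actually correct.
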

\begin{proof}
	Let $M = (c_i,w_i)_i$ be the matching.
	Let $W = \{i : w_i \in S\}$.
	Suppose first that $|W| \ge |M|/2$, when at least half of the nodes $w_i$ are in $S$. 
	Then the set $T = \{i \in W : \text{\begin{math}c_i\end{math} is not activated}\}$ is of size $\Omega(|W|) = \Omega(|M|)$ with probability $\exp(-\Omega(|M|))$ by Chernoff. The nodes $c_i$ with $i \in T$ all obtain unit-slack (by virtue of their neighbor being outside $S$), satisfying the lemma.
	So, assume from now on that $|W| < |M|/2$.
	For simplicity, we redefine $M$ to contain only those pairs $(w_i, c_i)$ where $w_i \not\in S$.
	
	For each $w_i$, let $A_i = C \setminus N(w_i)$ be the set of non-neighbors of $w_i$ in $C$.
	For a color $c \in [\Delta]$, let $Z_c$ be the indicator random variable of the event that some $w_i$ and some $q \in A_i$ tried $c$ and kept it, while $c_i$ was not activated.
	Let $Z = \sum_{c} Z_c$.
	Observe that each color that contributes to $Z$ results in a distinct uncolored node ($c_i$) with unit-slack. Hence, $Z$ lower bounds the number of unit-slack nodes in $C$.
	We first bound the expectation of $Z$ from below and then use Talagrand's inequality to show that it holds with high probability.

	\begin{claim}
		$\Exp[Z] = \Omega(|M|)$.
	\end{claim}
	
	\begin{proof}
		For each $i \in [|M|]$ and each $q \in A_i$, let $Y_{i,q}$ be the indicator random variable of the event that four conditions hold: 
		a) $c_i$ is not activated, b) $w_i$ is activated, trying some color $c \in [\Delta/2]$, c) $q$ also tries $c$, and d) No node in $N_{i,q} = \{w_j\}_j \cup N(w_i) \cup N(q) \setminus \{w_i,q\}$  tries $c$. 
		Let $Y_i = \sum_{q \in A_i} Y_{i,q}$ and let $Y = \sum_i Y_i$.
		Observe that whenever $Y_i = 1$, then $c_i$ receives unit-slack; namely, then $w_i$ tried the same color as some non-neighbor $q \in A_i$, $c_i$ was not activated, and both $w_i$ and $q$ actually kept their color. 
		Thus, $Y$ is a lower bound on the number of unit-slack nodes in $C$ from running \textsf{SlackGeneration}.
		In fact, $Y \le Z$, as it counts how many colors are tried and kept by exactly one pair $w_i, q$, while $Z$ allows multiple pairs to be counted. 
		
		We now bound the expectation of $Y$.
		Note that the four conditions for $Y_{i,q}=1$ are independent. Their probabilities are:
		a) $(1-p_q)$, b) $p_q$, c) $p_g \cdot 2/\Delta$, and d) at least $(1-1/\Delta)^{3\Delta} \ge e^{-12}$ (using that $|N_{i,q}| \le 3\Delta)$.
		Hence, $\Exp[Y_{i,q}] \ge (1-p_q) p_q \cdot 1/\Delta \cdot e^{-6} = \Theta(1/\Delta)$. Thus, $Y = \sum_i \sum_{q \in A_i} \Theta(1/\Delta) = \Omega(|M|)$, using that $|A_i| = \Omega(\Delta)$ (\Cref{lem:acd} (4)).
	\end{proof}
	
	Notice that $Z$ is only a function of the activation and coloring random variables $\omega_u$ of \textsf{SlackGeneration} within $N[C]\cup N(W)$, which are themselves independent.
	We would like to apply Talagrand's inequality (\cref{lem:talagrand}), but unfortunately $Z$ is not $O(1)$-certifiable. Instead, we split it into $Z = T - D$, where
	$T$ counts the number of colors that are tried by some pair $w_i, q$ with $q \in A_i$ where $c_i$ is not activated.
	Also, $D$ counts the number of colors that are tried by some pair $w_i, q$ with $q \in A_i$ and $c_i$ not activated, but where at least one of $w_i$, $q$ did not keep their color.
	We show that $T$ and $D$ that are both Lipschitz and certifiable.
	
	\begin{claim}
		$T$ and $D$ are $O(1)$-Lipschitz and $O(1)$-certifiable.
	\end{claim}
	\begin{proof}
		Both of these variables are functions of $\omega_u$, the activation and color trials of nodes $u \in N[C]\cup N(W)$. 
		Both are $2$-Lipschitz: changing the activation/color of a single node can only affect the two colors involved (the one changed to and the one changed from).
		
		Any color that contributes to $T$ is also certifiable by the randomness $\omega_v$ of $3$ nodes: $w_i$, $c_i$, and $q$. $D$ is certifiable by the randomness of $4$ nodes because for one of the trials to be non-successful we require one more node (a $w_j$ node, $j \ne i$, or a neighbor of either $w_i$ or $q$) to try the same color.
	\end{proof}
	
	We now apply Talagrand's inequality (\cref{lem:talagrand}) using these relations to obtain that
	\[
	Pr\left[|T-\E[T]|\ge \E[Z]/10\right]\le \exp\left(-\Theta(1)\frac{(\E[Z]/10-O(\sqrt{\E[T]}))^2}{\E[T]}\right)\le \exp(-\Omega(|M|))\ 
	\]
	where we also used that $\E[T]\le T\le |M| = \Theta(\E[Z])$. Also, with probability $\exp{(-|M|)}$ we obtain that $|T-\E[T]|\le \E[Z]/10$. Similarly, with the same probability we obtain $|D-\E[D]|<\E[Z]/10$, since $D \le T$. By the linearity of expectation, $\E[T]-\E[D] = \E[Z]$.
	Putting together we see that, with probability $2\exp{(-\Omega(|M|))}=\exp{(-\Omega(|M|))}$, 
	\[ |Z - \Exp[Z]| = |(T-\E[T])-(D-\E[D])| \le 2\E[Z]/10\ . \]
	Thus, with probability $\exp{(-\Omega(|M|))}$, $Z \ge 4 \E[Z]/5 = \Omega(|M|)$, which completes the proof.
\end{proof}

Now, we are ready to prove that ordinary ACs obtain many uncolored nodes with unit-slack.
\begin{lemma}[Slack in Ordinary ACs]	Fix an ordinary AC $C$, let $C\subseteq S\subseteq V$ be a set of nodes,
\label{lem:ordinarySlack}
After $\textsf{SlackGeneration}(S)$ each ordinary AC $C$ contains at least $\Omega(\psi)$ uncolored nodes with unit-slack in $G[S]$, with probability $1-\exp(-\Omega(\psi))$.
\end{lemma}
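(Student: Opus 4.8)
The plan is to combine the two structural and probabilistic facts already established: the deterministic matching bound of \Cref{obs:vertexCoverMatching} and the slack-from-matching lemma of \Cref{lem:ordinaryMatchingSlack}. First I would invoke \Cref{obs:vertexCoverMatching} for the fixed ordinary AC $C$. For this we need $|C| \ge \Delta - \psi$, which holds for ordinary ACs: by the definition of the classification, an AC with $|C| < \Delta - \psi$ that is non-easy and has a special neighbor would be difficult, but also, more to the point, the fine-grained classification and \Cref{lem:acd} (2) give $|C|\ge (1-\eps)\Delta$, and since $\psi=\Delta^{1/3}=o(\eps\Delta)$ we certainly have $|C|\ge (1-\eps)\Delta \ge \Delta-\psi$ for $\Delta$ large enough. (If the size condition needs to be quoted explicitly, I would note that ordinary ACs are cliques by \Cref{obs:difficultNoSpecial}, so they are in particular non-easy, and $|C| \le \Delta$ since the graph has no $\Delta+1$-clique, while $|C| \ge (1-\eps)\Delta$; hence $|C| \ge \Delta - \psi$.) This yields a matching $M$ between $C$ and $N(C)\setminus C$ of size $|M| = \Omega(\psi)$.

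Next I would feed this matching $M$ into \Cref{lem:ordinaryMatchingSlack}, with the same set $S$ satisfying $C \subseteq S \subseteq V$. Since $C$ is an ordinary AC it is in particular a non-easy clique (by \Cref{obs:difficultNoSpecial}), so the hypotheses of \Cref{lem:ordinaryMatchingSlack} are met. The lemma then gives that, after $\textsf{SlackGeneration}(S)$, the clique $C$ contains $\Omega(|M|)$ uncolored nodes with unit-slack in $G[S]$, with probability $1-\exp(-\Omega(|M|))$. Substituting $|M| = \Omega(\psi)$ gives $\Omega(\psi)$ uncolored unit-slack nodes in $C$ with probability $1-\exp(-\Omega(\psi))$, which is exactly the claim.

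Since $\Delta = \omega(\log^3 n)$ implies $\psi = \Delta^{1/3} = \omega(\log n)$, the failure probability $\exp(-\Omega(\psi))$ is $1/\poly(n)$; taking a union bound over all (at most $n$) ordinary ACs shows the property holds simultaneously for every ordinary AC w.h.p., which is what \Cref{lem:slackGenerationProperties} (ii) ultimately needs. I expect the proof of this particular lemma to be essentially a two-line composition; there is no real obstacle here, as all the work has been pushed into \Cref{obs:vertexCoverMatching} (deterministic, greedy) and \Cref{lem:ordinaryMatchingSlack} (the Talagrand argument). The only thing to be slightly careful about is verifying the size precondition $|C|\ge \Delta-\psi$ for ordinary ACs and making sure the $S$ in both invoked statements is the same — both are immediate from the definitions and \Cref{lem:acd}.
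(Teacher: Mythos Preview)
Your argument has a genuine gap: the inequality you use to verify the precondition of \Cref{obs:vertexCoverMatching} is in the wrong direction. You write that since $\psi = \Delta^{1/3} = o(\eps\Delta)$ we have $(1-\eps)\Delta \ge \Delta - \psi$. But $\psi = o(\eps\Delta)$ means $\psi < \eps\Delta$, hence $\Delta - \psi > (1-\eps)\Delta$; the ACD lower bound $|C|\ge (1-\eps)\Delta$ is strictly \emph{weaker} than $|C|\ge \Delta-\psi$. So you cannot conclude that every ordinary AC satisfies $|C|\ge \Delta-\psi$, and \Cref{obs:vertexCoverMatching} does not apply to ordinary ACs of size in the range $[(1-\eps)\Delta,\ \Delta-\psi)$.

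The paper's proof accordingly splits into two cases. For $|C| > \Delta - \psi$ it does exactly what you propose: invoke \Cref{obs:vertexCoverMatching} to get a matching of size $\Omega(\psi)$ and feed it into \Cref{lem:ordinaryMatchingSlack}. For $|C| \le \Delta - \psi$ it argues directly that \emph{every} node of $C$ gets $\Omega(\psi)$ slack: since $C$ is a clique, $\Delta - \deg(v) = \Delta - |C| - e(v) \ge \psi - e(v)$, so a node with $e(v)\le \psi/2$ has a priori slack $\psi/2$, while a node with $e(v)\ge \psi/2$ gets $\Omega(e(v))=\Omega(\psi)$ slack via \Cref{cor:externalDegreeSlack}; a Chernoff bound then ensures $\Omega(\Delta)$ of them remain uncolored. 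Your proposal is correct for the large-clique case but misses this second case entirely.
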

\begin{proof}
\textit{Case $|C|\leq \Delta-\psi$.}
Let $v$ be an arbitrary node of $C$. As the clique is non-nice we have $deg_C(v)=|C|$. We obtain that the palette size of $v$ exceeds its degree by $\psi-e(v)$, because $\Delta-deg(v)=\Delta-(e(v)+deg_C(v))=\Delta-(e(v)+|C|)\geq \Delta-e(v)-(\Delta-\psi)=\psi-e(v)$. 
Thus, if, $e(v)\leq \psi/2$,  node $v$ has permanent slack (a priori) of at least $\psi/2$, regardless of the outcome of the random bits of \textsf{SlackGeneration}. Now, consider the case that $e(v)\geq \psi/2$. Then, \Cref{cor:externalDegreeSlack} provides $\Omega(e(v))=\Omega(\psi)$ slack for $v$ with probability $1-\exp(-\Omega(e(v))=1-\exp(-\Omega(\psi))$, which also holds in the stated subgraph. Additionally, each node $v$ remains uncolored with probability $\geq 1-p_g\geq 1/2$, and a Chernoff bound shows that $\Omega(\Delta)=\Omega(\psi)$ nodes of the clique remain uncolored with error probability $\exp(-\Omega(\Delta))$.  A union bound over these events shows the claim. 

\textit{Case $|C|> \Delta-\psi$:} By \Cref{obs:vertexCoverMatching} there exists a matching $M = (c_i,w_i)_i$ of size $\Omega(\psi)$ between $C$  and $N(C)\setminus C$ and by the lemma assumption  all nodes of $C$ participate in \textsf{SlackGeneration}. Hence, we can apply \Cref{lem:ordinaryMatchingSlack} to obtain that the ordinary clique obtains $\Omega(|M|)=\Omega(\psi)$ nodes with unit-slack with probability $1-\exp{(\Omega(-\psi))}$. 
\end{proof}

Escape nodes also get slack via the many non-edges between their neighbors.

\begin{lemma}
\label{lem:escapeSlack}
Let $S\supseteq\cR$ be a subset of nodes. After  $\textsf{SlackGeneration}(S)$, each escape node has slack $\Omega(\psi)$ in $G[V]$, with probability $1-\exp(-\Omega(\psi))$.
\end{lemma}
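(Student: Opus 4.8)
The plan is to split on the degree of the escape node $v$ and, in the main case, reduce everything to a lower bound on the local sparsity $\zeta_v$ together with the slack--generation machinery. The first thing I would record is a monotonicity observation: the slack of $v$ in $G[V]$ never decreases during the algorithm, since coloring a neighbour of $v$ removes at most one colour from $v$'s palette and exactly one node from $v$'s uncoloured degree. Hence it suffices to bound the slack right after \textsf{SlackGeneration}. In particular, if $deg(v)\le \Delta-\psi$ then $v$ has a priori slack $\ge\psi$, which is already $\Omega(\psi)$ and holds deterministically, so I may assume $deg(v)>\Delta-\psi$.

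Since $v$ is an escape it is picked by (at least) two distinct runaway ACs $C_1$ and $C_2$, so $C_1\cup C_2\subseteq\cR\subseteq S$, and $v$ is special for each, giving $D_i:=N(v)\cap C_i$ with $|D_i|\ge\phi$ for $i=1,2$. The goal is to exhibit $\Omega(\phi\Delta)$ non-edges inside $N(v)$ with \emph{both} endpoints in $S$; the last requirement matters because a monochromatic pair of non-adjacent neighbours of $v$ increases $v$'s slack in $G[V]$ only if both of them get coloured in \textsf{SlackGeneration}, which needs both to participate. By \Cref{obs:outsideAndAntiDegree} every node of an AC has at most $4\eps\Delta$ neighbours outside its own AC; as $C_1,C_2$ are disjoint, each $w\in D_2$ has at most $4\eps\Delta$ neighbours in $D_1$, hence at least $|D_1|-4\eps\Delta$ non-neighbours there, and symmetrically. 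Thus if $\max(|D_1|,|D_2|)\ge 8\eps\Delta$ (say $|D_1|\ge 8\eps\Delta$), summing over $w\in D_2$ yields $\ge |D_2|\cdot 4\eps\Delta\ge 4\eps\phi\Delta$ non-edges between $D_1$ and $D_2$, all with both endpoints in $S$, so $\zeta_v=\Omega(\eps\phi)=\Omega(\phi)$. Otherwise $|D_1|,|D_2|<8\eps\Delta$, and then $deg(v)>\Delta-\psi$ forces $v$ to have $\ge \Delta-\psi-16\eps\Delta=\Omega(\Delta)$ neighbours outside $C_1\cup C_2$; since each $u\in D_1\subseteq C_1$ has at most $4\eps\Delta$ neighbours outside $C_1$, it has $\Omega(\Delta)$ non-neighbours among them, for a total of $\Omega(\phi\Delta)$ non-edges in $N(v)$ and again $\zeta_v=\Omega(\phi)$.

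Finally I would invoke the slack bound: with $\zeta_v=\Omega(\phi)$, \Cref{lem:slackGeneration} (equivalently, the one-round analysis of \cite{EPS15} applied to $G[S]$) shows that $v$ collects $\Omega(\phi)$ colours each used by $\ge 2$ of its coloured neighbours, with probability $1-\exp(-\Omega(\phi))$; each such colour drops $v$'s palette by one while removing two neighbours from its uncoloured degree, so this is slack $\Omega(\phi)$ in $G[V]$. Since $\phi=\Delta^{2/3}/2=\Omega(\psi^2)$ and $\psi\to\infty$ (as $\Delta=\omega(\log^3 n)$), this is $\Omega(\psi)$ slack with probability $1-\exp(-\Omega(\psi))$, as claimed. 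The one real obstacle is the sub-case where $v$ has few neighbours in both $C_1$ and $C_2$: there the abundant non-edges of $N(v)$ run to the remaining neighbours of $v$, which need not be participating nodes, so to turn them into slack in $G[V]$ rather than merely in $G[S]$ one must argue separately --- using that $S$ also contains the sparse and ordinary-AC nodes, or that some further runaway AC picks $v$ --- that enough of these non-edges have their outer endpoint in $S$ as well.
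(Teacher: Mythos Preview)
Your second sub-case is a genuine gap, and you correctly identify it yourself: when $|D_1|,|D_2|<8\eps\Delta$, the $\Omega(\phi\Delta)$ non-edges you find run from $D_1$ to neighbours of $v$ outside $C_1\cup C_2$, and those outer endpoints need not lie in $S$. The lemma only assumes $S\supseteq\cR$, so you cannot appeal to sparse or ordinary nodes being in $S$, and ``some further runaway AC picks $v$'' is not guaranteed either. Without both endpoints in $S$, \Cref{lem:slackGeneration} gives slack only in $G[S]$, not in $G[V]$.

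The fix is that this sub-case never needs to arise, because you are using far too weak an outside-degree bound. You invoke \Cref{obs:outsideAndAntiDegree} ($e(u)\le 4\eps\Delta$), which holds for arbitrary ACs; but $C_1,C_2$ are \emph{runaway}, hence difficult, hence cliques of size $\ge\Delta-\psi$. Every node of such a clique therefore has outside degree at most $\psi$. With this bound, each $w\in D_2$ has at most $\psi$ neighbours in $D_1$, so the bipartite non-edge count between $D_1$ and $D_2$ is at least $|D_1|(|D_2|-\psi)\ge\phi(\phi-\psi)=\Omega(\phi^2)$ directly, with both endpoints in $\cR\subseteq S$. This gives $\zeta_v\ge\Omega(\phi^2/\Delta)=\Omega(\psi)$ and \Cref{lem:slackGeneration} finishes. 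This is exactly the paper's argument: no degree split, no case analysis, and the weaker sparsity bound $\Omega(\psi)$ (rather than your $\Omega(\phi)$) is all that is claimed or needed.
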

\begin{proof}
Since a runaway AC $C$ is a clique, each node in $C$ has at least $\Delta - \psi$ neighbors inside $C$, and therefore an outside degree at most $\psi$.
Let $v$ be an escape node.
Let $C_1$ and $C_2$ be two ACs that $v$ is an escape for, and
let $X_i, i=1,2$ be the set of neighbors of $v$ in $C_i$. By the escape definition of $v$, we have $|X_i|\geq \phi$ and as we observed the outside degree of each node in $X_1$ is at most $\psi$. Hence, the bipartite graph between $X_1$ and $X_2$ is sparse, i.e.,  contains at least $|X_1|\cdot |X_2|-|X_1|\cdot\psi\geq |X_1|\cdot (\phi-\psi)=\Omega(\phi^2)$ non-edges. 
It follows that the local sparsity of $v$ in $G[\cR]$ (and also in $G[V]$) is at least $\zeta_v=\Omega(\phi^2/\Delta)=\Omega(\psi)$.  All nodes of $X_1\cup X_2\subseteq\cR\subseteq S$ participate in \textsf{SlackGeneration}, so via \Cref{lem:slackGeneration} the escape node $v$ obtains slack at least $\Omega(\zeta)=\Omega(\psi)$ with probability $1-\exp(-\Omega(\psi))$ in $G[\cR]$. The slack is permanent and also holds in $G[V]$.
\end{proof}

\begin{proof}[Proof of \Cref{lem:slackGenerationProperties}]
Property $(i)$ follows with \Cref{lem:slackGeneration} for a sparse node (as it has sparsity $\eps\Delta=\Omega(\log^3 n)$) and with \Cref{lem:escapeSlack} for an escape node. Property $(ii)$ holds for a single ordinary AC with \Cref{lem:ordinarySlack}, as all nodes of the ordinary clique participate in $\slackgeneration$. As these claims hold w.h.p. the claim follows for all sparse nodes, all escape nodes and all ordinary cliques via a union bound. 
Property $(iii)$ follows as each node that participates in \textsf{SlackGeneration} remains uncolored with probability $\geq 1-p_g\geq 3/4$. Thus, for each of special node of a difficult AC, in expectation, at least $3/4$ of the respective node set remains uncolored. Note that a difficult AC can neither contain a protector nor an escape, see \Cref{obs:difficultNoSpecial}. A Chernoff bound shows that each condition is satisfied with probability $\geq 1-1/n^3$, and a union bound over the conditions (as there are only $O(n/\Delta)$ difficult ACs) proves the claim. 
\end{proof}

\section{\texorpdfstring{$\Delta$-coloring in the Congested Clique}{Delta-coloring in the Congested Clique}}\label{sec:CC}
In the Congested Clique model, each node can send messages of size $O(\log n)$ to each of the other $n-1$ nodes in every round. The goal is to minimize the total number of rounds needed for each node to come up with its part of the solution. 

In this section we prove the following theorem.

\thmMaincongestedClique*

The proof uses two components. We first extend a Congested Clique coloring algorithm of \cite{CFGUZ18} to a more flexible range of palettes (\cref{lemma:d1LC}). We then argue (\cref{lem:balanced}) that all but one of the $deg+1$-list coloring instances produced by our algorithm of \Cref{sec:algHighDegree} satisfies the prerequisites of \cref{lemma:d1LC}. The remaining instance has small size and can therefore be solved in $O(1)$ rounds of Congested Clique by sending it to one node via Lenzen's routing \cite{lenzen13}.

\begin{lemma}[Theorem 3.2 in \cite{CFGUZ18}]\label{lemma:d1LC}%
There is a $O(1)$-round Congested Clique algorithm to solve any list coloring instance in which each node $v$ has a list of size at least $\max(d_v+1, c\cdot \Delta, \log^{4+\eps} n)$, for some constants $c,\eps > 0$. 
\end{lemma}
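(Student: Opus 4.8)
The plan is to reopen the $O(1)$-round Congested Clique $(\Delta+1)$-coloring algorithm of \cite{CFGUZ18} and check that each of its steps survives replacing the common palette $[\Delta]$ by a node's individual list, provided that list has size at least $\max(d_v+1, c\Delta, \log^{4+\eps}n)$. That algorithm has three phases: (a) compute an almost-clique decomposition of the communication graph in $O(1)$ rounds; (b) run one round of random color trials, which endows every sparse node with $\Omega(\Delta)$ permanent slack, and then color the sparse nodes in $O(1)$ rounds using that slack; (c) color every almost-clique in parallel in $O(1)$ rounds via a constant number of synchronized color trials, using that each node of an almost-clique $C$ has $(1-O(\eps))\Delta$ neighbors inside $C$ and only $O(\eps\Delta)$ anti- and outside neighbors. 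Apart from the requirement that each list have at least $d_v+1$ colors (needed for solvability at all), these phases only ever use that the palette has $\Theta(\Delta)$ colors, so the scheme should go through with all constants now depending also on $c$.

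Concretely, I would first relabel colors into a polynomially bounded universe so that a color fits into an $O(\log n)$-bit message (automatic in our application, where every list is a subset of $[\Delta]$), and truncate each list to exactly $\lceil\max(d_v+1, c\Delta, \log^{4+\eps}n)\rceil$ colors; this leaves a valid $(deg+1)$-list coloring instance in which every list has size $\Omega(\Delta)$ and $\Omega(\log^{4+\eps}n)$. Phase (a) is palette-oblivious and carries over verbatim. For phase (b), the slack-generation analysis (\Cref{lem:slackGeneration} and its relatives) uses only that a participating node picks a uniform random color out of $\Theta(\Delta)$, so a sparse node still obtains $\Omega(\Delta)$ slack, and the $\log^{4+\eps}n$ lower bound is exactly what drives the Talagrand/Chernoff failure probabilities down to $n^{-\omega(1)}$ for a union bound over all $n$ nodes; the residual instance on the sparse nodes (lists exceeding uncolored degrees by $\Omega(\Delta)$) is then dispatched in $O(1)$ rounds as in \cite{CFGUZ18}. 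Every message exchanged in phases (a)--(b) is a single color or flag, so the Congested Clique bandwidth bound is respected.

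The crux, and where the real work lies, is phase (c). Inside an almost-clique $C$ a node now sees a list of at least $c\Delta$ colors while its uncolored degree is at most $|C|-1\le(1+O(\eps))\Delta$; up to the constant $c$ this is the ``palette comparable to the clique'' regime for which the synchronized-color-trial procedure of \cite{CFGUZ18} is built, and I expect that rerunning their analysis with the shared palette replaced by per-node lists still yields a geometric per-round decay of the uncolored population of $C$, hence $C$ fully colored after $O(1/c)$ rounds w.h.p. The obstacle is making this precise: their argument relies on a uniformly random permutation of a palette whose size is tied to $|C|$ almost-properly coloring $C$, so one must check that heterogeneous lists of size merely $\Theta(\Delta)$ still admit an assignment saturating all but an $O(\eps)$-fraction of $C$, that the conditional-probability estimates tolerate unequal lists, and that no single node is ever forced to gather $\omega(n)$ bits. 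Carrying the constant $c$ through these bounds---it governs the slack in the node--color matching and thus the decay rate---is the bulk of the extension; the decomposition, sparse-node, and preprocessing steps are routine.
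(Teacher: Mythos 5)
The proposal sets out to ``reopen'' the algorithm of Theorem~3.2 in \cite{CFGUZ18}, but the algorithm you describe is not the one that theorem proves. You describe a three-phase pipeline of almost-clique decomposition, slack generation for sparse nodes, and synchronized color trials inside almost-cliques---i.e., the HSS/CLP-style approach. Theorem~3.2 of \cite{CFGUZ18}, and the lemma here, rest on a \emph{recursive random graph partitioning} (their Lemma~3.1): the vertex set is randomly split into $k=\sqrt{\Delta}$ buckets $B_1,\dotsc,B_k$ plus a small leftover $L$ (each node joining $L$ with probability $q=\Theta(\sqrt{\log n}/\Delta^{1/4})$), the color universe is randomly split into $k$ classes $C_1,\dotsc,C_k$, bucket $B_i$ is restricted to palette $C_i$, and because each $B_i$ has $O(|V|)$ edges and maximum degree $O(\sqrt{\Delta})$ it can be shipped to a single node via Lenzen's routing and colored locally (the leftover $L$ is handled last with whatever colors remain). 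The whole point of the present lemma is to check that this partition still satisfies the four invariants (bucket size, available colors in $B_i$, available colors in $L$, residual degrees) when the palette hypothesis is weakened from $|\Psi(v)|\ge\max\{\deg(v),\Delta-\Delta^\lambda\}+1$ to $|\Psi(v)|\ge\max\{\deg(v),c\Delta\}$, which the paper does by reworking the Chernoff estimates with the constant $c$ carried through. None of this appears in your proposal.

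Beyond the misidentification, the phase you call ``the crux'' is left unproved: you write that you ``expect'' the synchronized-trial analysis to survive heterogeneous lists and yield geometric decay, but do not show it. In Congested Clique, coloring the dense part in $O(1)$ rounds is exactly where the random-bucketing-plus-Lenzen-routing trick is doing the work; iterated color trials inside almost-cliques do not obviously terminate in $O(1)$ rounds. Concretely, what is missing is: (1) the restatement of the partitioning lemma with palettes of size $\ge \max(\deg(v),c\Delta)$, (2) a Chernoff argument that $|\Psi(v)\cap C_i| \ge \max\{\deg_{B_i}(v), c\Delta_i\}+1$ w.h.p.\ (using $c\Delta$ in place of $\Delta - \Delta^\lambda$, and that $d_i'(v)\ge c\sqrt\Delta$ so the relative error is controlled), and (3) the corresponding degree bounds for $B_i$ and $L$. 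The role of the $\log^{4+\eps}n$ lower bound on list size, which you attribute to ``Talagrand/Chernoff failure probabilities'' for slack generation, is in fact inherited from \cite{CFGUZ18} to ensure that the recursive partition bottoms out and that the Chernoff bounds at the base level have polynomially small failure probability; slack generation is not part of this lemma at all.
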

\begin{proof}
The proof is similar to the proof of \cite[Theorem 3.2]{CFGUZ18}. 
The main change is the generalization of the graph partitioning \cite[Lemma 3.1]{CFGUZ18} lemma to $\Delta'=c\cdot \Delta$, where the assumed palette size, aka list size, changes to $|\Psi(v)|\geq \max\{deg_G(v),c\cdot\Delta\}$ instead of $|\Psi(v)|\geq \max\{deg_G(v),\Delta-\Delta^{\lambda}\}+1$.

For the sake of readability, we repeat the properties of \cite[Lemma 3.1]{CFGUZ18} here, in a slightly modified way to suit our setting.
We partition the vertices into sets $B_1, \dotsc, B_k$, and a leftover set $L$, where $k=\sqrt{\Delta}$. Namely, we assign each node to $L$ with probability $q = \Theta(\sqrt{\log n}/\Delta^{1/4})$ and assign other nodes u.a.r.\ to the other sets. We also partition the colors u.a.r.\ into sets $C_1, \dotsc, C_k$, i.e., assign each color to set $C_i$ with probability $p = 1/k$. 
The claim of Lemma 3.1 in \cite{CFGUZ18} is that these partitions can be computed in $O(1)$ rounds of the Congested Clique, so that the following properties hold with probability $1-1/\poly(n)$:
\begin{description}
\item[i) Size of each part:] We have $|E(G[B_i])| = O(|V|)$, for each $i \in [k]$, and $|L| = O(q |V|) = O(\frac{\sqrt{\log n}}{\Delta^{1/4}}) \cdot |V|$. 

\item[ii) Available colors in $B_i$:] For each $i\in \{1, \ldots,k\}$ and $v \in B_i$, the number of available colors for $v$ in the subgraph $B_i$ is $g_i(v) := |\Psi(v) \cap C_i|$. We have $g_i(v) \geq \max\{\deg_{B_i}(v), c\cdot \Delta_i\}+1$, where 
$\Delta_i := \max_{v \in B_i}\deg_{B_i}(v)$. 

\item[iii) Available colors in $L$:] For each $v \in L$, define $g_L(v) := |\Psi(v)| - (\deg_G(v) - \deg_L(v))$. We have $g_L(v) \geq \max\{\deg_L(v), c\cdot \Delta_L\}+1$ for each $v \in L$, where $\Delta_L := \max_{v \in L}\deg_{L}(v)$. Note that $g_L(v)$ represents a lower bound on the number of available colors in $v$ {\em after} all of $B_1, \ldots, B_k$ have been colored.

\item[iv) Remaining degrees:] The maximum degrees of $B_i$ and $L$ are $\max_{v \in B_i} \deg_{B_i}(v) = O(\sqrt{\Delta})$ and $\max_{v \in L} \deg_{L}(v) = O(q\Delta) =  O(\frac{\sqrt{\log n}}{\Delta^{1/4}}) \cdot \Delta$. For each vertex $v$, we have $\deg_{B_i}(v)\leq \max\{O(\log n), O(1/\sqrt{\Delta}) \cdot \deg_G(v)\}$ if $v \in B_i$, and $\deg_{L}(v)\leq \max\{O(\log n), O(q) \cdot \deg_G(v)\}$ if $v \in L$.
\end{description}

Our strengthened restatement of Lemma 3.1 of \cite{CFGUZ18} then becomes:

\begin{lemma}[Theorem 3.2 in \cite{CFGUZ18}]\label{T:partition}
Suppose $|\Psi(v)| \ge \max(\deg_G(v), c \Delta)$ and $|V| > \Delta = \omega(\log^2 n)$.
The two partitions $V = B_1 \cup \cdots B_k \cup L$ and $C = \bigcup_{v \in V} \Psi(v) = C_1 \cup \cdots \cup C_k$ satisfy the required properties, w.h.p.
\end{lemma}

We illustrate here how to modify the proof for the properties of $B_i$ (Property ii) and iv)). The ones for $L$ are similar. Property i) and the second half of Property iv) follow unchanged from \cite{CFGUZ18}.

Let $d_i(v) = \deg_{B_i}(v)$ and let $\Delta_i = \max_{v\in B_i} d_i(v)$.
Let $d'(v) = \max(\deg_G(v), c\Delta)$ and let $d'_i(v) = \max(\deg_{B_i}(v),c \Delta_i)$.
Let $\epsilon_1 = q/3$.

\paragraph{Property iv) ($B_i$):}
The expected value of $d_i(v)$ is $(1-q)p \deg_G(v)$, so by Chernoff, it holds that
$d_i(v) \le (1+\epsilon_1) (1-q)p \deg_G(v)$ with probability $1-\exp(-\Theta(\epsilon_1^2) p \deg_G(v))$. 
In particular, $d'_i(v) \le (1+\epsilon_1) (1-q)p d'(v)$ with probability $1-\exp(-\Theta(\epsilon_1^2) p d'(v)) = 1-1/\poly(n)$. 
Thus, $\Delta_i \le (1+\epsilon_1) (1-q)p \Delta \le (1-2q/3)p \Delta$, w.h.p.
Note that since $d'(v) \ge c \Delta$, we have that
$d'_i(v) \ge c p \Delta = c \sqrt{\Delta}$.

\paragraph{Property ii):}
We show that with probability $1-1/\poly(n)$, we have $|\Psi(v)\cap C_i|\geq d'_i(v)+1$. 
The palette of $v$ in $G$ is at least $|\Psi(v)| \ge d'(v)+1$.
The expected size of $|\Psi(v)\cap C_i|$ is $p |\Psi(v)| \ge p(d'(v)+1)$.
Thus by Chernoff, $|\Psi(v) \cap C_i| \ge (1-\epsilon_1) p |\psi(v)|$ with probability $1- \exp(-\Theta(\epsilon_1^2) d'(v)) = 1- \exp(-\Theta(\epsilon_1^2) c \sqrt{\Delta}) = 1-1/\poly(n)$. 
Hence, we have that
\[ |\Psi(v) \cap C_i| \ge (1-q/3) p |\psi(v)| 
\ge (1-q/3) p d'(v) \ge (1+q/3) d'_i(v)\ . \]
To complete our claim, we observe that $d'_i(v) \cdot q/3 = \Theta(\Delta^{1/4} \cdot \sqrt{\log n})$ is at least 1.

The rest of the proof of the adapted version of \cite[Lemma 3.1]{CFGUZ18} and the adapted version of \cite[Theorem 3.2]{CFGUZ18}, hence of our lemma, goes through the same. 
\hfill
\end{proof}

We now argue that the $deg+1$-list coloring instances produced by \Cref{thm:ListColorReduction} have a nice structure.
We say that a $deg+1$-list coloring instance of maximum degree $\Delta_0$ is \emph{$c$-balanced} if each node has palette of size $\max(\deg_G(v)+1,\Delta_0/c)$, and it is \emph{balanced} if it is $c$-balanced for some absolute constant $c> 0$.

Let $P$ be the set of pairs $(p_C,w_C)$ for guarded ACs $C$.
Let $H$ be the (virtual) graph formed on $P$, with an edge between two pairs if they contain adjacent nodes (see \cref{lem:guardedPair-coloring}).

\begin{lemma}
\label{lem:balanced}
All the $deg+1$-list coloring instances produced by our main algorithm (Alg.\ 1) are balanced, except for the graph $H$.
The graph $H$ contains $O(n/\Delta)$ nodes.
\end{lemma}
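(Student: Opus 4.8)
The plan is to walk through the (constant-length) list of $(deg+1)$-list-coloring instances created by \Cref{alg:orderedPartition}, exactly as they are itemized step by step in \Cref{ssec:detailedSteps}, and to verify $c$-balancedness with a single universal constant $c$. Since every such instance is already a $(deg+1)$-instance, it suffices to show $|\Psi(v)|\ge\Delta_0/c$ for each vertex $v$, where $\Delta_0$ is the maximum degree of the instance graph. Every instance graph is an induced subgraph of $G$ --- so $\Delta_0\le\Delta$ --- except the two ``pair'' virtual graphs of Steps~7 and~8, whose degrees are controlled by \Cref{obs:outsideAndAntiDegree} (a pair is adjacent to at most as many other pairs as the combined outside degree of its two endpoints). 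Throughout, $\epsilon=1/172$, so that quantities such as $8\epsilon\Delta$ stay comfortably below $\Delta/2$.

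First I would dispose of the instances with $\Omega(\Delta)$-size lists. When a vertex $v$ of an almost-clique $C$ is colored, its only colored neighbors are: those colored in \slackgeneration\ (at most an $O(p_g)$-fraction of any neighborhood, by a Chernoff bound), those lying outside $C$ (at most $e(v)\le 4\epsilon\Delta$, by \Cref{obs:outsideAndAntiDegree}), and --- in the gray/white split of \Cref{lem:graytoneColorable} --- the complementary part of $C$, which for the gray layer is stalled and for the white layer is small (the escape/protector neighborhood covers almost all of $C$). Hence $|\Psi(v)|=(1-O(p_g)-O(\epsilon))\Delta=\Omega(\Delta)$ for sparse nodes (Step~4), escape nodes (Step~9), all nice-AC node-set instances (Step~7; lists are $\ge\Delta/3$ already by \Cref{lem:niceAC1,lem:niceAC5}), the nice-AC pair graph (lists $\ge\Delta/2$ by \Cref{lem:niceAC3} versus $\Delta_0\le 8\epsilon\Delta$ by \Cref{obs:outsideAndAntiDegree}), and the gray layers of ordinary, runaway and guarded ACs (lists $\Omega(\Delta)$ versus $\Delta_0\le(1+7\epsilon)\Delta$). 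In each case $\Delta_0=O(\Delta)=O(|\Psi(v)|)$, so the instance is $c$-balanced for a constant $c$.

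Second I would handle the ``white'' layers, where the list bounds recorded in \Cref{ssec:detailedSteps} are only $\Theta(\psi)$ (ordinary ACs) or $\Theta(\phi)$ (runaway and guarded ACs). The key point is that these instances also have maximum degree of the same order, provided the white layer stalls, inside each relevant AC, a \emph{canonical} set of exactly the prescribed size --- $\Theta(\psi)$ neighbors of the fixed outside neighbor for ordinary ACs of size $>\Delta-\psi$, $\Theta(\phi)$ neighbors of $e_C$ for runaway ACs, $\Theta(\phi)$ neighbors of $p_C$ for guarded ACs --- rather than all currently available such vertices. Within its AC the stalled set is then a clique of exactly that size, and since all the ACs involved have size $\ge\Delta-\psi$ (so outside degree $\le\psi$; ordinary ACs of size $\le\Delta-\psi$ carry $\Omega(\psi)$ a-priori-plus-generated slack on \emph{all} their vertices by \Cref{lem:ordinarySlack} and are therefore placed entirely in the gray layer), the white instance graph has maximum degree $\Theta(\psi)$ resp.\ $\Theta(\phi)$, matching the list-size lower bound up to an additive $O(\psi)$ cross-clique term. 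Choosing $c$ large enough to absorb that term gives $c$-balancedness.

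Finally, the exceptional instance is $H$, the virtual graph on the guarded-AC pairs $\{(u_C,p_C)\}_C$. Its lists have size $\ge\phi/2=\Theta(\Delta^{2/3})$ by \Cref{lem:guardedPair-coloring}, but a protector $p_C$ can have up to $\Delta-\phi=\Theta(\Delta)$ neighbors outside $C$, and these may be endpoints of many other pairs, so $H$ can have maximum degree $\Theta(\Delta)\gg\phi/2$; thus $H$ is genuinely not balanced, which is exactly why it is routed to a single node in \Cref{sec:CC}. For the size bound, $H$ has precisely one vertex per guarded AC, and by \Cref{lem:acd}(2) every AC has $\ge(1-\epsilon)\Delta$ vertices, hence there are at most $n/((1-\epsilon)\Delta)=O(n/\Delta)$ almost-cliques and $|V(H)|=O(n/\Delta)$. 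The delicate part, I expect, is the ordinary-AC white layer: without the two benign reorganizations above (no white layer for ordinary cliques of size $\le\Delta-\psi$, and stalling only a canonical $\Theta(\psi)$-set otherwise), that instance could simultaneously contain vertices of palette only $\Theta(\psi)$ and vertices of degree $\Theta(\epsilon\Delta)$, which would break balancedness --- so the real work is checking that these modifications remain compatible with \Cref{lem:slackGenerationProperties} and \Cref{lem:ordinarySlack}.
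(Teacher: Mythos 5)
Your proposal follows the same decomposition as the paper — $\Omega(\Delta)$-palette instances are trivially balanced, the white layers of ordinary/runaway/guarded ACs need the palette-vs-maximum-degree match at scale $\Theta(\psi)$ or $\Theta(\phi)$, and $H$ is the one exception with $O(n/\Delta)$ vertices by the one-pair-per-AC count. Where you depart from the paper is in making explicit something the paper glosses over: the white sets must be of \emph{controlled} size across all ACs for the instance to be balanced. The paper's proof for the ordinary white instance argues separately for small ACs ($|C|\le\Delta-\psi$, all nodes white, palette $\Omega(\Delta)$) and large ACs (palette $\Theta(\psi)$, degree $\Theta(\psi)$), and similarly asserts for runaway ACs that ``the nodes have degrees between $\phi/2-1$ and $\phi/2+\psi$'', but never confronts what happens when those sub-cases live in the \emph{same} list-coloring instance, or when an escape $e_C$ has $\gg\phi$ neighbors in $C$ (up to $(1-2\eps)\Delta$, by \Cref{lem:acd}(4)). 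You correctly observe that without either splitting the instance or capping the stall set at a canonical $\Theta(\psi)$- or $\Theta(\phi)$-sized subset, the instance mixes $\Theta(\psi)$-palette vertices with $\Theta(\Delta)$-degree vertices and balance fails. Your proposed fix — stall only a canonical set of the prescribed size, and route small ordinary ACs into an $\Omega(\Delta)$-palette instance — is the right repair and is easily seen to preserve the white/gray invariants (the un-stalled $e_C$- or $p_C$-neighbors remain gray since the AC is a clique).

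Two small inaccuracies, neither fatal: for the ordinary white set the canonical choice should be a $\Theta(\psi)$-subset of the unit-slack vertices guaranteed by \Cref{lem:ordinarySlack} (these arise from the $\Omega(\psi)$ distinct outside neighbors of \Cref{obs:vertexCoverMatching}), not ``$\Theta(\psi)$ neighbors of the fixed outside neighbor'' — a single outside neighbor of an ordinary AC has at most $\phi$ neighbors inside by non-specialness, but the unit-slack vertices are scattered across many outside-matched partners. And your bound $\Delta_0\le(1+7\eps)\Delta$ for the gray layers is looser than necessary (every vertex has $\deg_G\le\Delta$), though harmless. Overall you identified a genuine imprecision in the paper's statement of the argument and supplied the missing discipline.
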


\begin{proof} 
All but four cases have already been shown to involve palettes of size $\Omega(\Delta)$; see the accounting in the subsections of \cref{ssec:detailedSteps}.
These are balanced by definition.
We now address the remaining cases.

\emph{Runaway ACs (white instance):} These are the neighbors of the escape. Denote their number by $t$ and note that by \cref{lem:runawaylists}, $t \ge \phi/2$.
The outside degree of these nodes is at most $\psi$ (as difficult ACs have at least $\Delta-\psi$ nodes and no non-edges).
Thus, the nodes have degrees between $\phi/2-1$ and $\phi/2+\psi \le \phi$. Hence, the instance is balanced.

\emph{Guarded ACs (white instance):} This is the set $S$ of common neighbors of the pair $(u_C,p_C)$ for the guarded AC $C$.
By the definition of protectors, $|S| \ge \phi$.
By the same reasoning as for the runaway ACs, the degrees in $S$ are between $\phi-1$ and $2\phi$.

\emph{Guarded ACs (pairs):}
The set $P$ of pairs $(u_C,p_C)$ contains at most $2n/((1-\epsilon)\Delta) \le 3n/\Delta$ nodes, since there is only pair for each guarded AC, and there are at most $n/((1-\epsilon)\Delta)$ ACs.
Hence, $H$ has $O(n/\Delta)$ nodes.

\emph{Ordinary ACs (white instance):} 
If an AC has size at most $\Delta - \psi$, then all its nodes get slack and are white. As $\Omega(\Delta)$ nodes are uncolored, the instance is balanced.

In other ordinary ACs, there are $\Omega(\psi)$ white nodes (of unit slack) \cref{lem:ordinarySlack}. They have external degree at most $\psi$, since the clique has size at least $\Delta - \psi$.
Thus, their degrees and palettes are in the range $\Omega(\psi)$ to $\psi$, and thus the instance is balanced.
\hfill
\end{proof}


\begin{proof} \emph{(Proof of \cref{cor:CCmain}).}
We implement Alg.~1 in Congested Clique.
Slack generation takes 0 rounds and ACD construction $O(1)$ rounds via \cref{lem:acd}. 
By \cref{lem:balanced}, the $deg+1$-list coloring instances generated by the algorithm are all balanced, except for the graph $H$. 
The balanced instances can be solved via \cref{lemma:d1LC}.
The graph $H$ is on $n/\Delta$ nodes with degree at most $\Delta$, so it is of total size $O(n)$. It can then be sent to a single node (by Lenzen's transform) and solved there centrally.
\hfill
\end{proof}

\clearpage

\appendix
\addtocontents{toc}{\protect\setcounter{tocdepth}{0}}
\section{Additional Figure(s)}
\begin{figure}[h]
	\centering
	\includegraphics[width=0.7\textwidth, page=1]{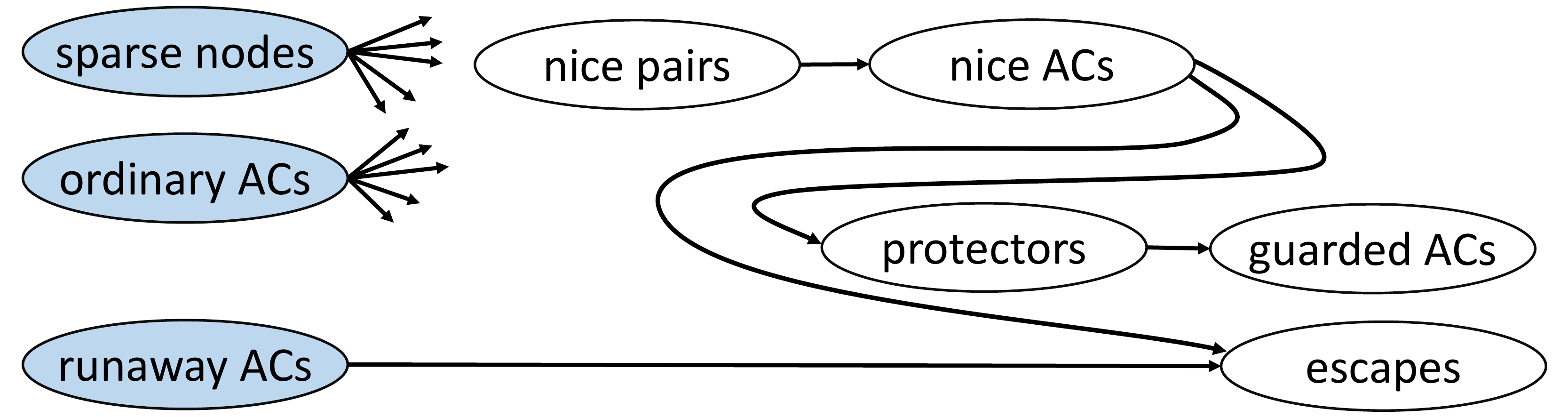}
	\caption{The order in which nodes need to be colored. Blue sets of nodes participate in \textsf{SlackGeneration}. An arrow indicates, that the respective set of nodes has to be colored before the latter one. Sparse nodes and ordinary cliques need to be colored before  everyone, except for nodes in runaway ACs.
	}
	\label[figure]{figure:ACorder}
\end{figure}

\section{Concentration Inequalities}
We will also use the following variant of Talagrand's inequality. A function $f(x_1, \dots, x_n)$ is called \emph{$c$-Lipschitz} iff changing any single $x_i$ can affect the value of
$f$ by at most $c$. Additionally, $f$ is called \emph{$r$-certifiable} iff whenever $f(x_1, \dots , x_n) \ge s$, there exists at
most $r s$ variables $x_{i_1}, \dots, x_{i_{rs}}$
so that knowing the values of these variables certifies $f\ge s$.
\begin{lemma}[Talagrand's Inequality \cite{molloy2013coloring}]
	\label{lem:talagrand}
	Let $X_1, \dots , X_n$ be $n$ independent random variables and $f(X_1, \dots , X_n)$ be a $c$-Lipschitz $r$-certifiable function. 
	For any $b \ge 1$,
	\[
	P (|f - \E[f]| > b+60c\sqrt{r\E[f]}) \le 4 \exp\left(-\frac{b^2}{8c^2r\E[f]}\right)\ .
	\]
\end{lemma}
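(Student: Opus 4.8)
The plan is to derive this from Talagrand's \emph{convex distance inequality}, which I would invoke as a black box: on a product probability space $\Omega=\Omega_1\times\cdots\times\Omega_n$ with product measure $\mu$, for any nonempty measurable $A\subseteq\Omega$ and any $t\ge 0$,
\[
\mu(A)\cdot\mu\left(\{x: d_T(x,A)\ge t\}\right)\le e^{-t^2/4},
\qquad
d_T(x,A)=\sup_{\|\alpha\|_2\le 1,\ \alpha\ge 0}\ \inf_{y\in A}\ \sum_{i:\,x_i\ne y_i}\alpha_i .
\]
Its own proof is an induction on the number of coordinates $n$ with a delicate interpolation, and this is the one genuinely hard ingredient; I would cite Talagrand / Molloy--Reed for it rather than reprove it.

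First I would transfer this bound to the function $f$. Fix $s>0$ and $t\ge 0$ and set $A=\{y: f(y)\le s-ct\sqrt{rs}\}$. I claim that $f(x)\ge s$ forces $d_T(x,A)\ge t$: since $f$ is $r$-certifiable, pick a certificate set $I$ with $|I|\le rs$ such that every point agreeing with $x$ on $I$ has $f\ge s$, and use the unit vector $\alpha$ supported on $I$ with equal entries $1/\sqrt{|I|}$. For $y\in A$, writing $k_y=|\{i\in I: x_i\ne y_i\}|$, the hybrid point $z$ with $z_i=x_i$ on $I$ and $z_i=y_i$ off $I$ satisfies $f(z)\ge s$ by the certificate and differs from $y$ in only $k_y$ coordinates, so the $c$-Lipschitz property gives $f(y)\ge s-ck_y$; combined with $y\in A$ this yields $k_y\ge t\sqrt{rs}\ge t\sqrt{|I|}$, hence $\sum_{i:\,x_i\ne y_i}\alpha_i=k_y/\sqrt{|I|}\ge t$. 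Taking the infimum over $y\in A$ proves the claim, so $\mu(\{f\ge s\})\le\mu(\{d_T(\cdot,A)\ge t\})$ and therefore $\mu(\{f\le s-ct\sqrt{rs}\})\cdot\mu(\{f\ge s\})\le e^{-t^2/4}$.

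Next I would specialize to a median $M$ of $f$ to obtain two-sided tails. Taking $s=M$ gives the lower tail $\mu(\{f\le M-ct\sqrt{rM}\})\le 2e^{-t^2/4}$; choosing $s$ with $s-ct\sqrt{rs}=M$ and solving the resulting quadratic (so that $s=M+O(ct\sqrt{rM}+c^2rt^2)$) gives the upper tail $\mu(\{f\ge M+O(ct\sqrt{rM})\})\le 2e^{-t^2/4}$ in the regime where the $\sqrt{\cdot}$ term dominates. Integrating these tails shows $|\E[f]-M|=O(c\sqrt{r\,\E[f]})$, so replacing $M$ by $\E[f]$ and absorbing all constants together with the quadratic correction into the additive shift $60c\sqrt{r\,\E[f]}$ (and loosening the exponent constant from $4$ to $8$) yields the stated bound $P\left(|f-\E[f]|>b+60c\sqrt{r\,\E[f]}\right)\le 4\exp\left(-b^2/(8c^2 r\,\E[f])\right)$.

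The main obstacle is the convex distance inequality itself, which I would not reprove; granting it, the only real conceptual step is the certificate-to-distance transfer in the second paragraph, and the remainder is routine median-versus-mean bookkeeping. The one point to watch is the range of validity: the inequality is informative only once the total deviation $b+60c\sqrt{r\,\E[f]}$ exceeds a constant multiple of the natural fluctuation scale $c\sqrt{r\,\E[f]}$, which is precisely why the additive shift (rather than a pure Gaussian tail around $\E[f]$) appears in the statement.
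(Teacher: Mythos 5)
The paper cites \cref{lem:talagrand} from Molloy and Reed's monograph and does not prove it, so there is no in-paper argument to compare against. Your proposal is, in substance, the standard derivation found in that source: invoke Talagrand's convex distance inequality on the product space, use $r$-certifiability to build a witness unit vector $\alpha$ supported on a certificate set $I$ with $|I|\le rs$, use $c$-Lipschitzness to turn ``$y$ agrees with $x$ off few coordinates in $I$'' into a lower bound on $f(y)$, deduce $\mu(\{f\le s-ct\sqrt{rs}\})\cdot\mu(\{f\ge s\})\le e^{-t^2/4}$, specialize to a median to get two-sided tails, and then convert median to mean by integrating the tails. That transfer argument is correct, and the certificate-to-distance step is stated cleanly.

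The one place your write-up falls short of a complete proof is the final bookkeeping: the passage from the median-centered bound with exponent constant $4$ to the mean-centered bound with the explicit additive term $60c\sqrt{r\E[f]}$ and exponent constant $8$ is handled only at the level of ``$O(\cdot)$'' and ``absorbing constants.'' To actually land on those particular constants one needs to (a) quantify $|M-\E[f]|$ by integrating the median tail (this is where a factor like $40c\sqrt{r\,\cdot}$ typically arises), (b) carefully solve the quadratic $s-ct\sqrt{rs}=M$ and bound the lower-order $c^2rt^2$ term in the regime where the statement is nontrivial, and (c) track how loosening $4\to 8$ in the exponent buys slack for the substitution $M\to\E[f]$. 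None of this is conceptually hard, but as written it is a sketch, not a proof, of the constant-explicit inequality. Since the paper treats the lemma as a black-box citation anyway, this level of detail is appropriate for the context; just be aware that the constants $60$ and $8$ are not free and would require the omitted elementary analysis.
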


\bibliographystyle{alpha}
\bibliography{references}
\end{document}